\documentclass[final]{article}

\usepackage{textcomp}
\usepackage{graphicx}
\usepackage{soul}
\usepackage{a4wide}
\usepackage{graphicx}
\usepackage{authblk}
\usepackage{amssymb}
\usepackage{epstopdf}
\usepackage[sans]{dsfont}
\usepackage[applemac]{inputenc}
\usepackage[english]{babel}
\usepackage{latexsym}
\usepackage{subfigure}
\usepackage{color}
\usepackage{float}
\frenchspacing
\usepackage{amsmath}
\usepackage{amsfonts}
\numberwithin{equation}{section}
\usepackage{amsthm}
\usepackage[numbers]{natbib}
\usepackage[bookmarks=true,colorlinks=true,linkcolor={blue},urlcolor={blue}, citecolor={blue},pdfstartview={XYZ null null 1.22}]{hyperref}%

\setlength{\oddsidemargin}{.5cm} \setlength{\evensidemargin}{.5cm}
\setlength{\textwidth}{15cm} \setlength{\textheight}{21.0cm}
\setlength{\topmargin}{0in}


\newtheorem{proposition}{Proposition}

\def\RR{\mathbb R}

\def\be{\begin{equation}}
\def\ee{\end{equation}}
\def\bea{\begin{eqnarray}}
\def\eea{\end{eqnarray}}

\begin{document}
\title{Kinetic models of collective decision-making \\ in the presence of equality bias}

\author{Lorenzo Pareschi\thanks{University of
Ferrara, Department of Mathematics and Computer Science, Via N. Machiavelli 35 44121 Ferrara, Italy ({\tt lorenzo.pareschi@unife.it}).},\qquad Pierluigi Vellucci\thanks{University of Rome 1, Department of Basic and Applied Sciences for Engineering, Via A. Scarpa 16 00161, Roma, Italy ({\tt pierluigi.vellucci@sbai.uniroma1.it}).},\qquad Mattia Zanella\thanks{University of
Ferrara, Department of Mathematics and Computer Science, Via N. Machiavelli 35 44121 Ferrara, Italy ({\tt mattia.zanella@unife.it}).}}

\maketitle

\begin{abstract}
We introduce and discuss kinetic models describing the influence of the competence in the evolution of decisions in a multi-agent system. The original exchange mechanism, which is based on the human tendency to compromise and change opinion through self-thinking, is here modified to include the role of the agents' competence. In particular, we take into account the agents' tendency to behave in the same way as if they were as good, or as bad, as their partner: the so-called \emph{equality bias}. This occurred in a situation where a wide gap separated the competence of group members. We discuss the main properties of the kinetic models and numerically investigate some examples of collective decision under the influence of the equality bias. The results confirm that the equality bias leads the group to suboptimal decisions.  
\end{abstract}

\tableofcontents

\section{Introduction}

%

The fallibility of human judgement is evident in our everyday life, especially regarding our self-evaluation ability. Several tests have been designed in cognitive psychology and clinical research in order to find an experimental evidence for this phenomenon, see \cite{EH,FL,H,HF} and the references therein, showing that subjects are in general overconfident about the correctness of their belief. This lack in \emph{metacognition}, i.e. the self-assessment of our own knowledge skills, goes hand in hand with the grade of competence of each subject.

The correlation between \emph{competence} and \emph{metacognitive skills} is somehow double and might be summarized in the following sentence: “the same knowledge that underlies the ability to produce correct judgement is also the knowledge that underlies the ability to recognize correct judgement” \cite{KD}. Here the authors found a systematic bias of the most incompetent agents on their metacognition than the most experts; behavior which is usually known as \emph{Dunning-Kruger effect}. In other words, incompetence, is not only follows by poor choices but also disable to recognize that these are wrong or improvable. Furthermore the overconfidence of the novices emerges together with the underconfidence of highly competent individuals which tend to negatively estimate their skills.

This coupled deviation from an objective self-evaluation of personal abilities has been recently investigated in \cite{MPNAS}, where authors asked how people deal with individual differences in competence, in the context of a collective perceptual decision-making task, developing a metric for estimating how participants weight their partner's opinion relative to their own. Empirical experiments, replicated across three slightly different countries as Denmark, Iran, and China, show how participants assigned nearly equal weights to each other's opinions regardless of the real differences in their competence. This \emph{equality bias}, whereby people behave as if they are as good or as bad as their partner, is particularly costly for a group when a competence gap separates its members.

Drawing inspiration from these recent results, we propose a multi-agent model which takes into account the influence of competence during the formation of a collective decision\cite{BT,PT1}. After the seminal models for wealth/opinion exchange for a multi-agent system introduced in \cite{CC,T} some recent works considered additional parameters to quantify the personal knowledge or conviction \cite{BT,DJT,DMPW,PT1} or constrained versions of these models \cite{AHP,APZa,HZ}. For example, individuals with high conviction are resistant to change opinion, and can play the role of leaders \cite{APZa,DMPW}. In \cite{LCCC} there exists a threshold conviction beyond of which  one of the two choices provided to the individuals prevails, spontaneously breaking the existing symmetry of the initial set-up.

More precisely, we introduce a binary exchange mechanism for opinion and competence deriving a kinetic equation of  Boltzmann-type  \cite{AHP,APZa,BT,PT1,PT2,T}. The binary collision terms for competence and opinion describe different processes: 
\begin{itemize}
\item the competence evolution depends on a social background in which individuals grow and on the possibility for less competent agents to learn from the more competent ones; 
\item the opinion dynamics is based on a competence based compromise process including an equality bias effect and change of opinion through self-thinking;  
\item agents are driven toward an a-priori correct choice in dependence on their competence grade;
\end{itemize}

%

In order to to derive a nonlinear equation of Boltzmann-type for the joint evolution of competence and opinion in the limit of a large number of interacting agents we resort to the principles of classical kinetic theory (we refer to the recent monograph \cite{PT2} for an introduction to the subject). Furthermore, to simplify the study of the asymptotic behavior of the model, we obtain a Fokker-Planck approximation of the dynamic in the so-called quasi-invariant scaling.

The rest of the paper is organized as follows, in Section \ref{sec:micro} we introduce the binary interaction model for competence and opinion. We discuss the competence-based interactions between agents formulating a definition of collective optimal decision which is coherent with the experimental setting of \cite{MPNAS}. Then the equality bias function is introduced acting as a modification of the effective competence into perceived competence. In Section \ref{sec:boltzmann} we derive the Boltzmann-type model and study the evolution of the moments under some specific assumptions. The Fokker-Planck approximation is then obtained in Section \ref{sec:FP}, and we derive the stationary marginal density of the competence variable. Finally in Section \ref{sec:numerics} we present several numerical experiments which show that the model is capable to describe correctly the decision making process based on agents competence and to include the equality bias effects. The latter, as expected, drive the system towards suboptimal decions.

\section{Modeling opinion and competence}\label{sec:micro}
In this section we discuss the modeling of opinion dynamics through binary exchanges, the analogous of dyadic interaction in the reference experimental literature \cite{BOLRRF,MPNAS}. The mathematical approach follows several recent works on alignment processes in socio-economical dynamics \cite{BT,DMPW,FHT,MT1,MT2,PT1,PT2,T}. 

\subsection{Evolution of competence}\label{sec:comp_evo}

It is evident that one of the main factors playing a role is the social background in which an individual grows and lives, then it is natural to assume that competence is, in part, inherited from the environment. Moreover, we clearly have the possibility to learn a specific competence during interactions with other, more competent, agents. 


Similarly to the works \cite{BT,PT1,PT2} we describe the evolution of competence of an individual in terms of a scalar parameter $x\in X$ where $X\subset\RR^+$.
 Let $z\in\RR^+$ be the degree of competence achieved from the background in each interaction; in what follows we will always suppose that $C(z)$ is a distribution with bounded mean
\be
\int_{\RR^+}C(z)dz=1,\qquad \int_{\RR^+}zC(z)dz=m_B.
\ee
We define the new amount of competence after a binary interaction of agents with competence $x$ and $x_*$ as follows
\begin{equation}\begin{cases}\label{eq:comp_bin}
x^{\prime}&=(1-\lambda(x))x+\lambda_C(x)x_*+\lambda_B(x) z+\kappa x\\
x_*^{\prime}&=(1-\lambda(x_*))x_*+\lambda_C(x_*)x+\lambda_B(x_*) z+\kappa x_*,
\end{cases}\end{equation}
where $\lambda(\cdot),\lambda_C(\cdot)$ and $\lambda_B(\cdot)$ quantify the amounts of competence lost by individual by the natural process of forgetfullness, the competence gained thanks to the interaction with other agents and the expertise gained from the background respectively, while $\kappa$ is a zero-mean random variable with finite second order moment $\sigma_{\kappa}^2$, taking into account the possible unpredictable changes of the competence process. A possible choice for $\lambda_C(x)$ is $\lambda_C(x)=\lambda_C \chi(x\ge \bar{x})$, where $\chi(\cdot)$ is the indicator function and $\bar{x}\in X$ a minimum level of competence required to the agents’ for increasing their own skills by interacting with the other agents.

With the dynamics \eqref{eq:comp_bin} we introduced a general process in which agents respectively loose and gain competence interacting with the other agents and with the background. It is reasonable then to assume that both the processes of acquisition and loss, which are weighted by the coefficients $\lambda,\lambda_C$ and $\lambda_B$, are bounded by zero. Thus if $\lambda\in[\lambda_{-},\lambda_{+}]$, with $\lambda_{-}> 0$ and $\lambda_{+}< 1$, and $\lambda_C(x),\lambda_B(x)\in[0,1]$ the random part may be chosen to satisfy $\kappa\ge -1+\lambda_+$. For example, $\kappa$ may be uniformly distributed in $[-1+\lambda_+,1-\lambda_+]$.



Let $g(x,t)$ be the density function of individuals with competence $x\in X\subset \RR^+$ at time $t\ge 0$. Resorting to the standard Boltzmann-type setting, we refer to \cite{PT2} for an extensive treatment, it is possible to describe in weak form the evolution of such density function as follows
\be
\dfrac{d}{dt}\int_{X} \psi(x)g(x,t)dx = \left< \int_{\RR^+\times X}(\psi(x’)-\psi(x))g(x,t)C(z)dxdz \right>,
\ee
where $x’$ is the post-interaction competence given in \eqref{eq:comp_bin}, the brackets $<\cdot>$ indicate the expectation with respect to the random variable $\kappa$ and $\psi(\cdot)$ is a test function. By imposing $\psi(x)=x$ we obtain an equation for the evolution of the evolution of the mean-competence $m_x(t)$ 
\be\begin{split}
\dfrac{d}{dt}m_x(t) = \dfrac{1}{2}\left(\int_{X^2}(\lambda_C(x)-\lambda(x))(x+x_*)g(x,t)g(x_*,t)dxdx_*+ \right.\\
\left. \int_{\RR^+\times X}\lambda_B(x) zg(x,t)C(z)dxdz\right),
\end{split}\ee
which, for $\lambda_C(x)=\lambda_C,\lambda_B(x)=\lambda_B,\lambda(x)=\lambda$, yields
\be
\dfrac{d}{dt}m_x(t) = -(\lambda-\lambda_C)m_x(t)+\lambda_Bm_B,
\ee
whose solution is given by
\be
m_x(t) = m_x(0)e^{-(\lambda-\lambda_C)t}+\dfrac{\lambda_B m_B}{\lambda-\lambda_C}(1-e^{-(\lambda-\lambda_C)t}).
\ee
Therefore if $\lambda> \lambda_C$ we obtain the asymptotic exponential convergence of the mean competence $m_x$ toward $\lambda_B m_B/(\lambda-\lambda_C)$. Note that, if we assume $\lambda=\lambda_B+\lambda_C$ we have that the average competence of the system tends to the mean competence induced by the variable $z\in\RR^+$. Finally we remark that, compared to previous models where the notion of knowledge/convinction were introduced \cite{BT,PT1}, here we have a fully binary dynamic which includes also the possibility to increase the agent's competence as a result of the interaction with the other agents.  

\subsection{The dynamics of competence based decisions}
\label{sec:prev}
Let us now consider a system of binary interacting agents, each of them endowed with two quantities $(w,x)$ representing its opinion concerning a certain decision and competence respectively. Let $I=[-1,1]$ be the set of possible opinions for each interaction where the two extremal points $\pm 1$ represent the two alternative decisions. 

Two agents identified by the couples $(x,w)$ and $(x_*,w_*)$ modify their opinion after interaction according to the following rules
\begin{equation}\begin{cases}\label{eq:binary}
w’ = w-\alpha_S S(x)(w-w_d)-\alpha_P P(x,x_*;w,w_*)(w-w_*)+\eta D(x,w) \\
w_*’ = w_*-\alpha_S S(x_*)(w_*-w_d)- \alpha_P P(x_*,x;w_*,w)(w_*-w)+\eta_* D(x_*,w_*),
\end{cases}\end{equation}
where $w,w_*\in I$ denote the pre-interaction opinions and $w’,w_*’$ the opinions after the exchange of information between the two agents. The positive function $S(\cdot)$ drives the agent toward the correct  choice $w_d\in \{-1,1\}$ with a rate dependent on its competence level, representing an individual decision making strength. For example, a possible choice for the function $S(\cdot)$ is $S(x)=\textrm{const.}>0$ if $x>\bar{x}$ and $S(x)=0$ elsewhere. In this case the agent needs to achieve a competence threshold $\bar{x}$ in order to perceive the correct choice. Observe that in \eqref{eq:binary} we introduced an interaction function $ P(\cdot,\cdot;\cdot,\cdot)$ depending both on the competence and on the opinion of the pair of agents. The nonnegative parameters $\alpha_S,\alpha_P$ characterize the drift toward the target opinion $w_d\in I$ and the interaction rate, respectively. The random variables $\eta,\eta_*$ are centered and with the same distribution $\Theta$ with finite variance $\sigma^2$.
The function $D(\cdot,\cdot)\ge 0$ represents the local relevance of the diffusion for a given opinion and competence, whereas the evolution of the competences $x,x_*$ are given by \eqref{eq:comp_bin}. 

In absence of diffusion $\eta,\eta_*=0$ and for a constant drift $S(\cdot)=S\le 1$ we have
\be\begin{split}
|w’-w_*’| & \le |1-\alpha_S S-\alpha_P (P(x,x_*;w,w_*)+P(x_*,x;w_*,w))| |w-w_*|.
\end{split}\ee
Then the post-exchange decisions are still in the reference interval $[-1,1]$ if we assume $0\le P(\cdot,\cdot;\cdot,\cdot)\le 1$ and $\alpha_S +2\alpha_P \le 2$. We can state the following result which identifies the condition on the noise term in order to ensure that the post-interaction opinions do not leave the reference interval. 
\begin{proposition}
We assume $0<P(x,x_*;w,w_*)\le 1,~ S(\cdot)=S\le 1$ and 
\[
|\eta|<(1-\alpha_P+\alpha_S)d, \qquad |\eta_*|<(1-\alpha_P+\alpha_S)d,\qquad \alpha_P \leq 1,
\]
where 
\[
d = \min_{(x,w)\in X\times I}\left\{ \dfrac{(1-w)}{D(x,w)}, D(x,w)\ne 0  \right\}.
\]
Then the binary interaction rule \eqref{eq:binary} preserves the bounds being $w’,w_*’\in [-1,1]$.
\end{proposition}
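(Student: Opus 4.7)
The plan is to exhibit $w'$ as a perturbation of a convex combination of points already in $[-1,1]$, and then to use the hypothesis on $|\eta|$ to control the size of the perturbation.

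First I would regroup the first line of \eqref{eq:binary} in the form
\begin{equation*}
w' = (1-\alpha_S S-\alpha_P P)\,w + \alpha_S S\,w_d + \alpha_P P\,w_* + \eta\,D(x,w),
\end{equation*}
and observe that under the assumptions $S\le 1$, $0<P\le 1$, $\alpha_P\le 1$ together with the scaling constraint $\alpha_S+2\alpha_P\le 2$ already in force in Section \ref{sec:prev}, the three coefficients in front of $w,w_d,w_*$ are nonnegative and sum to one. Thus, in the absence of diffusion, $w'$ is a genuine convex combination of points of $[-1,1]$ and itself lies in $[-1,1]$; the whole task is to measure by how much the noise term $\eta D(x,w)$ may push $w'$ past $\pm 1$.

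For the upper bound $w'\le 1$ I would subtract the previous identity from $1$:
\begin{equation*}
1-w' = (1-\alpha_S S-\alpha_P P)(1-w) + \alpha_S S(1-w_d) + \alpha_P P(1-w_*) - \eta\,D(x,w).
\end{equation*}
Each of the three terms preceding the noise contribution is nonnegative; in the worst case $w_d=w_*=1$ only the first survives. I would then use the definition of $d$ in the form $(1-w)\ge d\,D(x,w)$ and rearrange the coefficient $1-\alpha_S S-\alpha_P P$ with the help of the bounds $S,P\le 1$ and $\alpha_P\le 1$ so as to produce
\begin{equation*}
1-w' \ge \bigl[(1-\alpha_P+\alpha_S)\,d - \eta\bigr]\,D(x,w),
\end{equation*}
which is nonnegative by hypothesis. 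The symmetric bound $w'\ge -1$ is obtained by the same computation applied to $1+w'$, using $w_d,w_*,w\ge -1$ and interpreting $d$ as the infimum with respect to the closer endpoint. The identical argument then applies to $w_*'$ with $\eta_*$ and $D(x_*,w_*)$ in place of $\eta$ and $D(x,w)$, completing the proof.

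The step I expect to be most delicate is extracting precisely the constant $(1-\alpha_P+\alpha_S)$ that appears in the hypothesis: a naive triangle-type estimate from $1-\alpha_S S-\alpha_P P$ would give $(1-\alpha_S-\alpha_P)$, so some sign accounting is needed in the treatment of the $S$-drift, using the fact that $-\alpha_S S(w-w_d)$ is really a relaxation toward the interior point $w_d\in[-1,1]$ rather than a destabilizing forcing. This bookkeeping is what produces the $+\alpha_S$ rather than $-\alpha_S$ in the final bound and is where essentially all the substance of the proof lives; the rest is just convex-combination algebra and the definition of $d$.
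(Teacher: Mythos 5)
Your overall strategy---rewrite $w'$ as a convex combination of $w$, $w_d$, $w_*$ plus a noise perturbation, and control the perturbation through the constant $d$---is exactly the standard argument of the references \cite{AHP,APZa,DMPW} to which the paper defers (the paper itself omits all details), so on the level of approach you are aligned with what the authors intend. However, there are two concrete gaps. First, the nonnegativity of the coefficient $1-\alpha_S S-\alpha_P P$ does not follow from the constraint $\alpha_S+2\alpha_P\le 2$ invoked in Section \ref{sec:prev}: that inequality allows $\alpha_S S+\alpha_P P$ to exceed $1$ (take $S=P=1$, $\alpha_S=1/2$, $\alpha_P=3/4$), in which case $w'$ is not a convex combination and already the deterministic part can leave $[-1,1]$. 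What is actually needed, and what should be stated, is $\alpha_S S+\alpha_P P\le 1$.

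Second, and more seriously, the ``sign accounting'' you defer to in the last paragraph cannot produce the constant $(1-\alpha_P+\alpha_S)$: your justification rests on $w_d$ being an \emph{interior} point of $[-1,1]$, but the paper specifies $w_d\in\{-1,1\}$, so the $S$-drift relaxes $w$ toward an endpoint. In the worst case for the upper bound, $w_d=w_*=1$ and $S=P=1$, your own identity reduces to
\begin{equation*}
1-w' = (1-\alpha_S-\alpha_P)(1-w)-\eta\,D(x,w),
\end{equation*}
so the best constant this route yields is $(1-\alpha_P-\alpha_S)d$, not $(1-\alpha_P+\alpha_S)d$; the displayed inequality $1-w'\ge[(1-\alpha_P+\alpha_S)d-\eta]D(x,w)$ is simply not a consequence of the preceding lines whenever $\alpha_S>0$. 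Either the drift must be handled under an additional structural assumption, or the constant in the statement must carry a minus sign on $\alpha_S$ (which is what the analogous propositions in the cited works give); as written, your final step is the one place where the proof genuinely fails rather than merely being delicate. A minor further point: $d$ as defined involves only $(1-w)/D$, so it controls only the bound $w'\le 1$; for $w'\ge -1$ one needs the corresponding minimum of $(1+w)/D$, which you correctly note but which is not covered by the stated hypothesis.
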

\begin{proof}
The proof is analogous to those in \cite{AHP,APZa,DMPW}, therefore we omit the details.
\end{proof}

The introduction of a drift term towards the correct decision in equation \eqref{eq:binary} follows from the set-up of the two-alternative forced choice tasks (TAFC) addressed in \cite{BBMHC,PSL}. The (TAFC) task is a classic behavioral experiment: at each trial agents are called to recognize a noisy visual stimulus choosing between two alternatives. The individual decision making involves in its mathematical description a drift term representing the fact that increasing experience/competence drives the agent towards the correct decision. 

The compromise function $0\le P(\cdot,\cdot;\cdot,\cdot)\le 1$ depends on both the particles’ opinion and competence. 
As an example, a possible structure for the interaction function is given by the following
\begin{equation}\label{eq:P}
 P(w,w_*;x,x_*)=Q(w,w_*)R(x,x_*),
\end{equation}
where $0\le Q(\cdot,\cdot)\le 1$ represents the positive compromise propensity and $0\le R(\cdot,\cdot)\le 1$ is a function taking into account the competence of two interacting agents. 

Empirical experiments have been done in order to measure the impact of competence on collective decision-making tasks \cite{BOLRRF,KD,MPNAS}. Here the participants are organized in dyads which are called to make a decision about a visual stimulus. Among others two opposite models taking into account the influence of the competence, or \emph{sensitivity} in the psychology literature, might be considered. The first proposes that nothing except the opinions is communicated between individuals, and in case of disagreement we randomly select an opinion which coincides with the decision of the studied dyad. We will refer to this model as coin-flip model (CF). The second model takes strongly into account the competence of individuals: the decisions are communicated and in case of disagreement the opinion of the most competent prevails and then coincides with the decision of the dyad. This second model is called maximum competence model (MC). 

The aforementioned models can be described in our mathematical setting by considering a compromise propensity $Q(w,w_*)=1$. The CF model is a model in which at each interaction is associated a Bernoulli random variable with typical parameter $p=1/2$. In other words, in the CF case, the competence does not play any role in the decision process. We will consider an averaged version of this model (aCF), which corresponds to $R_{aCF}(\cdot,\cdot) = 1/2$, . For the MC model, in our setting corresponds to the Heaviside-type function
\begin{equation}\label{eq:R_Heaviside}
R_{MC}(x,x_*)=
\begin{cases}
1 & x<x_* \\
1/2 & x=x_* \\
0 & x>x_*.
\end{cases}
\end{equation}
In terms of the competence gap $x-x_*$ we can approximate the discontinuous function \eqref{eq:R_Heaviside} of the MC model with a smoothed continuous version (cMC) 
\begin{equation}\label{eq:R}
R_{cMC}(x,x_*)=\dfrac{1}{1+e^{c(x-x_*)}},
\end{equation}
with $c\gg 1$. We depict in Figure \ref{fig:R}, left plot, the behavior of the function $R(\cdot,\cdot)$ for different choices of the constant $c>0$. We can observe how in the half-plane $x>x_*$ the most competent agent is scarcely influenced by the less skilled one, while in the half-plane $x< x_*$ the situation is inverted and we see how an agent with less competence is influenced by the more competent one. 
\begin{figure}[tb]
\centering
\includegraphics[scale=0.37]{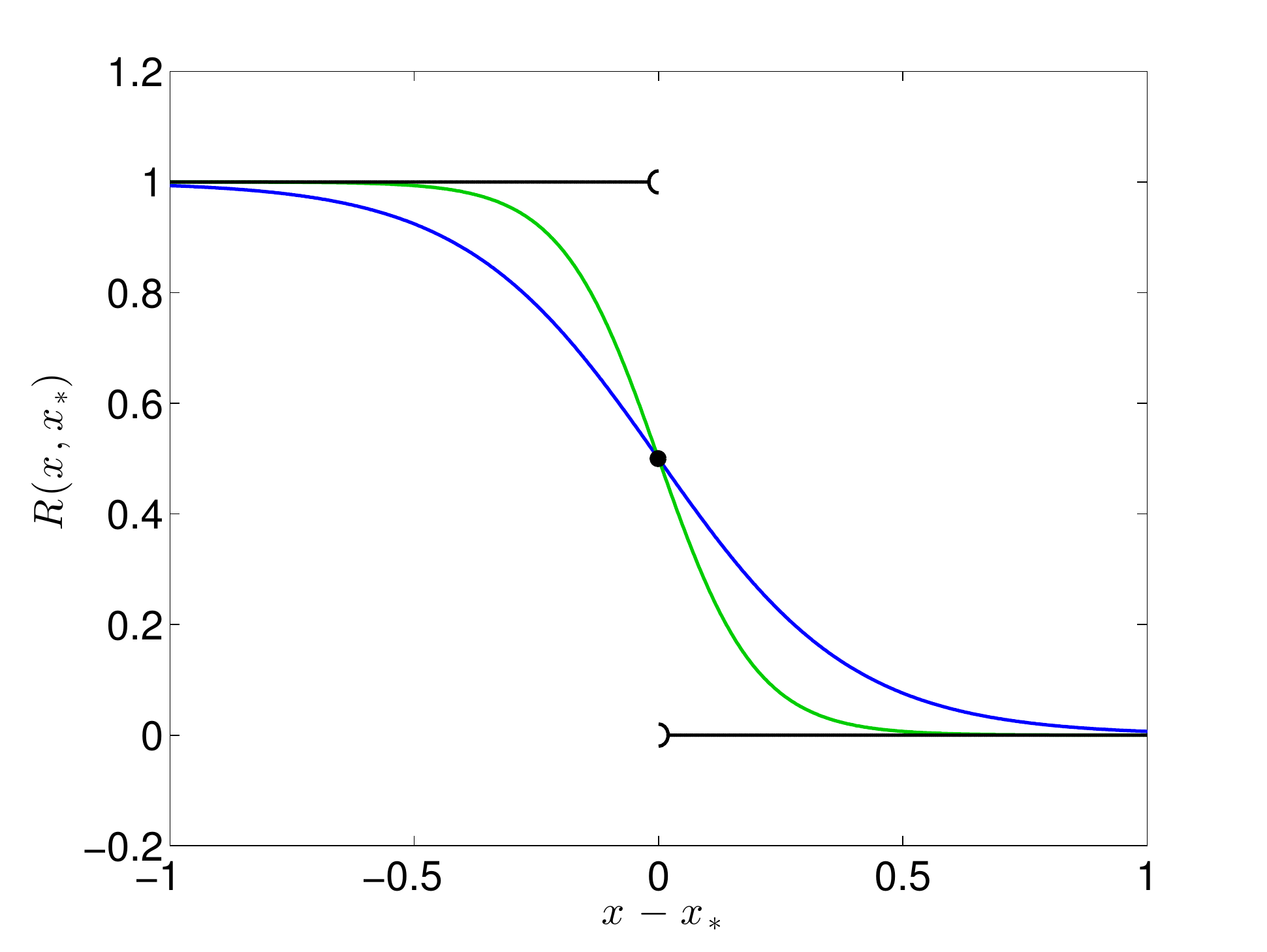}
\includegraphics[scale=0.37]{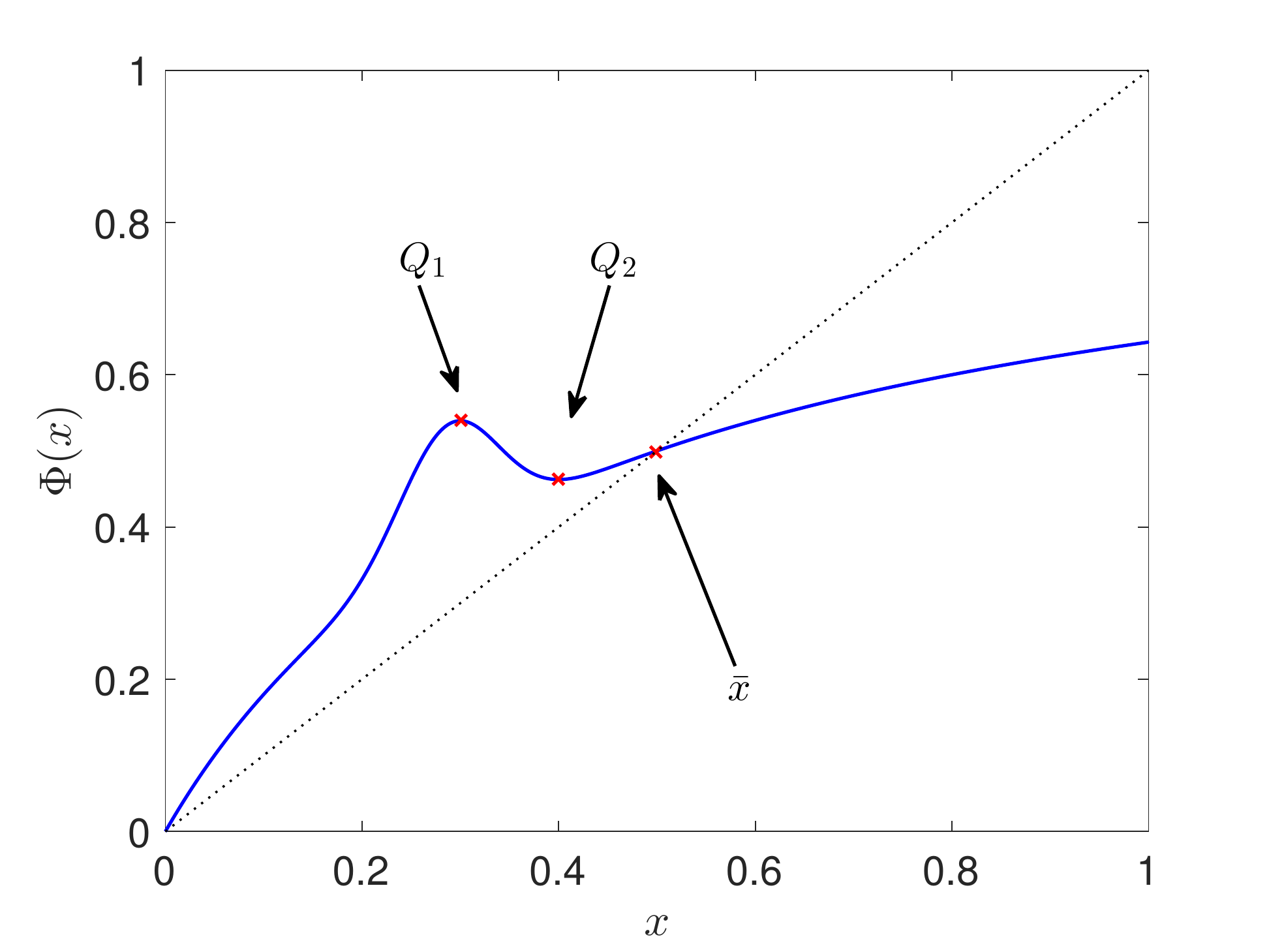}
\caption{Left: Heaviside-type competence-based interaction function $R_{MC}(x,x_*)$ and its continuous version defined in \eqref{eq:R} for $c=5$ and $c=10$.  Right: equality bias function $\Phi(x)$ of \eqref{eq:phi_psyco}, the choice of coefficients $a_1,a_2,a_3,a_4$ has been done such that $x_{Q_1}=0.3,x_{Q_2}=0.4$ and $\bar{x}=0.5$.}
\label{fig:R}
\end{figure}

\subsection{Dynamics of decisions under equality bias}\label{sec:decision_EB}
On the basis of the decision process built in section \ref{sec:prev} we modify the microscopic model \eqref{eq:binary}--\eqref{eq:P} in order to mathematically describe the phenomenon called \emph{equality bias} in collective decision--making communities. Our set-up is essentially inspired by some recent works of the experimental psychology literature \cite{BOLRRF,MPNAS} and the references therein. Their findings are consistent with the well-known cognitive bias called Dunning-Kruger effect regarding a misjudgment of personal competence of unskilled people, which overestimate their own ability. At the same time the most skilled individuals tends to underestimate their competence, implicitly believing that their knowledge is accessible to everyone \cite{KD}.

Let us consider an agents with competence $x\in X$. We introduce the equality bias function
\be\begin{split}
\Phi:  X&\mapsto X \\
  x& \rightarrow \Phi(x)
\end{split}\ee
 which measures the agent’s perceived level of competence. The choice of the equality bias function $\Phi(\cdot)$ is related to the subjective self-confidence in making decisions and experimental results show it is strictly dependent on the degree of competence of the interacting agents. 

Following the mathematical setting introduced in section \ref{sec:prev} we modify the function $P(\cdot,\cdot;\cdot,\cdot)$ as
\begin{equation}
P(\Phi(x),\Phi(x_*);w,w_*).
\end{equation}
In particular in the case \eqref{eq:P} the competence based interaction function modifies in $R(\Phi(x),\Phi(x_*))$ whose shape depends now on the perceived competence. Therefore the equality bias emerges as a result of the binary interactions of a multi-agent system with lack  of metacognition. In agreement with the experimental measurements of \cite{KD}, and in order to exemplify the action of the equality bias function, we depict in Figure \ref{fig:R} the function
\be\label{eq:phi_psyco}
\Phi(x) = x\left(\dfrac{a_1}{x+a_2}+a_3 \exp{\{-h(x-a_4)^2\}}\right),
\ee
with coefficients $a_1,a_2,a_3,a_4>0$. In the presented equality bias function we fixed the value ${x_d}\in X$ measuring the average competence of a population, the agents with competence $x>{x_d}$ may be identified with the most competent. Moreover, as experimentally suggested, it exists a local maximum $Q_1$ in $x=x_{Q_1}$ for the overestimation of the competence of unskilled agents, together with the local minimum $Q_2$ in $x=x_{Q_2}$. The asymptotic perceived competence depends on $a_1$, whose speed is weighted by $a_2$. Therefore we choose the coefficients $a_2,a_3,a_4$ in such a way that $\Phi( x_d)={x_d}$ and for $x=x_{Q_1},x=x_{Q_2}$ we find a local maximum and minimum respectively. The coefficient $h>0$ is a scaling parameter. In the rest of the paper we will refer to the dynamics of decisions under the action of the equality bias with (EB).

\section{A Boltzmann model for opinion and competence}\label{sec:boltzmann}
In this section we derive a kinetic model for opinion and competence reflecting the behavior introduced in the binary interaction model for opinion and competence in \eqref{eq:comp_bin}--\eqref{eq:binary}. 

Let $f=f(x,w,t)$ be the density of individuals with competence $x\in X\subset\RR^+$ and opinion $w\in I=[-1,1]$ at time $t\ge 0$. We derive the kinetic description for the evolution of the density function $f=f(x,w,t)$ through classic methods of kinetic theory \cite{PT2}. Let $g(x,t)$ be the marginal density of the competence competence variable $x\in X$
\be
g(x,t) = \int_I f(x,w,t)dw,
\ee
and $\int_{X} g(x,t) = 1$ for each $t\ge 0$.

The evolution in time of the introduced density function is then given by the integro-differential equation of the Boltzmann-type
\be\label{eq:Bnonlin}
\dfrac{\partial }{\partial t}f(x,w,t)=Q(f,f)(x,w,t),
\ee
where $f(x,v,0)=f_0(x,w)$ and $Q(\cdot,\cdot)$ is defined as follows
\be\label{eq:Q}
Q(f,f)(x,w,t)= \int_{\mathcal{B}^2}\int_{X\times I}\left(’B \dfrac{1}{J}f(’x,’w,t)f(’x_*,’w_*,t) - Bf(x,w,t)f(x_*,w_*,t) \right)dw_*dx_*d\eta d\eta_*,
\ee
indicating with $(’w,’w_*)$ the pre-interaction opinions given by $(w,w_*)$ after the interaction and $(’x,’x_*)$ the pre-interaction competences. The term $J=J(x,w;x_*,w_*)$ denotes as usual the Jacobian of the transformations $(w,w_*)\rightarrow (w’,w_*’),(x,x_*)\rightarrow (x’,x_*’)$ via \eqref{eq:comp_bin} and \eqref{eq:binary}. The kernels $’B,B$ characterize the binary interaction and in the following will be considered of the form
\be
B_{(w,w_*)\rightarrow (w’,w_*’)}^{(x,x_*)\rightarrow (x’,x_*’)}=\beta \Theta(\eta)\Theta(\eta_*)C(z)\chi(|w’|\le 1)\chi(|w_*’|\le 1)\chi(x’\in X)\chi(x_*’\in X),
\ee
where $\beta>0$ is a scaling constant.

The presence of the Jacobian in the definition of the binary operator \eqref{eq:Q} may be avoided by considering its weak formulation. Let us consider a test function $\psi(x,w)$, we get
\be\begin{split}
&\int_{X\times I} Q(f,f)(x,w,t)\psi(x,w)dwdx \\
&\qquad= \beta \Bigg<\int_{\RR^+} \int_{X^2\times I^2} (\psi(x’,w’)-\psi(x,w))f(x,w,t)f(x_*,w_*,t)C(z)dwdw_*dxdx_*dz \Bigg> \\
&\qquad = \dfrac{\beta}{2}\Bigg<\int_{\RR^+} \int_{X^2\times I^2} (\psi(x’,w’)+\psi(x_*’,w_*’)-\psi(x,w)-\psi(x_*,w_*))\\
&\qquad\qquad f(x,w,t)f(x_*,w_*,t)C(z)dwdw_*dxdx_*dz \Bigg>
\end{split}\ee
where the brackets $<\cdot>$ denotes the expectation with respect to the random variables $\eta,\eta_*$. The weak formulation of the initial value problem \eqref{eq:Bnonlin} for the initial density $f_0(x,w)$ is given for each $t\ge 0$ by
\be\begin{split}\label{eq:weak}
&\dfrac{\partial}{\partial t}\int_{X\times I}\psi(x,w)f(x,w,t)dwdx \\
&\qquad =   \dfrac{\beta}{2}\Bigg<\int_{\RR^+} \int_{X^2\times I^2} (\psi(x’,w’)+\psi(x_*’,w_*’)-\psi(x,w)-\psi(x_*,w_*))	\\	
& \qquad \qquad f(x,w,t)f(x_*,w_*,t)C(z)dwdw_*dxdx_*dz \Bigg >.
\end{split}\ee
From the weak formulation in \eqref{eq:weak} we can derive the evolution of the macroscopic quantities like the moments for the opinion which may be obtained choosing as a test function $\psi(x,w)=\psi(w)=1,w,w^2$. 

\subsection{Collective decision and variance}
It is straightforward to observe that setting $\psi=1$ we obtain the conservation of the total number of agents. The mean opinion of the overall agents is defined as
\be\label{eq:u_def}
U(t) = \int_{X\times I} w f(x,w,t)dwdx,
\ee
which represents the \emph{collective decision} of the system at time $t$, see \cite{CG,G97,GZ}. 
The evolution of the collective decision is derived as marginal quantity from equation \eqref{eq:weak} for $\psi(x,w)=w$
\be\begin{split}\label{eq:u_weak}
&\dfrac{\partial }{\partial t}\int_{X\times I} w f(x,w,t)dwdx= \alpha_S\beta \int_{X\times I}S(x)(w_d-w)f(x,w,t)dwdx \\
&\qquad\qquad+ \dfrac{\alpha_P\beta}{2} \int_{X^2\times I^2} \left(P(x,x_*;w,w_*)-P(x_*,x;w_*,w)\right) (w-w_*) \\
&\qquad\qquad f(x,w,t)f(x_*,w_*,t)dwdw_*dxdx_*.
\end{split}\ee
If the interaction function $P(\cdot,\cdot;\cdot,\cdot)$ is a symmetric function and $S(x)=s\in[0,1]$ 
equation \eqref{eq:u_weak} reduces to
\be
\dfrac{d}{dt} U(t) = \alpha_S \beta s\left(w_d-U(t)\right),
\ee
whose solution at each $t\ge 0$ is $U(t)= w_d+(U_0-w_d)\exp\{-\alpha_S \beta s t\}$, with $U_0=U(0)$ the initial collective decision. Therefore, in the limit $t\rightarrow +\infty$, the asymptotic collective decision converges exponentially toward $w_d\in\{-1,1\}$, i.e. $U_{\infty}=w_d$. 

In the case of the aCF model $P\equiv 1/2$ and $S(\cdot)\equiv 0$, the mean opinion of the overall system is then conserved, which implies $U(t)=U_0$ for all $t\ge 0$. Further if we consider a MC model, with interactions between agents based only on the competence variable and for $S\equiv 0$, from \eqref{eq:u_weak} we have
\be\begin{split}
\dfrac{d}{dt}\int_{X\times I} wf(x,w,t)dxdw =& {\alpha_P\beta} \int_{X^2\times I^2} R_{MC}(x,x_*)(w_*-w)f(x,w,t)f(x_*,w_*,t)dwdw_*dxdx_* \\
&- \dfrac{\alpha_P\beta}{2}\int_{X^2\times I^2}(w_*-w)f(x,w,t)f(x_*,w_*,t)dwdw_*dxdx_*,
\end{split}\ee
being $R_{MC}(x,x_*)-R_{MC}(x_*,x)=2R_{MC}(x,x_*)-1$ with $R_{MC}\equiv 0$ in the half space $x>x_*$, which leads to 
\be\begin{split}\label{eq:u_MC}
\dfrac{d}{dt}\int_{X} u(x,t)dx =& \alpha_P\beta \int_X  \left[ \dfrac{U^+(x,t)-U^-(x,t)}{2}\right]\rho(x,t) dx \\
&+ \alpha_P\beta \int_X  \left[\dfrac{\rho^-(x,t)-\rho^+(x,t)}{2}\right] u(x,t) dx\\
\end{split}\ee
where $u(x,t) = \int_I wf(x,w,t)dw$ is the mean opinion relative to the competence level $x\in X$ and
\be
U^+(x,t) =\int_{x< x_*}  \int_I w_*f(x_*,w_*,t)dw_*dx_*; \qquad \rho^+(x,t) = \int_{x<x_*}\int_{I}f(x_*,w_*,t)dw_*dx_*.
\ee
In particular $U^+(x,t)$ and $\rho^+(x,t)$ indicate the average opinion and the numerical density of agents with competence greater than $x\in X$. We define also $U^-(x,t) = U(t)-U^+(x,t)$ and $\rho^-(x,t) = 1-\rho^+(x,t)$ for each $x\in X$ and $t\ge 0$.
Therefore the variations of the mean opinion of agents with fixed competence $u(x,t)$ follow the choice of the most competent agents. 


The second order moment for the opinion of the overall system is defined as
\be
E(t) = \int_{X\times I} w^2 f(x,w,t)dwdx
\ee
and its evolution may be obtained from  \eqref{eq:weak} with $\psi(x,w)=w^2$
\be\begin{split}\label{eq:evo_variance}
&\dfrac{d}{dt} E(t) = \alpha_S^2\beta \int_{X\times I} S^2(x)(w-w_d)^2f(x,w,t)dwdx \\
& + \dfrac{\alpha_P^2\beta}{2}\int_{X^2\times I^2}[P^2(x,x_*;w,w_*)+P^2(x_*,x;w_*,w)](w-w_*)^2f(x,w,t)f(x_*,w_*,t)dwdw_*dxdx_* \\
& -\alpha_P\beta \int_{X^2\times I^2}(w-w_*)[wP(x,x_*;w,w_*)-w_*P(x_*,x;w_*,w)]f(x,w,t)f(x_*,w_*,t)dwdw_*dxdx_* \\
& -2\alpha_S\beta \int_{X\times I}S(x)w(w-w_d)f(x,w,,t)dwdx \\
& + \alpha_S\alpha_P\beta\int_{X^2\times I^2}(w-w_*)[S(x)P(x,x_*;w,w_*)(w-w_d)-S(x_*)P(x_*,x;w_*,w)(w_*-w_d)]\\
& \qquad\qquad \qquad\qquad f(x,w,t)f(x_*,w_*,t)dwdw_*dxdx_*+\alpha_P\beta\sigma^2 \int_{X\times I} D^2(x,w)f(x,w,t)dwdx.
\end{split}\ee
In the simplified situation $P(x,x_*;w,w_*)=p\in[0,1]$, $S(x)=s\in [0,1]$, $\alpha_S=\alpha_P=\alpha$ and in absence of diffusion, the equation \eqref{eq:evo_variance} assumes the following form
\be\begin{split}
& \dfrac{d}{dt}E(t) = \alpha^2\beta s^2\int_{X\times I} (w-w_d)^2f(x,w,t)dwdx-2\alpha\beta s\int_{X\times I}(w^2-ww_d)f(x,w,t)dwdx \\
&\qquad +\alpha\beta p(\alpha p -1+\alpha s)\int_{X^2\times I^2}(w-w_*)^2f(x,w,t)f(x_*,w_*,t)dw_*dx_*dwdx
%
\end{split}\ee
Since under the same conditions $U(t)$ converges for large time toward the correct choice $w_d\in\{-1,1\}$, we obtain that $E(t)$ converges exponentially to $w_d^2$ if
\be
\alpha \le\min\Big\{1,\dfrac{2(s+p)}{(s+p)^2+p^2}\Big\}.
\ee
Therefore the quantity
\be
\int_{X\times I}(w-w_d)^2f(x,w,t)dwdx,
\ee
under the above assumptions, converges toward zero for large times and the steady state solution is the Dirac delta $\delta(w-w_d)$ centered in the right decision.

\section{Fokker-Planck approximation}\label{sec:FP}
In order to obtain analytic results on the large-time behavior from Boltzmann-type models a classical mathematical tool is given by the derivation of approximated Fokker-Planck models through scaling techniques \cite{AHP,APZa,APZc,BT,FPTT,PT1,PT2}. In what follows we apply this approach, also known as quasi-invariant limit \cite{T, PT2}, to the model derived in the latter section.

We introduce a scaling parameter $\epsilon>0$ and the following scaled quantities
\be\begin{split}\label{eq:scaling_x}
\lambda = \epsilon \lambda,\qquad \lambda_B=\epsilon\lambda_B, \qquad \lambda_C=\epsilon\lambda_C,\qquad \sigma_{\kappa}=\sqrt{\epsilon}\sigma_{\kappa},
\end{split}\ee
where we assumed $\lambda_B(x)=\lambda_B>0,\lambda_C(x)=\lambda_C>0$ and $\lambda(x)=\lambda>0$. This corresponds to the situation where each interaction produces a small variation of the competence. The same strategy may be applied to the binary opinion model rescaling the interaction frequency $\beta$, the interaction propensity $\alpha$ and the diffusion variance $\sigma^2$ as follows
\be\label{eq:scaling_w}
\alpha_S=\epsilon\alpha_S,\qquad \alpha_P=\epsilon\alpha_P, \qquad \beta=\dfrac{1}{\epsilon},\qquad \sigma=\sqrt{\epsilon}\sigma.
\ee

The scaled equation \eqref{eq:weak} reads
\be\begin{split}\label{eq:scaled}
&\dfrac{\partial}{\partial t}\int_{X\times I}\psi(x,w)f(x,w,t)dwdx = \\
& \qquad \dfrac{1}{\epsilon}\Bigg< \int_{\RR^+}\int_{X^2\times I^2}(\psi(x’,w’)-\psi(x,w))f(x,w,t)f(x_*,w_*,t)C(z)dwdw_*dxdx_*dz\Bigg>.
\end{split}\ee
Under the assumptions on the random variables involved in the binary exchanges $\kappa,\eta,\eta_*$ we define the following mean quantities
\be \begin{split}
<x’-x> &= -\lambda x+\lambda_C x_*+\lambda_B z = G(x,x_*,z), \\
<w’-w> &=-\alpha_S S(x)(w-w_d) -\alpha_P P(x,x_*;w,w_*)(w-w_*) \\
 & =H_S(x,w)+H_P(x,x_*;w,w_*),\\
<(x’-x)^2> &= G^2(x,x_*,z)+\sigma^2_{\kappa}x^2, \\
<(w’-w)^2>& = H^2(x,x_*;w,w_*)+\sigma^2D^2(x,w), \\
<(x’-x)(w’-w)>&= G(x,x_*,z)H(x,x_*;w,w_*).
\end{split}\ee
Then, we have
\be\begin{split}
&<\psi(x’,w’)-\psi(x,w)> = G(x,x_*,z)\dfrac{\partial \psi}{\partial x}(x,w)+(H_S(x,w)+H_P(x,x_*;w,w_*))\dfrac{\partial \psi}{\partial w}(x,w) \\
&\qquad \dfrac{1}{2}\Bigg[(G^2(x,x_*,z)+\sigma^2_{\kappa}x^2)\dfrac{\partial^2\psi}{\partial x^2}(x,w)+((H_S(x,w)+H_P(x,x_*;w,w_*))^2+\sigma^2D^2(x,w)\dfrac{\partial^2\psi}{\partial w^2}(x,w)) \\
&\qquad G(x,x_*,z)(H_S(x,w)+H_P(x,x_*;w,w_*))\dfrac{\partial^2\psi}{\partial x\partial w}(x,w)\Bigg]+R(x,x_*;w,w_*),
\end{split}\ee
where $R(x,x_*;w,w_*)$ denotes the higher order terms of the Taylor expansion. From the quasi-invariant scalings introduced in \eqref{eq:scaling_x}-\eqref{eq:scaling_w} it follows that
\be
G(x,x_*,z) = \epsilon \tilde G(x,x_*,z), \quad H_S(x,w)=\epsilon \tilde H_S(x,w),\quad H_P(x,x_*;w,w_*) = \epsilon \tilde H_P(x,x_*;w,w_*).
\ee
Then equation \eqref{eq:scaled} takes the form
\be\begin{split}
&\dfrac{\partial}{\partial t}\int_{X\times I}\psi(x,w)f(x,w,t)dwdx = \\
&\qquad \int_{\RR^+}\int_{X^2\times I^2} \Bigg[\tilde G(x,x_*,z)\dfrac{\partial \psi}{\partial x}(x,w)+\tilde H_S(x,w)\dfrac{\partial \psi}{\partial w}(x,w)+\tilde H_P(x,x_*;w,w_*)\dfrac{\partial \psi}{\partial w}(x,w)\\
&\qquad+\dfrac{\sigma^2_{\kappa}}{2}x^2 \dfrac{\partial\psi}{\partial x^2}(x,w)+\dfrac{\sigma^2}{2}D^2(x,w)\dfrac{\partial ^2\psi}{\partial w^2}(x,w)\Bigg]f(x,w,t)f(x_*,w_*,t)C(z)dwdw_*dxdx_*dz \\
&\qquad + \tilde R(\epsilon)+O(\epsilon),
\end{split}\ee
with
\be\begin{split}\label{eq:reminder}
&\tilde R(\epsilon) = \dfrac{1}{2\epsilon}\int_{\RR^+}\int_{X^2\times I^2} \Bigg[\epsilon^2\Big( \tilde G^2(x,x_*,z)\dfrac{\partial^2\psi}{\partial x^2}(x,w)\\
&\qquad+(\tilde H_S(x,w) +\tilde H_P(x,x_*;w,w_*))^2\dfrac{\partial^2\psi}{\partial w^2}(x,w)\\
&\qquad+\tilde G(x,x_*,z)(\tilde H_S(x,w) +\tilde H_P(x,x_*;w,w_*))\dfrac{\partial^2\psi}{\partial x\partial w}(x,w)\Big)\\
&\qquad +R(x,x_*;w,w_*)\Bigg]f(x,w,t)f(x_*,w_*,t)C(z)dwdw_*dxdx_*dz .
\end{split}\ee
By similar arguments to \cite{PT1,T} it can be shown that the term $R(\epsilon)$ defined in \eqref{eq:reminder} decays to zero in the limit $\epsilon\rightarrow 0$. Finally for $\epsilon\rightarrow 0$ we obtain
\be\begin{split}\label{eq:FP_weak}
&\dfrac{\partial}{\partial t} \int_{X\times I}\psi(x,w)f(x,w,t)dwdx=  \int_{\RR^+}\int_{X^2\times I^2} \Bigg[\tilde G(x,x_*,z)\dfrac{\partial \psi}{\partial x}(x,w)\\
&\qquad +\tilde H_S(x,w)\dfrac{\partial \psi}{\partial w}(x,w)+\tilde H_P(x,x_*;w,w_*)\dfrac{\partial \psi}{\partial w}(x,w)\Bigg]f(x,w,t)f(x_*,w_*,t)C(z)dwdw_*dxdx_*dz\\
&\qquad+\dfrac{\sigma^2_{\kappa}}{2}\int_{X^2\times I^2} x^2 \dfrac{\partial\psi}{\partial x^2}(x,w)f(x,w,t)dwdx \\
&\qquad+\dfrac{\sigma^2}{2}\int_{X^2\times I^2}D^2(x,w)\dfrac{\partial ^2\psi}{\partial w^2}(x,w)f(x,w,t)dwdx.
\end{split}\ee
Integrating back by parts equation \eqref{eq:FP_weak} we have the following nonlinear Fokker-Planck equation
\be\begin{split}\label{eq:FP}
&\dfrac{\partial}{\partial t}f(x,w,t)=\dfrac{\partial}{\partial x}\mathcal{G}[f](x,t)f(x,w,t)+\dfrac{\partial}{\partial w}\mathcal{H}[f]f(x,w,t)+\dfrac{\partial}{\partial w}\mathcal{K}[f]f(x,w,t)\\
&\qquad\dfrac{\sigma_{\kappa}^2}{2}\dfrac{\partial^2}{\partial x^2}(x^2 f(x,w,t))+\dfrac{\sigma^2}{2}\dfrac{\partial^2}{\partial w^2}(D^2(x,w)f(x,w,t)),
\end{split}\ee
where
\be\begin{split}
\mathcal{G}[f](x,t) &=\int_{\RR^+} \int_{X\times I}(\lambda x-\lambda_Cx_*-\lambda_B z) f(x_*,w_*,t)C(z)dw_*dx_*dz \\
&=\lambda x-\lambda_C m_x-\lambda_B m_B,
\end{split}\ee
being $m_x(t)=\int_{X\times I}xf(x,w,t)dwdx$, and where the functionals $\mathcal{H}[f],\mathcal{K}[f]$ are defined as follows
\be\begin{split}\label{eq:GH_func}
\mathcal{H}[f](x,w,t) &= \int_{X\times I} \alpha_P P(x,x_*;w,w_*)(w-w_*)f(x_*,w_*,t)dw_*dx_* \\
\mathcal{K}[f](w_d) &=\int_{X\times I} \alpha_S S(x)(w-w_d)f(x,w,t)dwdx.
\end{split}\ee

\subsection{Stationary states for the marginal density}\label{sec:stationary}
We introduce the marginal density function of the competence
\be
g(x,t) = \int_{I}f(x,w,t)dw.
\ee
By direct integration of the Fokker-Planck equation \eqref{eq:FP} with respect to the variable $w$ and considering a local diffusion function $D(x,w)$ decaying at the boundaries of the reference interval for the opinions, and taking into account the convergence of the mean competence showed in Section \ref{sec:comp_evo}, we obtain
\begin{equation}\begin{split}\label{eq:FP_marginalx}
&\dfrac{\partial}{\partial t}g(x,t)=\dfrac{\partial}{\partial x}\left(\lambda x-\dfrac{\lambda\lambda_B}{\lambda-\lambda_C} m_B \right)g(x,t)+\int_{I}\mathcal{H}[f]f(x,w,t)dw + \dfrac{\sigma^2_{\kappa}}{2}\dfrac{\partial^2}{\partial x^2} (x^2 g(x,t)).
\end{split}\end{equation}
The above equation simplifies by imposing the Dirichelet boundary conditions $f(x,-1,t)=f(x,1,t) = 0$ for each $x\in X, t\ge 0$. Therefore it is possible to give the analytic formulation of the stationary solution of \eqref{eq:FP_marginalx}. The solution of
\be
\dfrac{\partial}{\partial x}\left(-\lambda x+\dfrac{\lambda\lambda_B}{\lambda-\lambda_C} m_B \right)g^{\infty}(x)=\dfrac{\sigma_{\kappa}^2}{2}\dfrac{\partial^2}{\partial x^2}(x^2 g^{\infty}(x))
\ee
is given by
\be
g^{\infty}(x) = \dfrac{c_{\lambda,\lambda_B,\lambda_C,\sigma_{\kappa}^2}}{x^{2+2\lambda/\sigma_{\kappa}^2}}\exp\Big\{ -\dfrac{2}{\sigma^2_{\kappa}x}\cdot\dfrac{\lambda\lambda_Bm_B}{\lambda-\lambda_C} ) \Big\},
\ee
where $c_{\lambda,\lambda_B,\lambda_C,\sigma_{\kappa}^2}$ is a constant chosen such that the total mass of $g^{\infty}$ is equal to one.

Unlike the usual method for determining the stationary density developed in  \cite{BT,PT1}, here the asymptotic competence has been derived from the complete Fokker-Planck equation \eqref{eq:FP} after the integration of the opinion variable, and under specific boundary conditions.

\section{Numerics}\label{sec:numerics}

In this section we propose several numerical tests for the Boltzmann-type model introduced in the previous paragraphs which show the emergence of suboptimal collective decisions under the hypothesis of equality bias compared to aCF and cMC models.  All the results presented have been obtained through direct simulation Monte Carlo methods for the Boltzmann equation (see \cite{PR,PT2}). 

 The Fokker-Planck regime is obtained via the quasi-invariant scaling \eqref{eq:scaling_x}--\eqref{eq:scaling_w} with $\epsilon=0.01$. The local diffusion function has been considered in the case
\be\label{eq:local_dif}
D(x,w) = 1-w^2
\ee
which multiplies in the binary collision model \eqref{eq:binary} for the opinion, a uniform random variable with scaled variance $\sigma^2$. The choice \eqref{eq:local_dif} implies that the diffusion does not act on the agents with more extremal opinions.
In the competence dynamics we considered $\lambda(x)=\lambda$, $\lambda_C(x)=\lambda_C$ and $\lambda_B(x)=\lambda_B$ and a uniform random variable $\kappa$ with finite scaled variance $\sigma_{\kappa}^2$. The random variable $z$ in here uniform with mean $1/2$.

\subsection{Test 1: collective decision under equality bias}\label{test1}


In this test we compute the collective decisions emerging from the Boltzmann-type model \eqref{eq:Bnonlin} with interaction function \eqref{eq:P} and $Q(w,w_*)=1$ and $R(x,x_*)$ defined in the reference cases aCF-cMC and EB. The binary interaction terms defined in \eqref{eq:comp_bin}--\eqref{eq:binary} have been considered for a choice of constants coherent with the bounds described in Section \ref{sec:micro}. In this test we considered a vanishing drift term $S(x)=0$, for each $x\in X$.

We compare the emerging collective decisions in the aCF and cMC cases considering the action of the equality bias function described in equation \eqref{eq:phi_psyco}. In this test we take into account the compromise behavior of two strongly polarized populations with equal sizes (case A) and with different sizes (case B).  In both cases the group of competent agents is characterized at $t=0$ by opinions in the interval $w\in [-1,-0.75]$, whereas the second population, composed by the less competent agents, expresses opinions $w \in [0.75,1]$. In the case of populations with different size we considered an initial distribution such that the number of competent agents is five time larger then the number of the incompetent ones. In Figure \ref{fig:initial} we exemplified the initial configurations for the two tests: Test 1A and Test 1B.  

\begin{figure}
\centering
\includegraphics[scale=0.44]{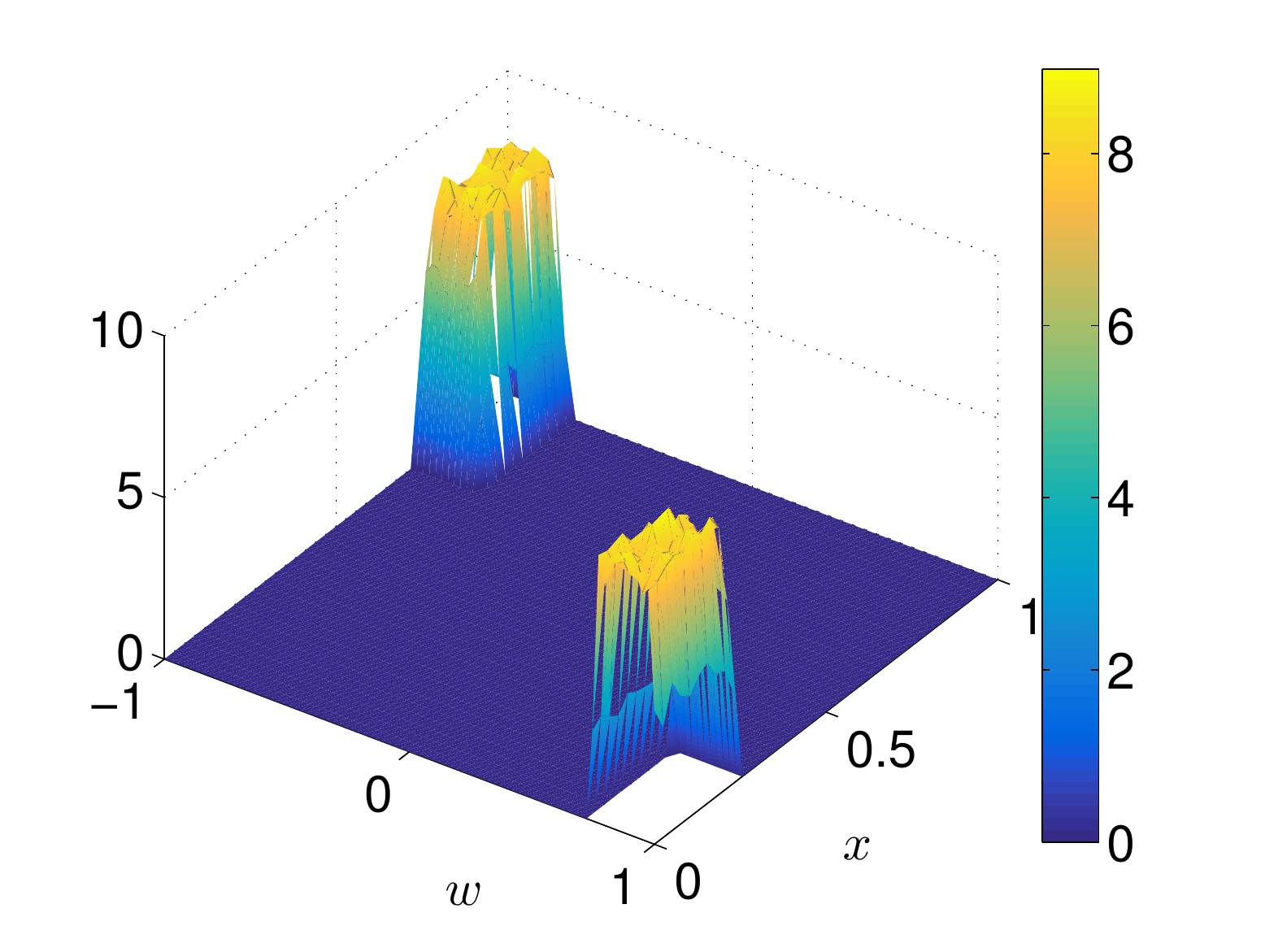}
\includegraphics[scale=0.44]{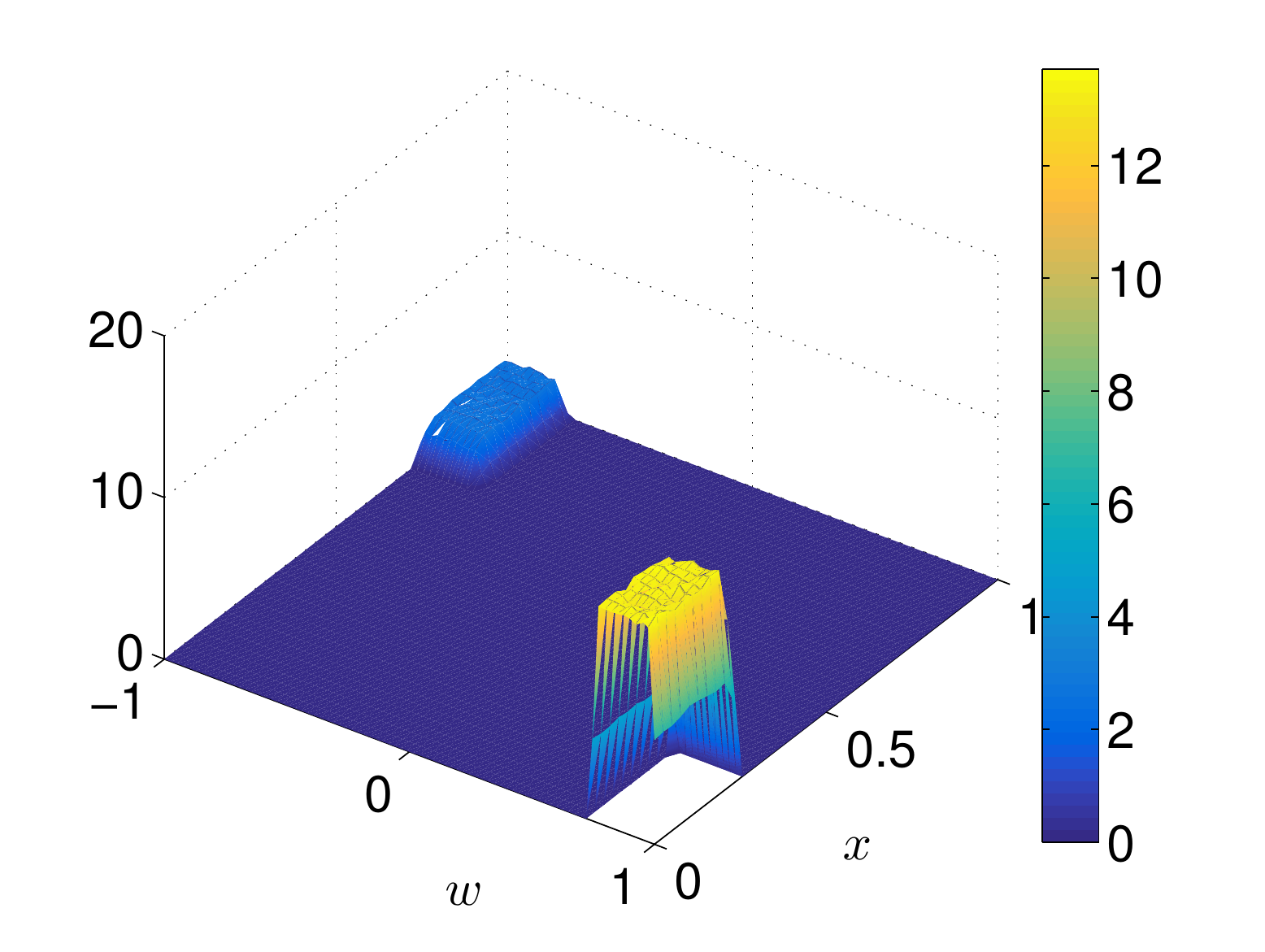}
\caption{Left: \textbf{Test 1A}, initial configuration of the multi-agent system with an equal size of competent and incompetent agents. Right: \textbf{Test 1B}, initial configuration of the multi-agent system with a different size of competent and incompetent agents, in particular the number of competent agents is five times smaller than the size of the incompetent agents. }
\label{fig:initial}
\end{figure}

\begin{figure}
\centering
\includegraphics[scale=0.35]{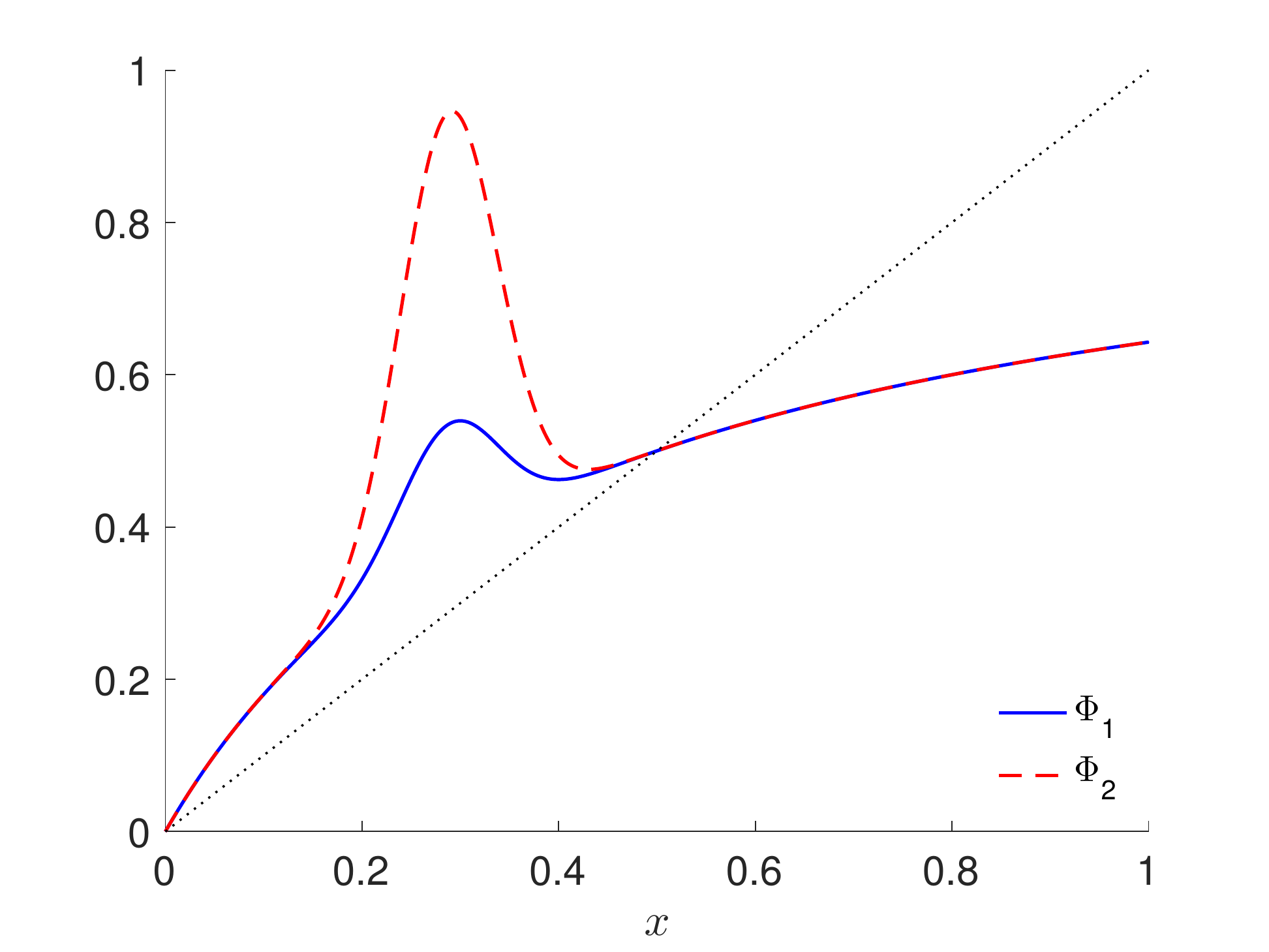}
\includegraphics[scale=0.35]{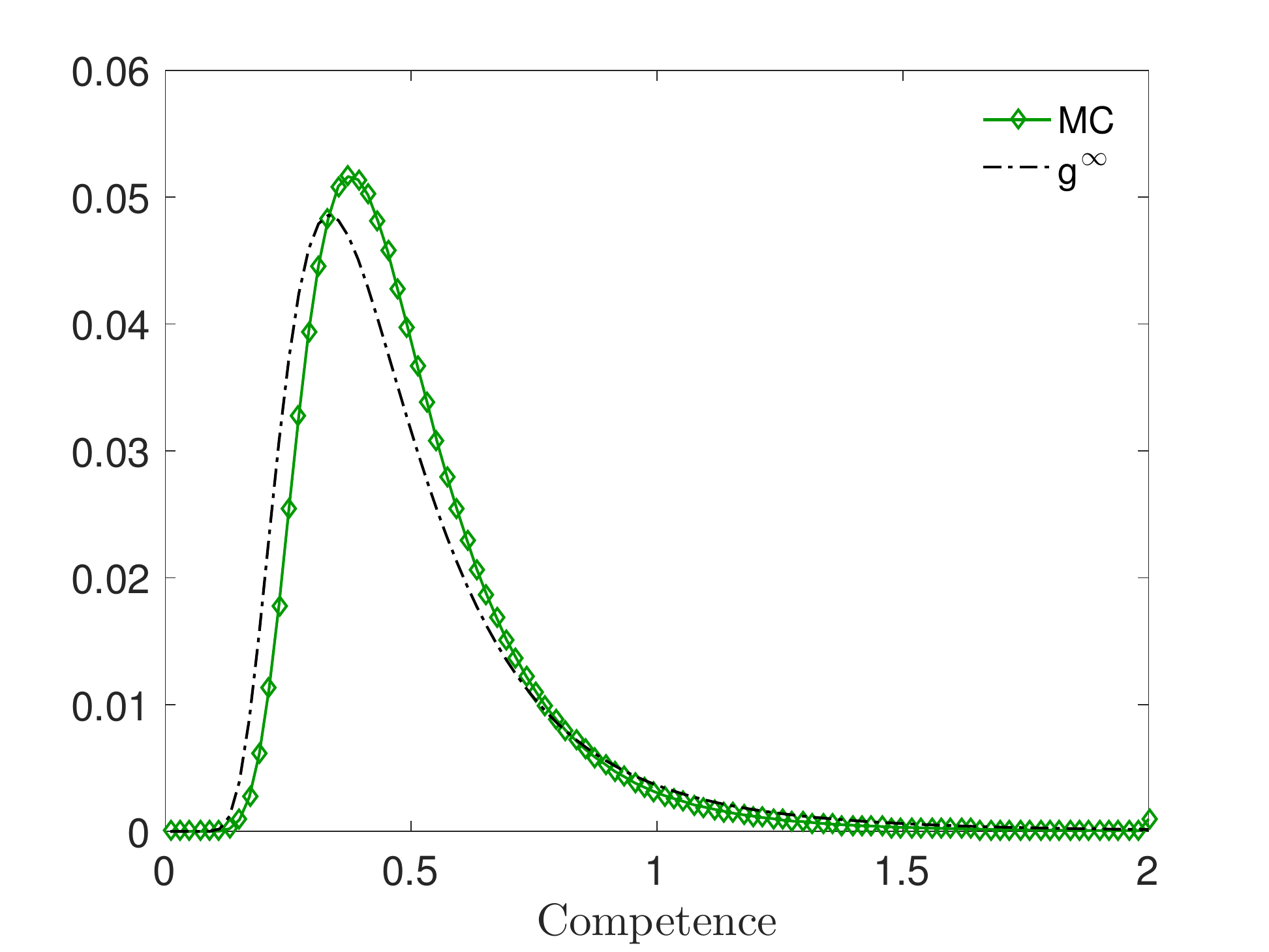}
\caption{\textbf{Test 1}. Left: two examples of equality bias functions $\Phi(x)$ if the form introduced in \eqref{eq:phi_psyco} and passing through $\bar{x}=0.5$ with $a_1=0.9$ and scaling parameter $h=10^2$. In the rest of the paper we will refer to them as $\Phi_1(x)$ and $\Phi_2(x)$. Right: convergence of the Monte Carlo method toward the analytic steady state for $\epsilon=0.01$.}
\label{fig:phi_competence}
\end{figure}

In Figure \ref{fig:test1} we show the kinetic and particle solutions of the Test 1A for a system of interacting agents in the aCF-cMC and EB1-EB2 models, where the case of equality bias has been considered for two examples of equality bias functions $\Phi_1(x)$ and $\Phi_2(x)$ of the type introduced in Section \ref{sec:decision_EB}, see equation \eqref{eq:phi_psyco}, and whose behaviors are reported in Figure \ref{fig:phi_competence}. Further in Figure \ref{fig:phi_competence} we compare the stationary distribution for the competence variable emerging form the Monte Carlo method and its analytic formulation obtained in Section \ref{sec:stationary}. In particular in Figure \ref{fig:test1} we can observe how the emerging collective decisions of the EB1 and EB2 cases are significantly shifted from the decision emerging in a model cMC.

\begin{figure}
\centering
\subfigure[aCF t=1]{
\includegraphics[scale=0.26]{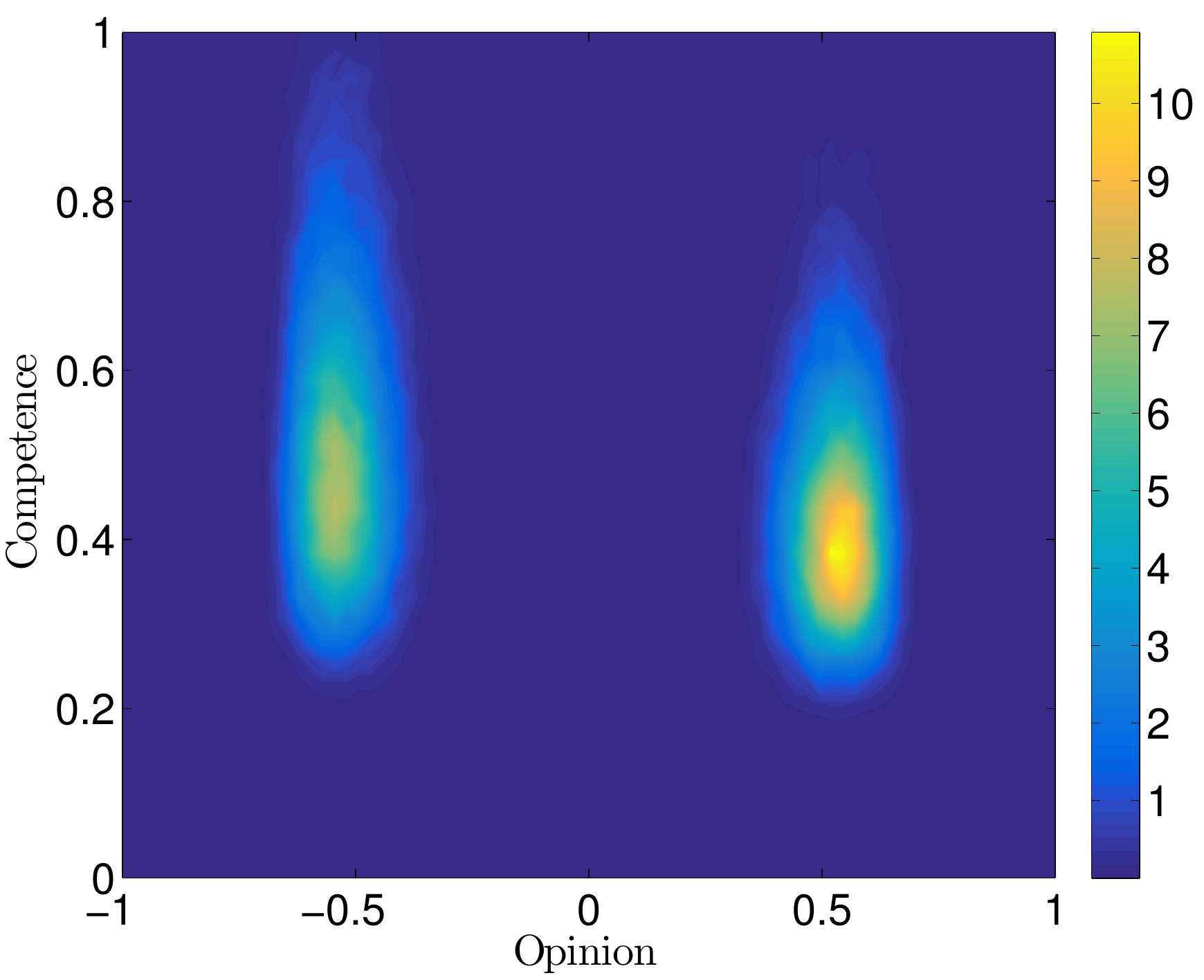}}
\subfigure[aCF t=3]{
\includegraphics[scale=0.26]{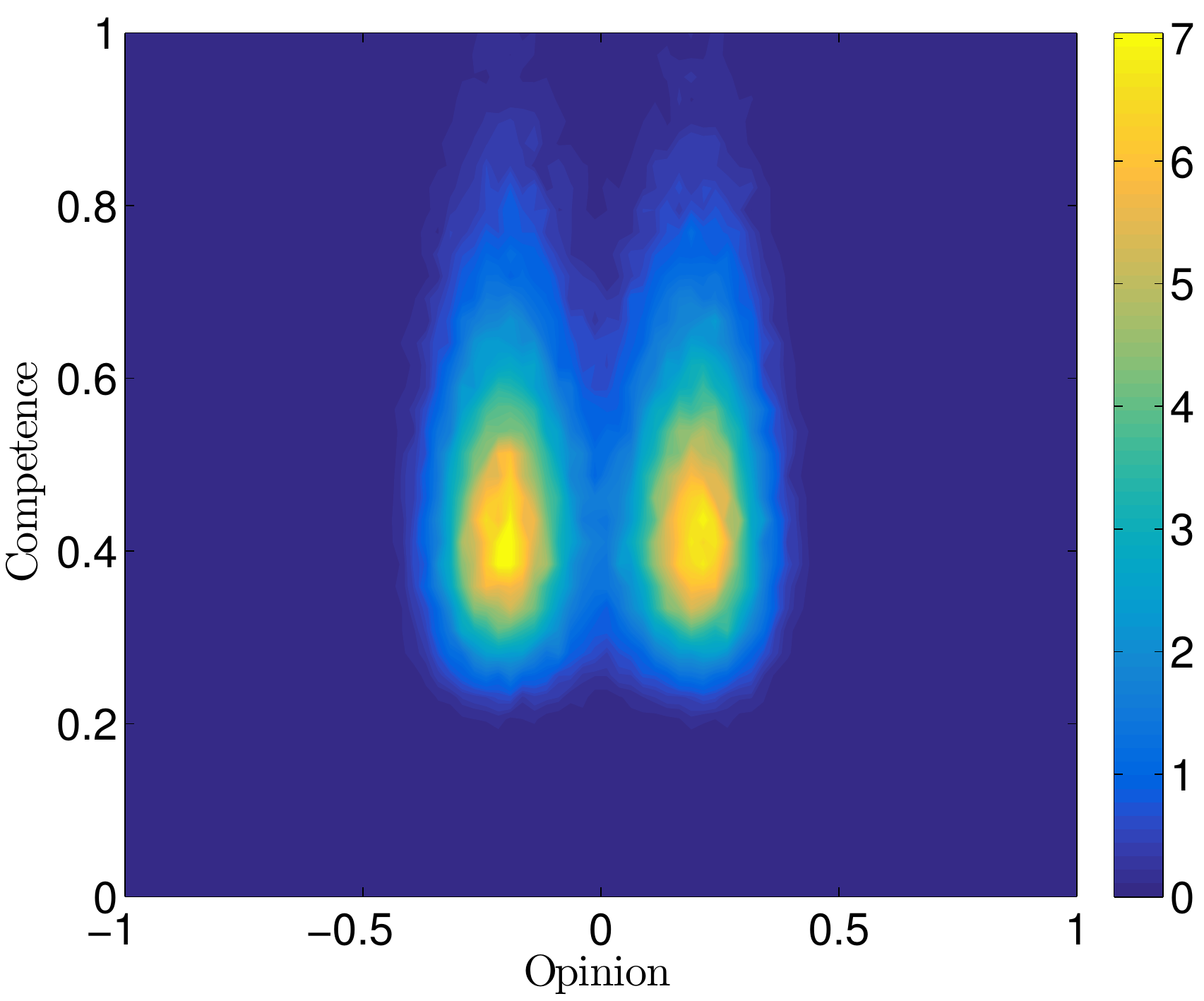}}
\subfigure[aCF t=10]{
\includegraphics[scale=0.26]{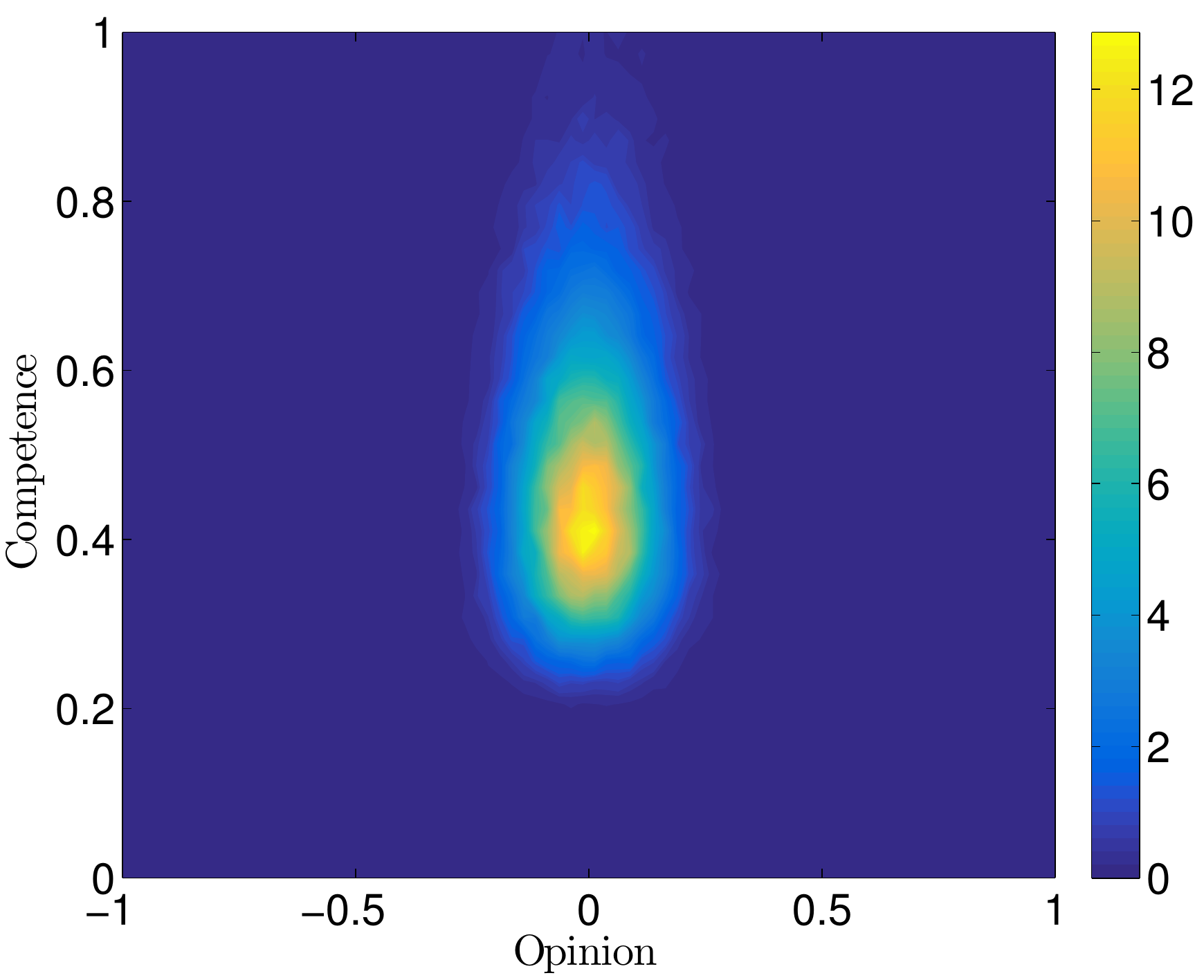}}\\
\subfigure[cMC t=1]{
\includegraphics[scale=0.26]{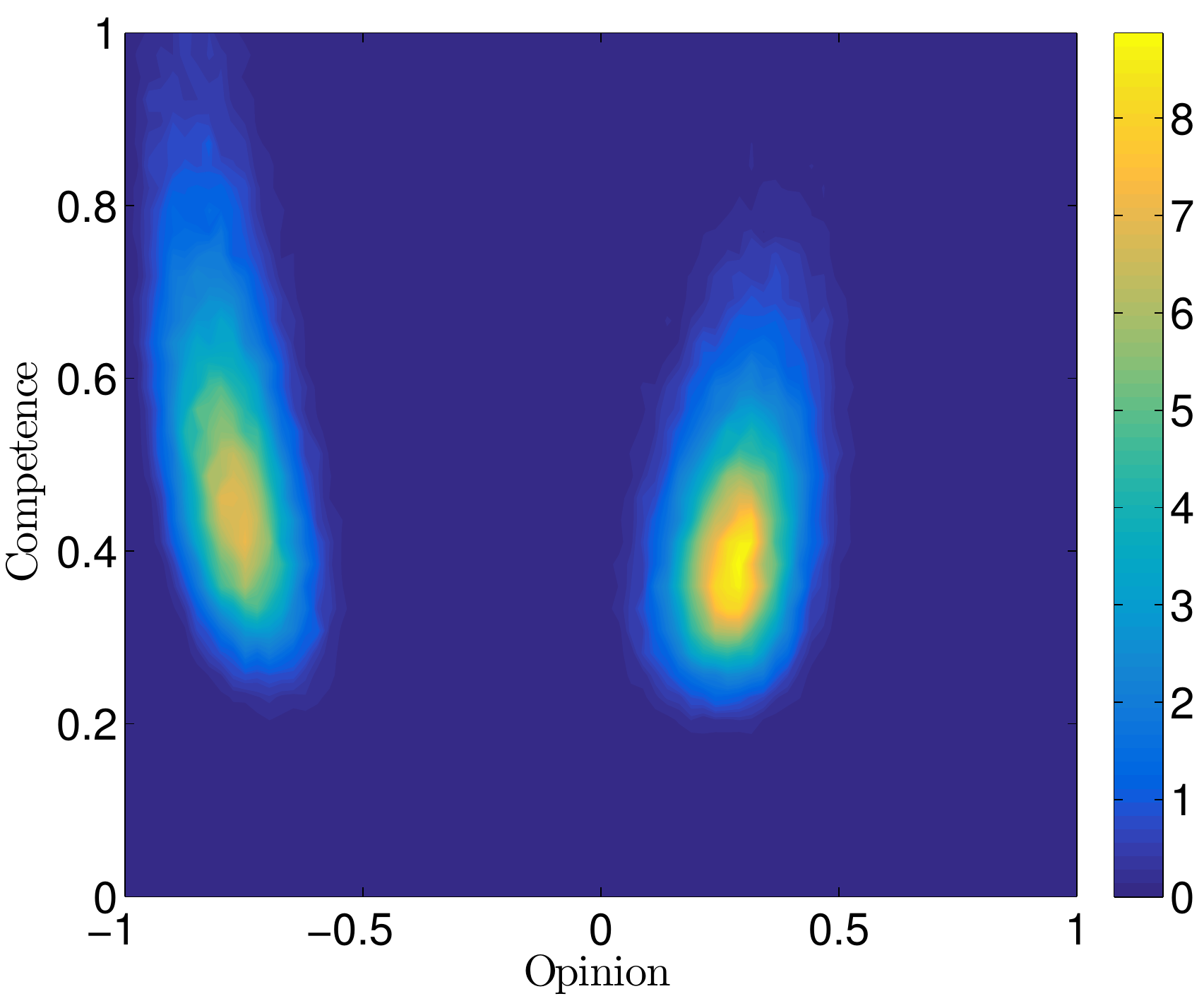}}
\subfigure[cMC t=3]{
\includegraphics[scale=0.26]{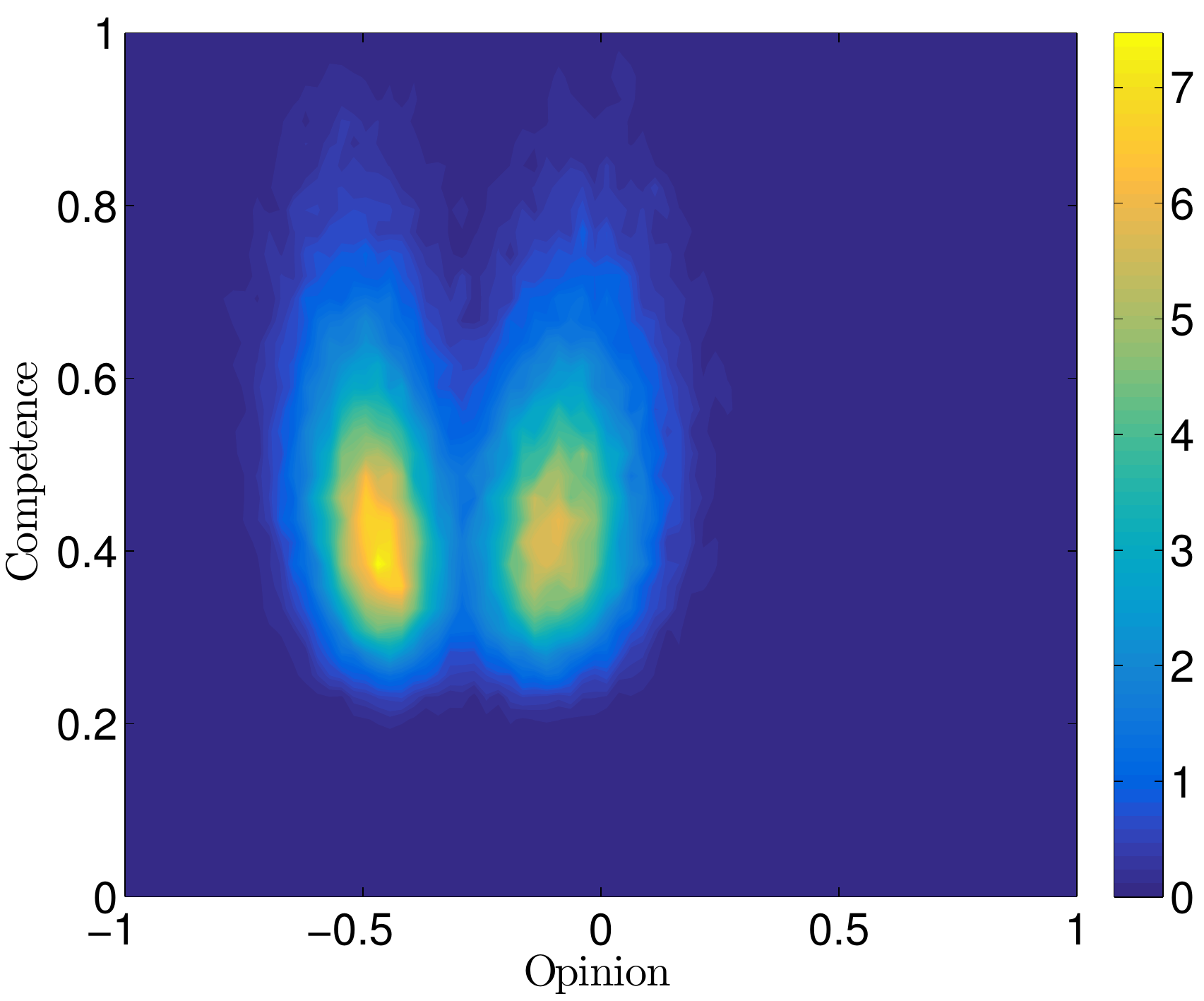}}
\subfigure[cMC t=10]{
\includegraphics[scale=0.26]{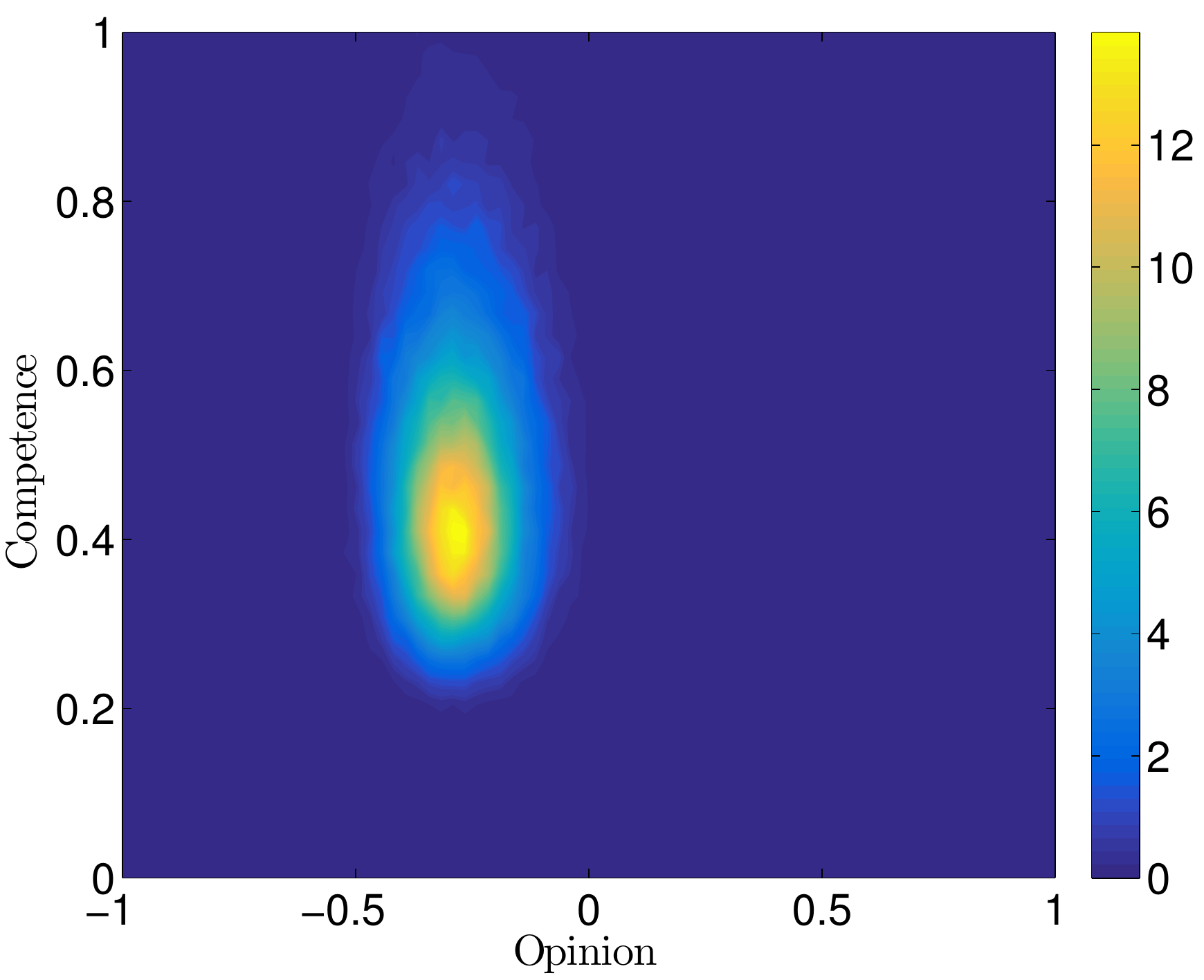}}\\
\subfigure[EB1 t=1]{
\includegraphics[scale=0.26]{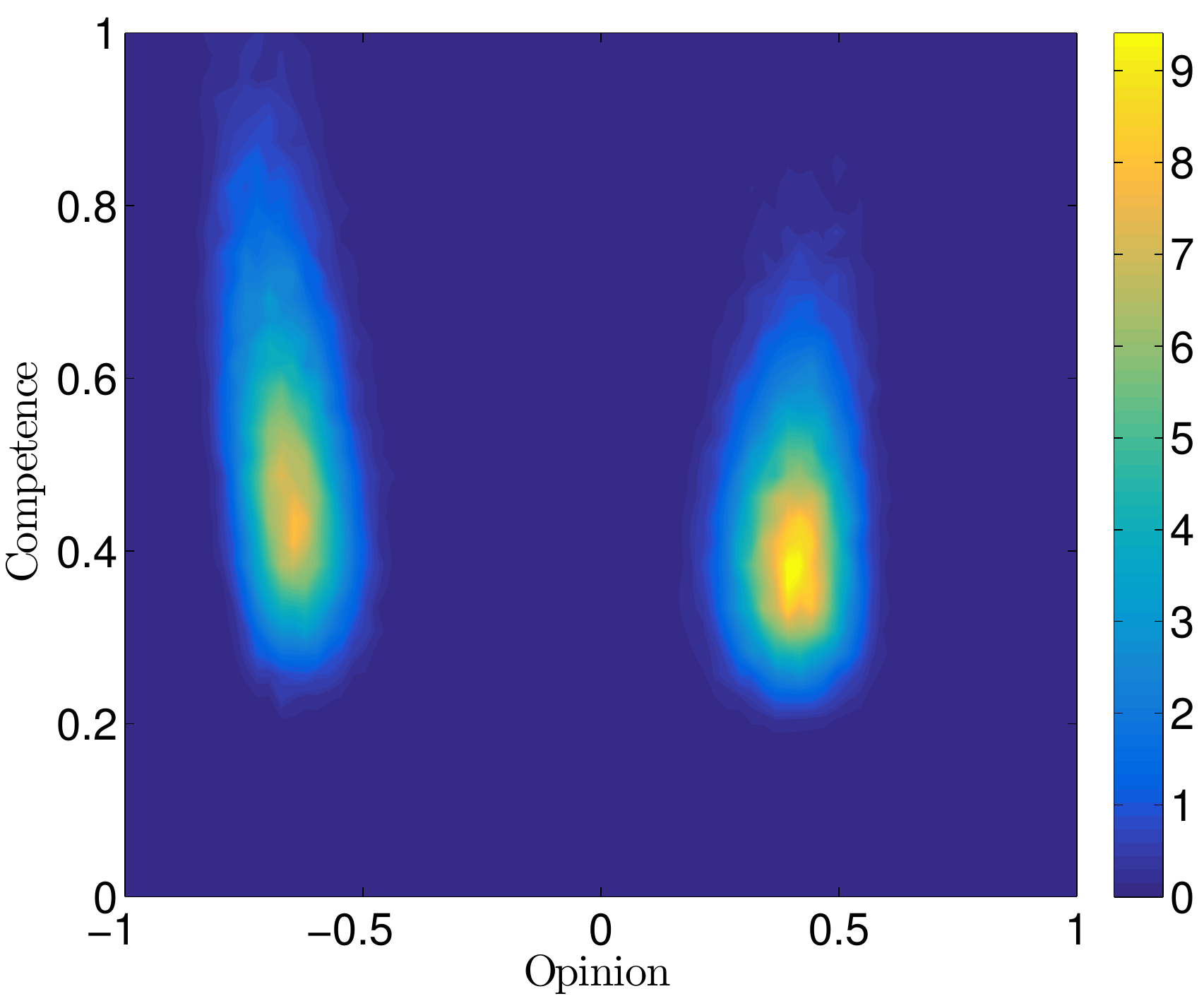}}
\subfigure[EB1 t=3]{
\includegraphics[scale=0.26]{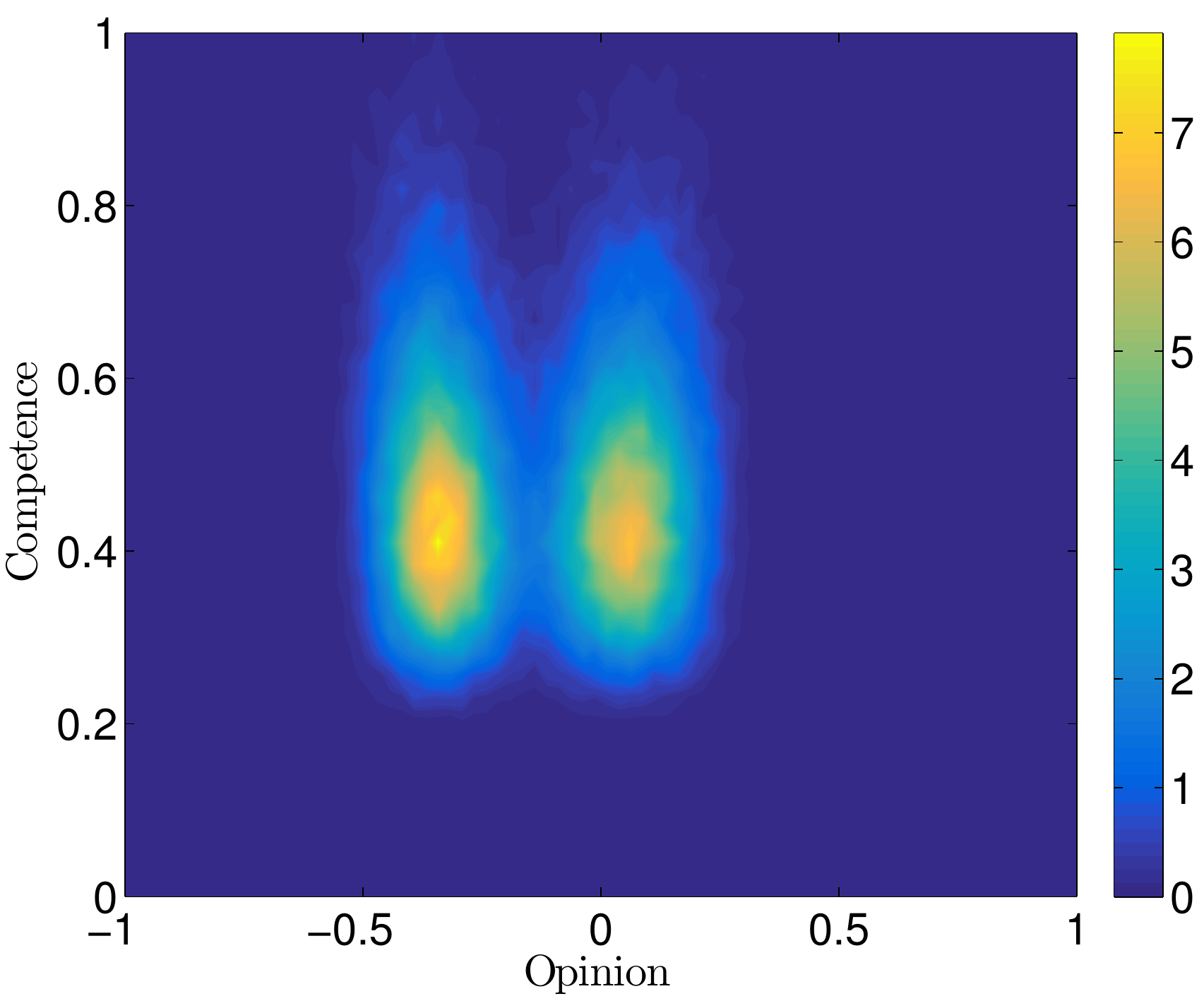}}
\subfigure[EB1 t=10]{
\includegraphics[scale=0.26]{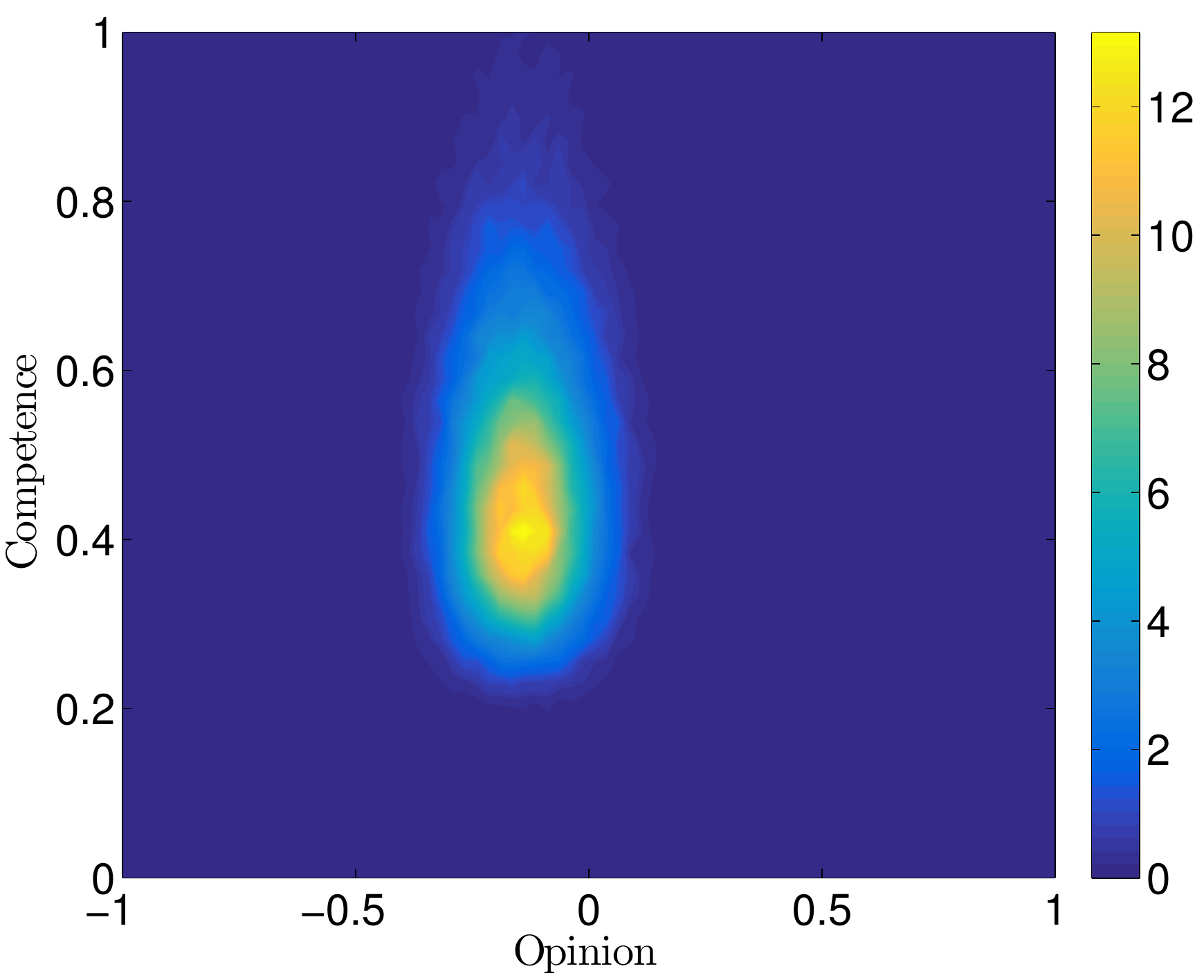}}\\
\subfigure[EB2 t=1]{
\includegraphics[scale=0.26]{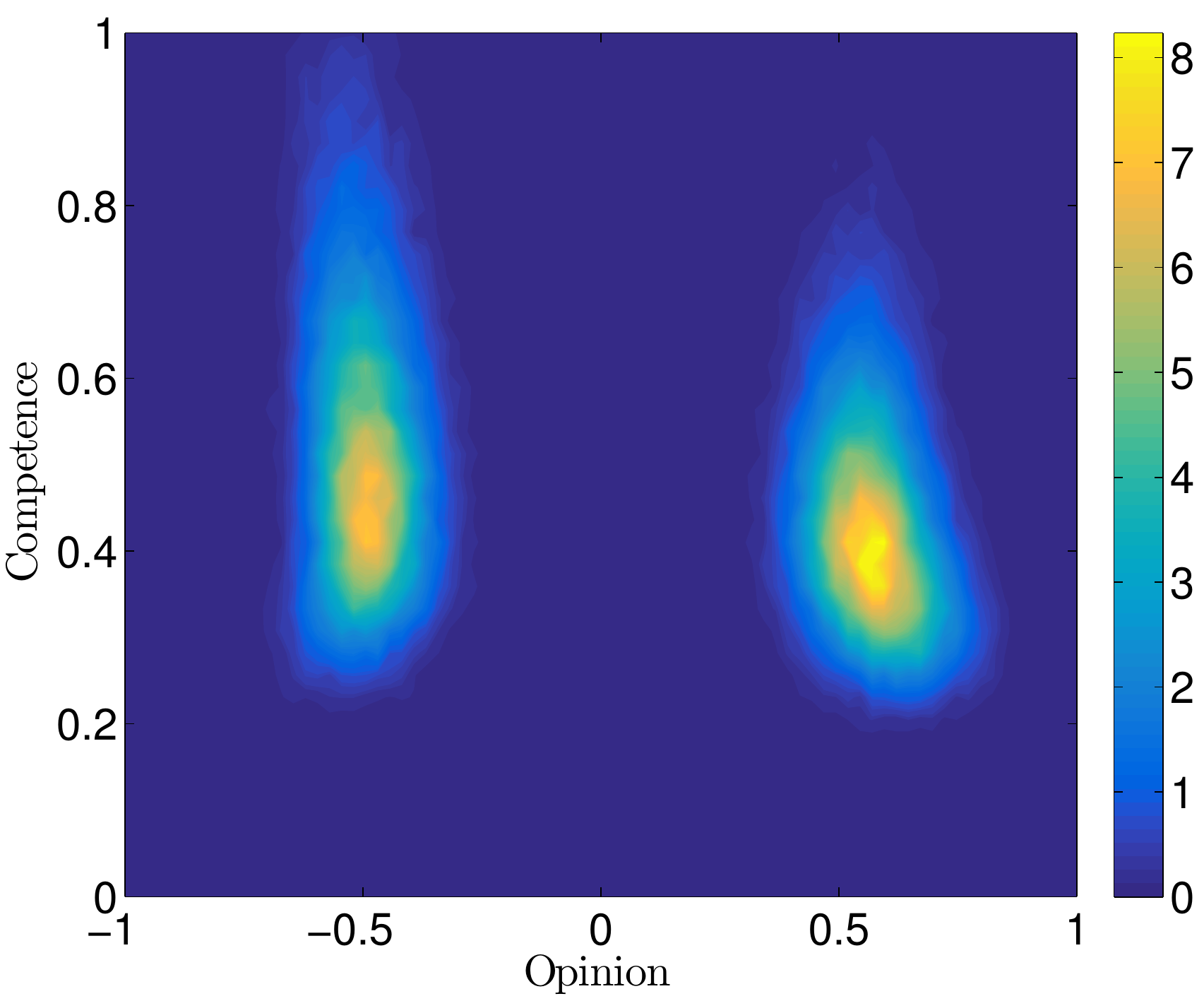}}
\subfigure[EB2 t=3]{
\includegraphics[scale=0.26]{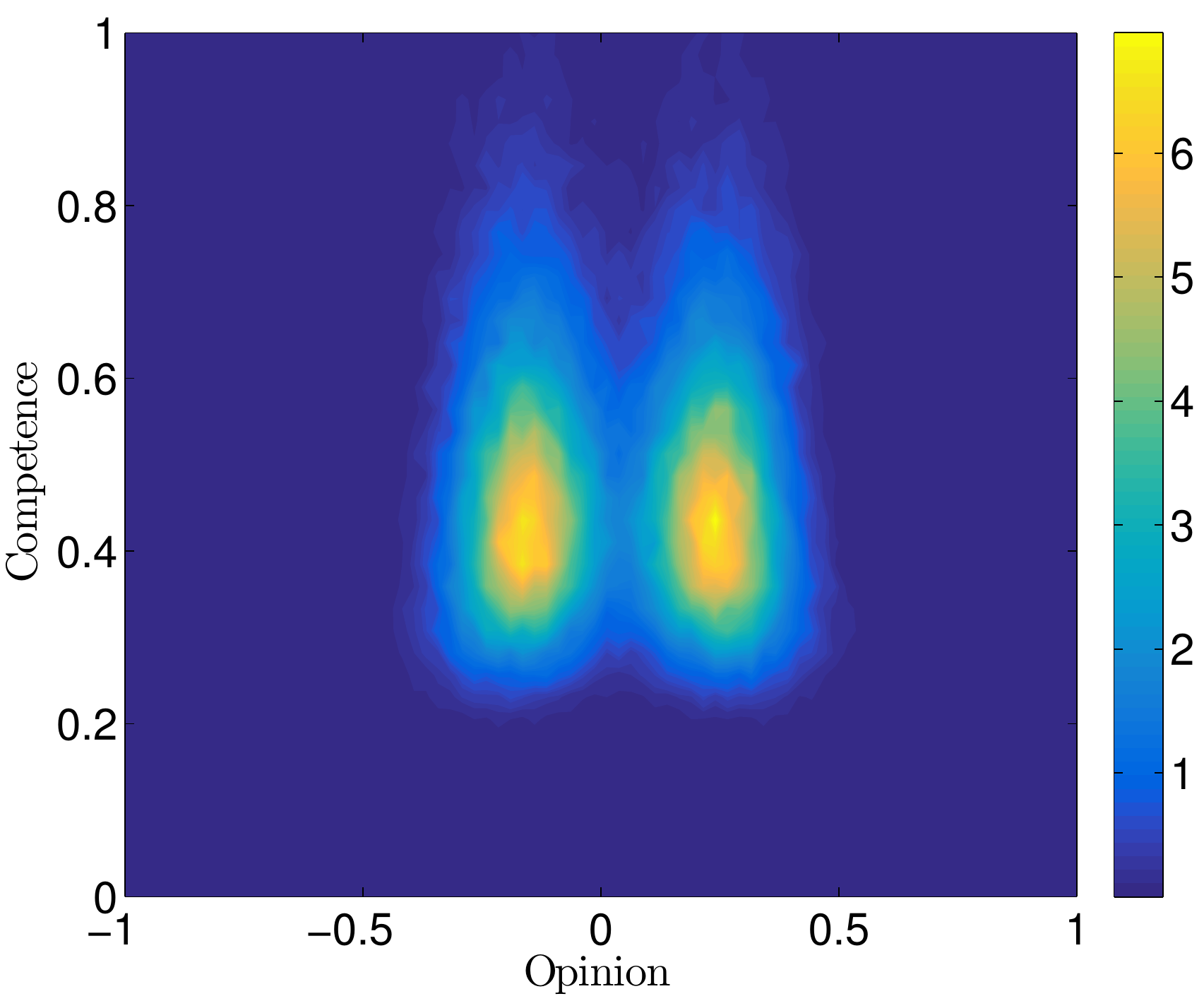}}
\subfigure[EB2 t=10]{
\includegraphics[scale=0.26]{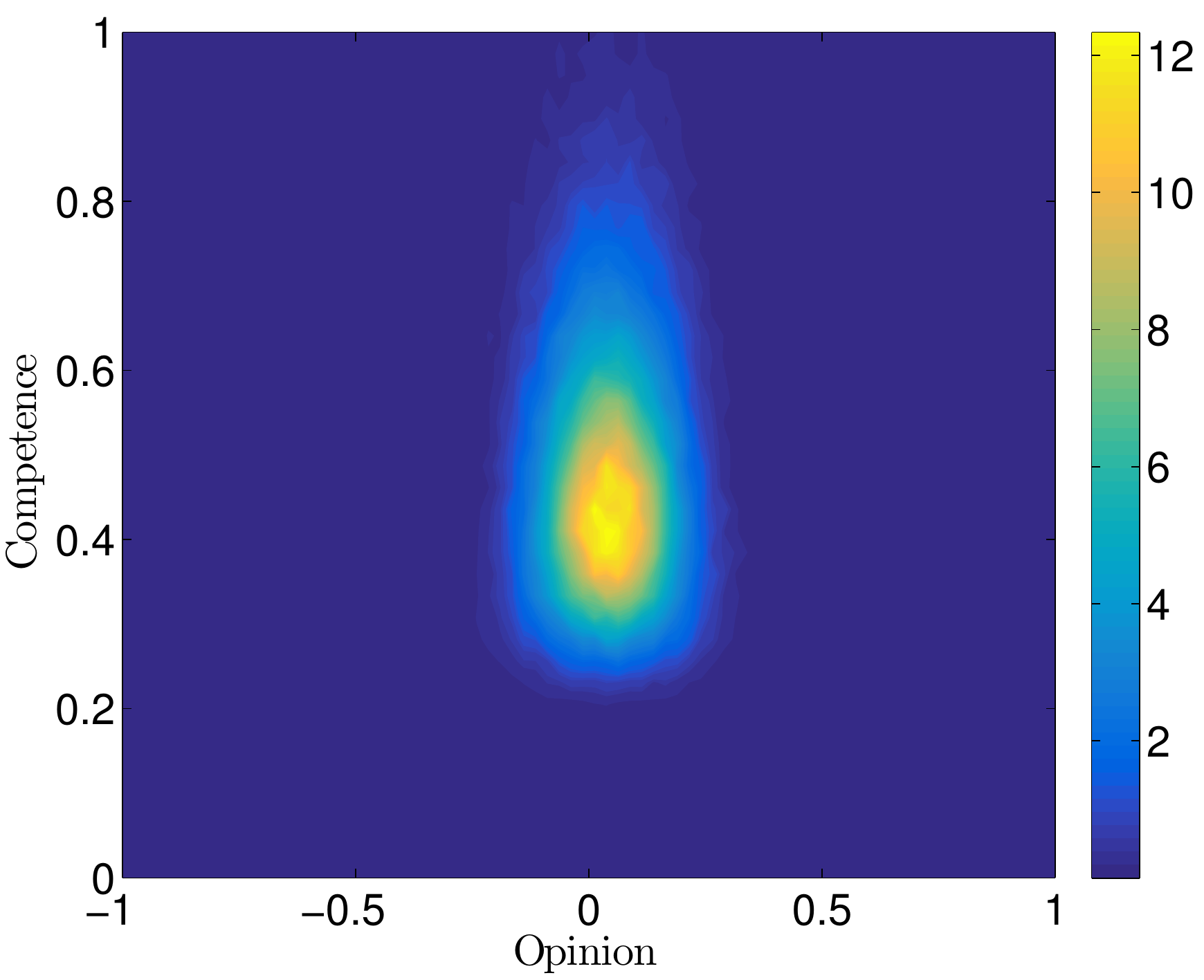}}\\
\caption{\textbf{Test 1A}: kinetic solution at different time steps in the aCF-cMC and EB1-EB2 model cases, respectively form the first row. The choice of constants in this test is: $c=10$, $\lambda_B=\lambda_C=10^{-2}$, $\sigma_\kappa^2=10^{-2}$ and $\lambda=\lambda_B+\lambda_C$. The mean competence of the multi-agent system is $\bar{x}=0.5$. We see how the equality bias, through the equality bias functions $\Phi_1(x)$ and $\Phi_2(x)$ presented in Figure \ref{fig:phi_competence}, influences the opinion dynamics, driving the collective decision toward suboptimal states with respect to the cMC model.}
\label{fig:test1}
\end{figure}

In Figure \ref{fig:test1_coll_op} we show the stationary distributions of the opinion variable in the four models of interactions both the Test 1A (left) and Test 1B (right). We observe how the EB1-EB2 models in general defines a decision which is suboptimal with respect to the competence based model cMC. Further in presence of a strong overestimation of the opinions of the less skilled agents, like in the EB2 model, we see how the emerging decision may perform worse than a aCF mode, i.e. a model which does not take into account the competence at all. In the case of the Test 1B we see how for asymmetric populations the emerging decision, also in the cMC model, may be deviated toward the positions of the less competent agents. The same behavior is highlighted in Figure \ref{fig:diffsize} where we find the evolution of the mean opinion of the multi-agent systems which lead to the formation of the collective decisions.  


\begin{figure}
\centering
\includegraphics[scale=0.35]{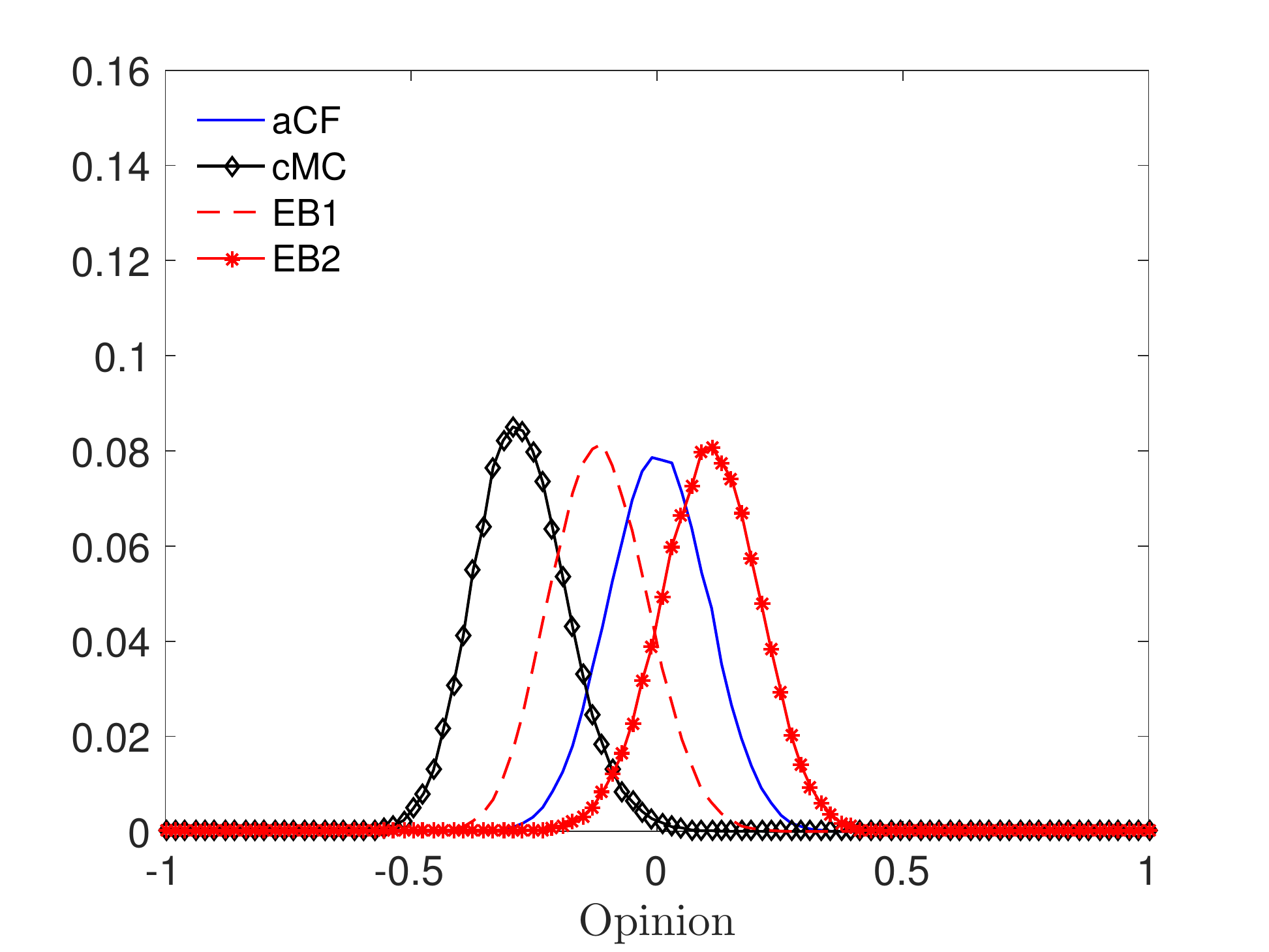}
\includegraphics[scale=0.35]{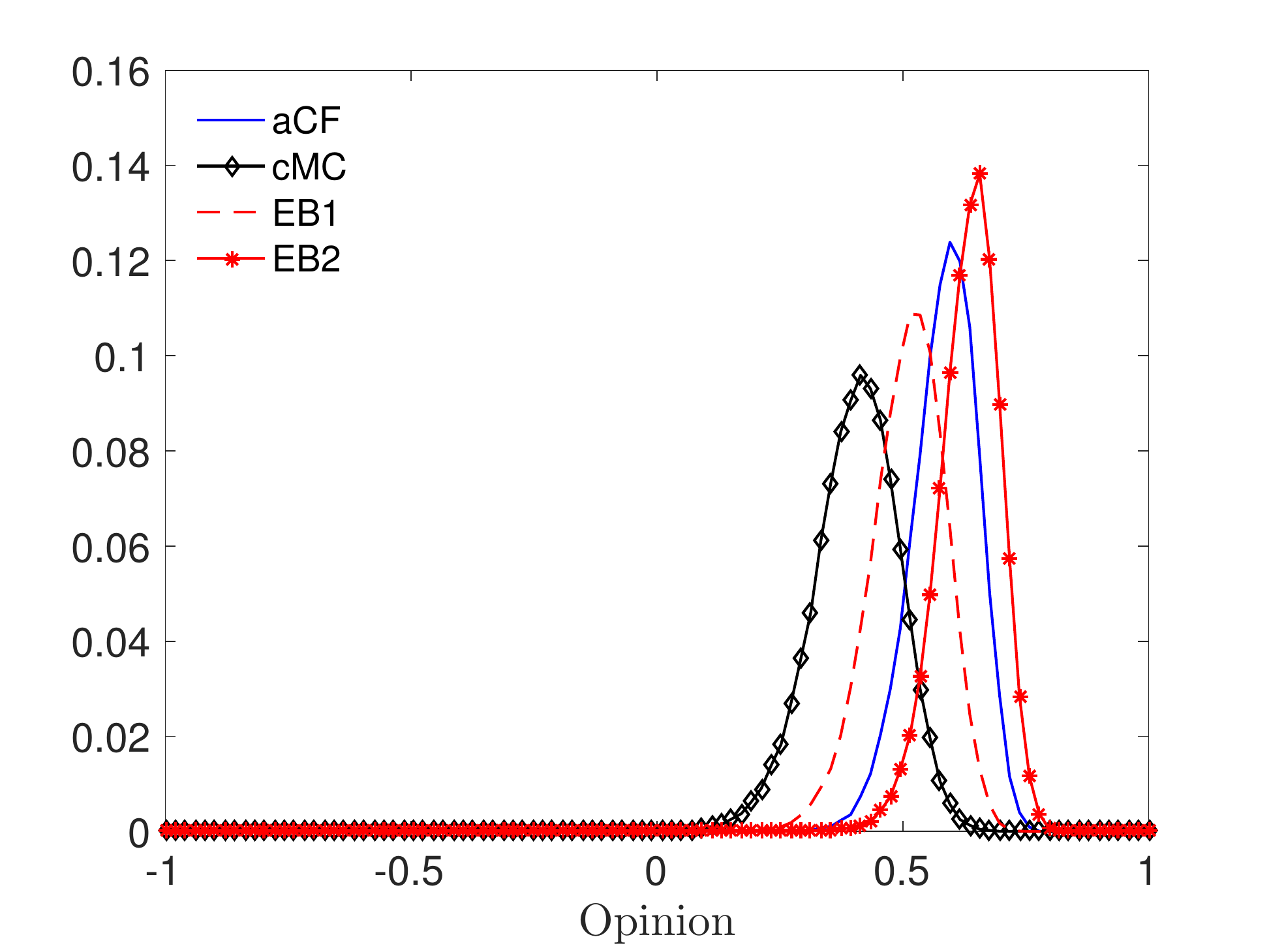}
\caption{Asymptotic distributions of the opinion variable for the models aCF-cMC-EB1-EB2
Left: \textbf{Test 1A}, case of initial configuration with equal size of competent and incompetent agents. Right: \textbf{Test1B}, case of asymmetric populations as in Figure \ref{fig:initial}. The choice of constants is $c=10$, $\lambda_B=\lambda_C=10^{-2}$, $\sigma_\kappa^2=10^{-2}$ and $\lambda=\lambda_B+\lambda_C$, the equality bias functions are sketched in Figure \ref{fig:phi_competence}. }
\label{fig:test1_coll_op}
\end{figure}

\begin{figure}
\centering
\includegraphics[scale=0.35]{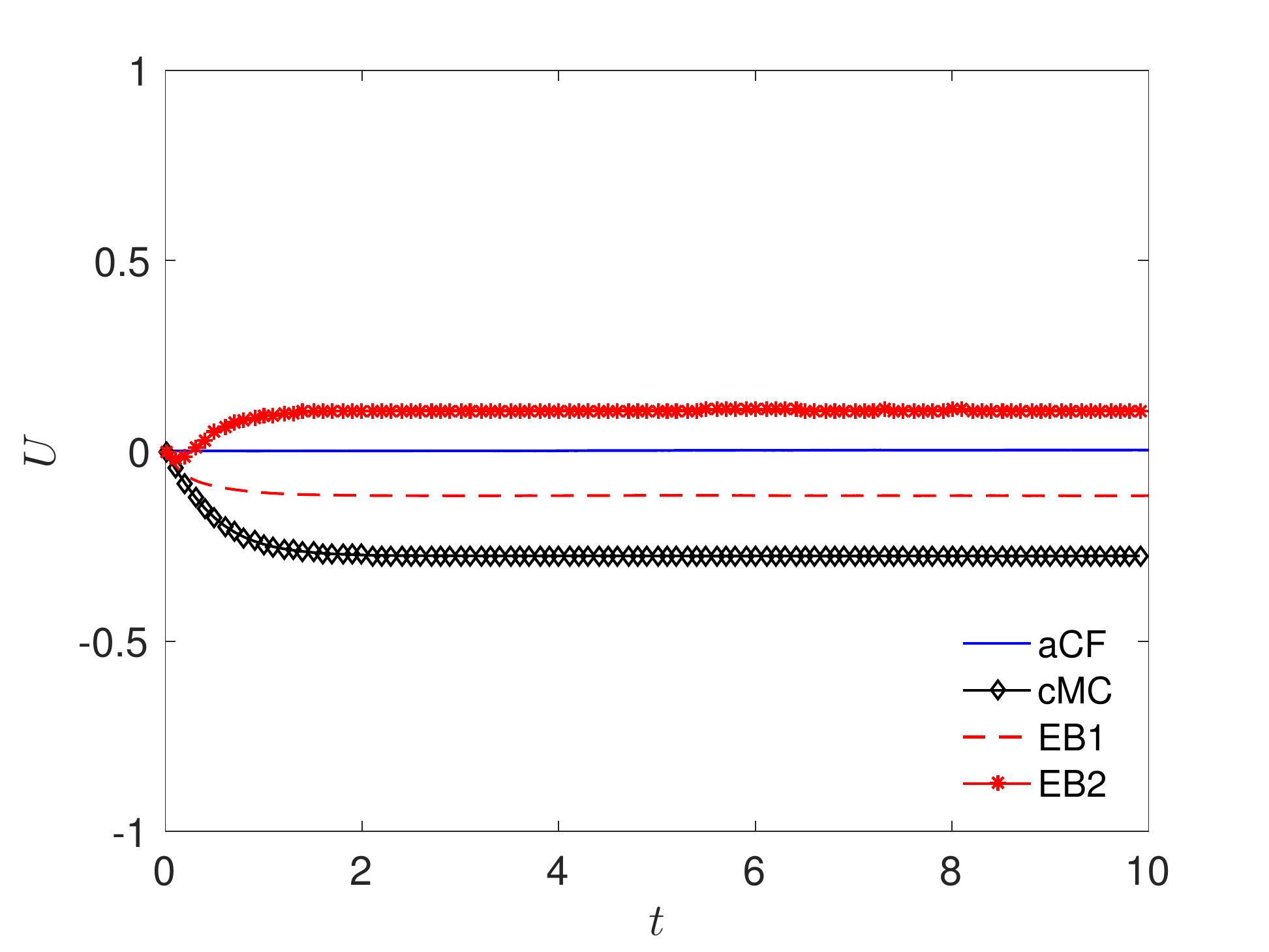}
\includegraphics[scale=0.35]{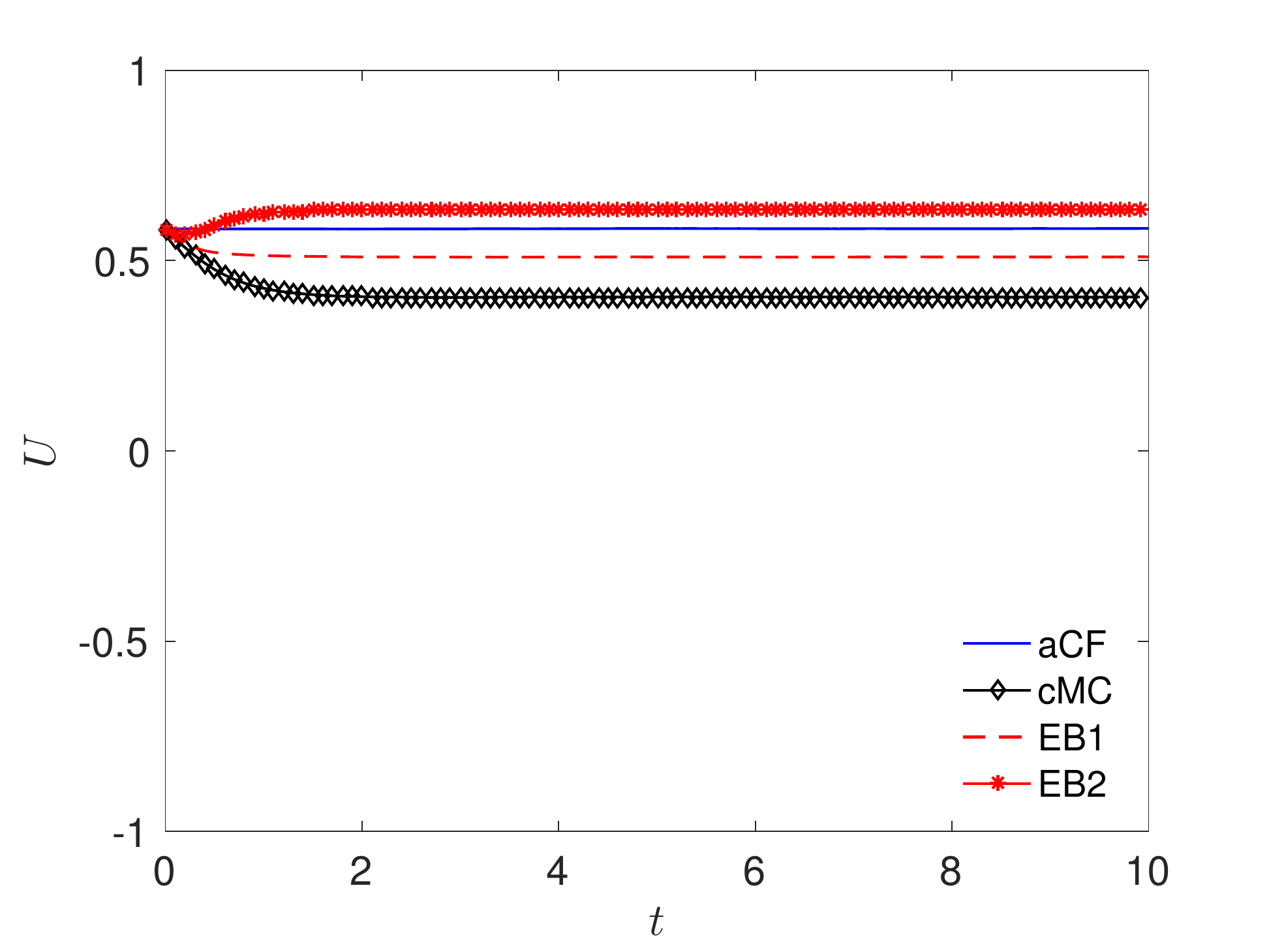}
\caption{ Collective decisions of the multi-agent system for the reference models aCF-cMC-EB1-EB2 with the choice of parameters $c=10$, $\lambda_B=\lambda_C=10^{-2}$, $\sigma_\kappa^2=10^{-2}$ and $\lambda=\lambda_B+\lambda_C$; the variable $z$ is . Left: \textbf{Test 1A}. Right: \textbf{Test1B}.
 }
\label{fig:diffsize}
\end{figure}

\subsection{Test 2:  competence driven optimal decision}
We consider here the action of the term driving the system to the correct decision introduced in the opinion dynamics \eqref{eq:binary}. There we included a competence-dependent force with a rate given by $S(\cdot)$ representing an increasing evidence in supporting the a-priori right choice $w_d\in \{-1,1 \}$. We consider here 
\be\label{eq:S_drift}
S(x)=
\begin{cases}
\min\{1,x\} & x\ge x_d \\
0 & x\le x_d
\end{cases}
\ee
and $\lambda_C=\lambda_C\chi(x\ge x_d)$. In the proposed set-up we intended to mimic the fact that extremely low skilled people $(x\leq x_d)$, in addition to make wrong choices, have not the ability to realize the inaccuracy of their decision, a phenomenon which follows form the Dunning-Kruger effect. In Figure \ref{fig:test2} we report the evolution of the multi-agent system with the same initial configuration as in Test 1B. The forcing term is characterized by \eqref{eq:S_drift} with $x_d=0.3$, which drives the opinions of sufficiently competent agents toward $w_d=-1$. Here the equality bias functions $\Phi_1(x)$ and $\Phi_2(x)$, introduced in \ref{fig:initial}, influences the speed of convergence of the opinions toward $w_d$. We compare in Figure \ref{fig:mean_selfprop} the convergences toward $w_d$ for the models aCF-cMC and EB1-EB2.

\begin{figure}
\centering
\subfigure[aCF t=1]{
\includegraphics[scale=0.26]{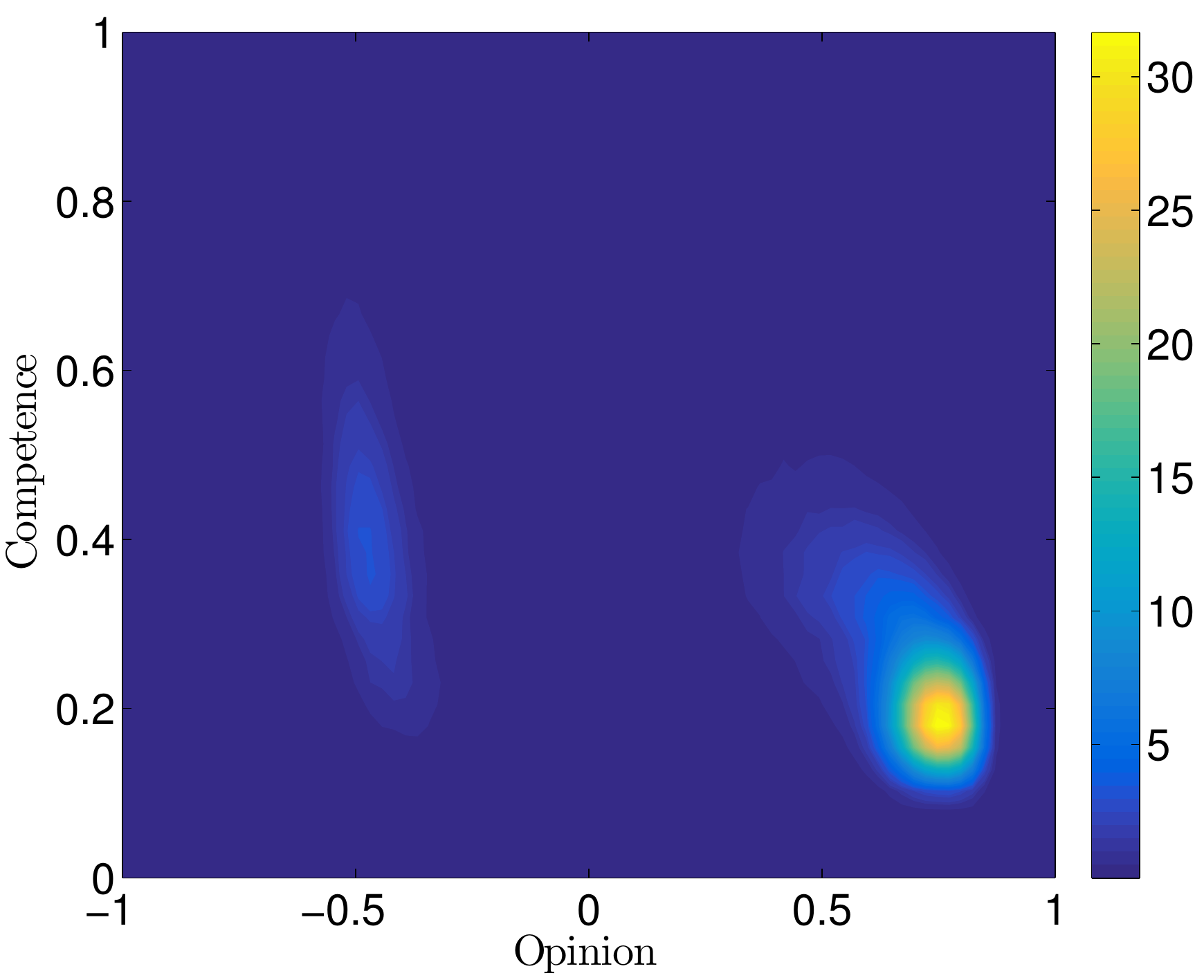}}
\subfigure[aCF t=5]{
\includegraphics[scale=0.26]{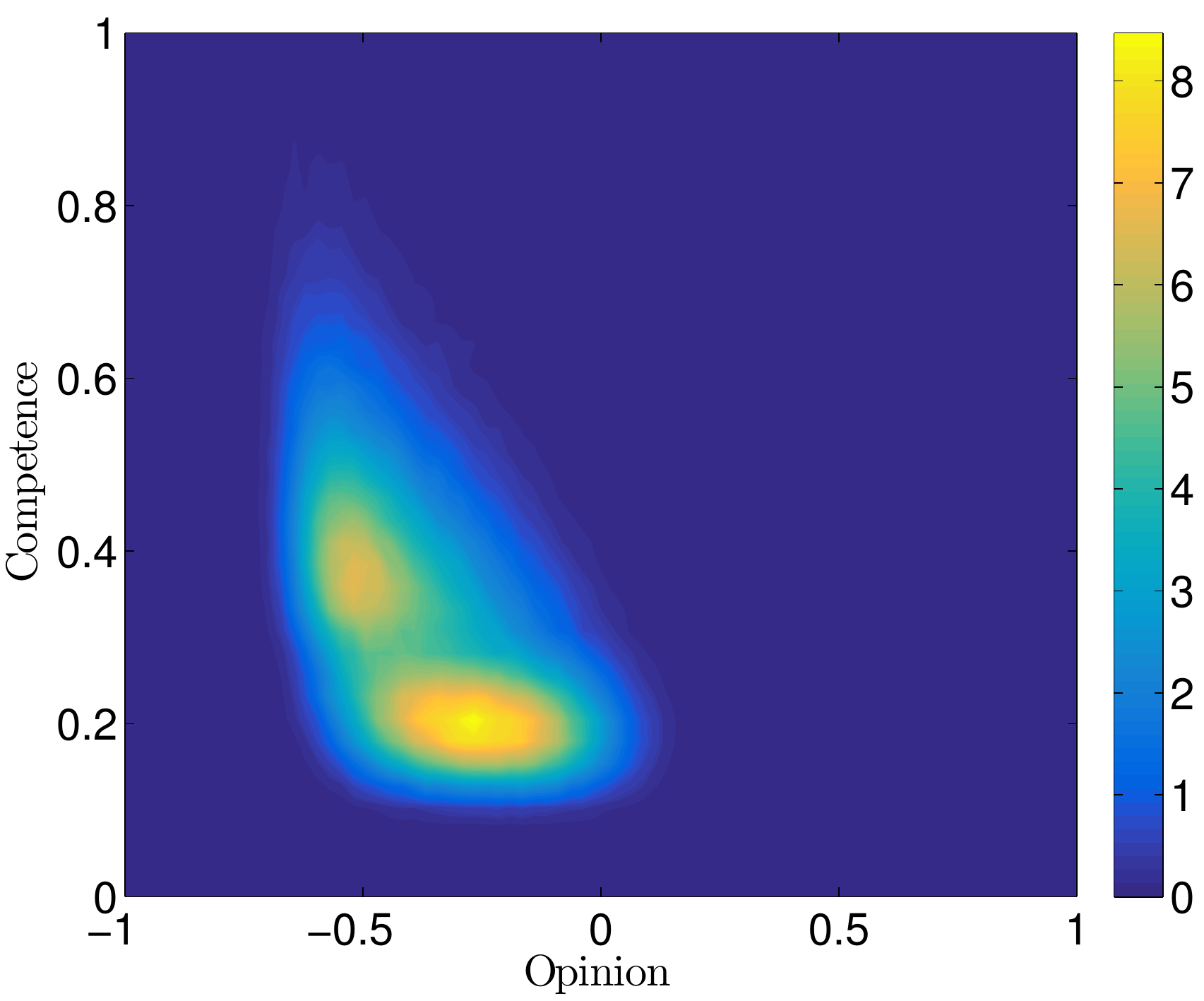}}
\subfigure[aCF t=10]{
\includegraphics[scale=0.26]{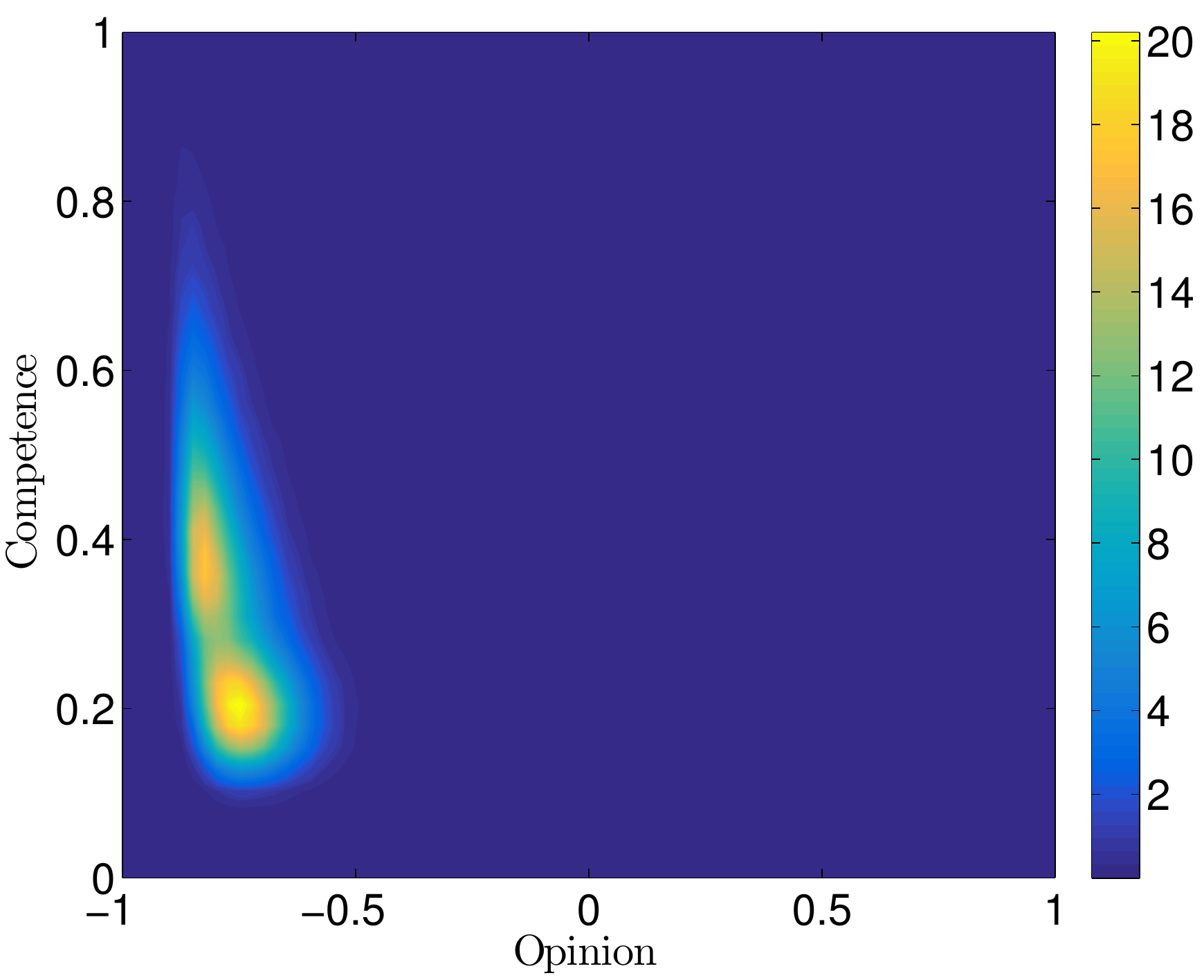}} \\
\subfigure[cMC t=1]{
\includegraphics[scale=0.26]{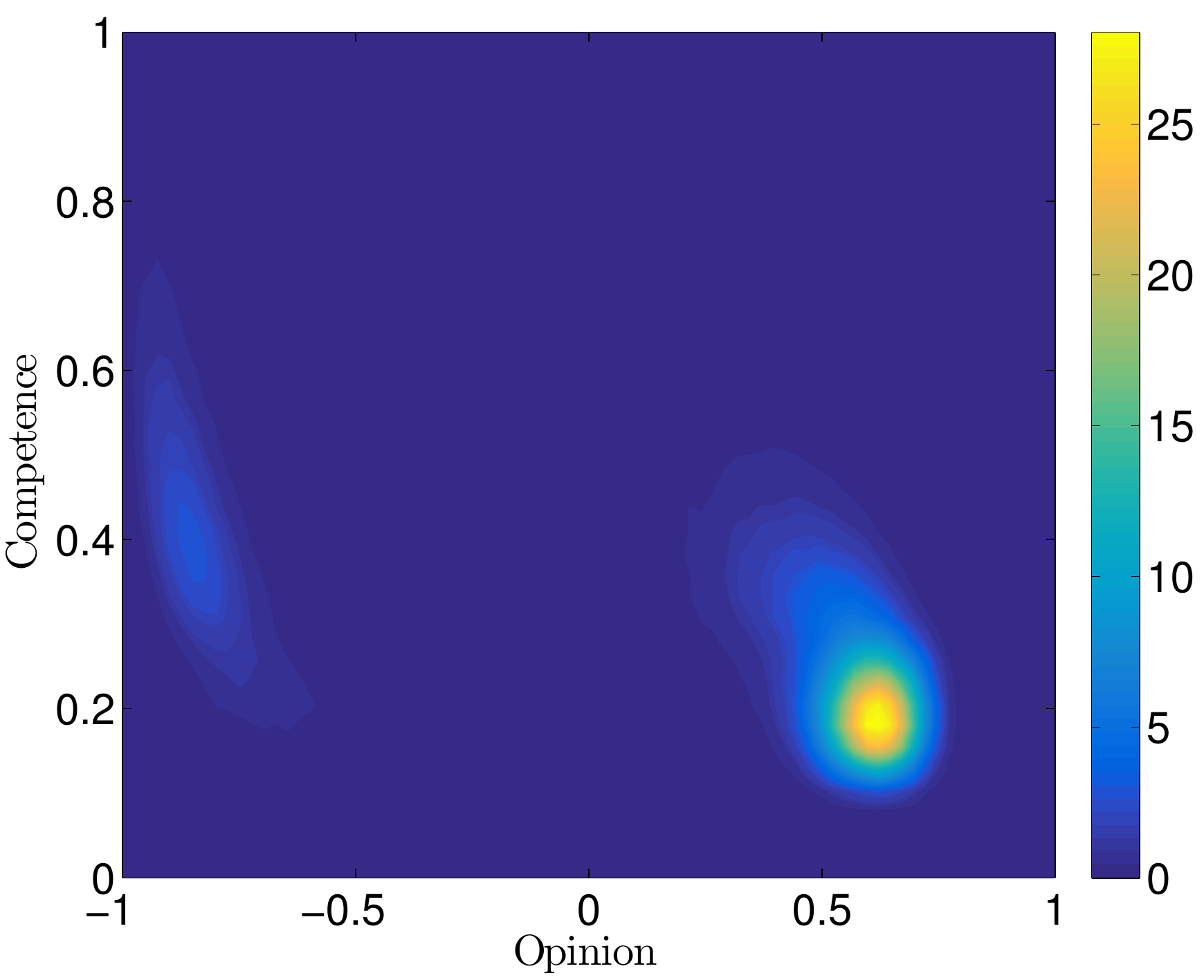}}
\subfigure[cMC t=5]{
\includegraphics[scale=0.26]{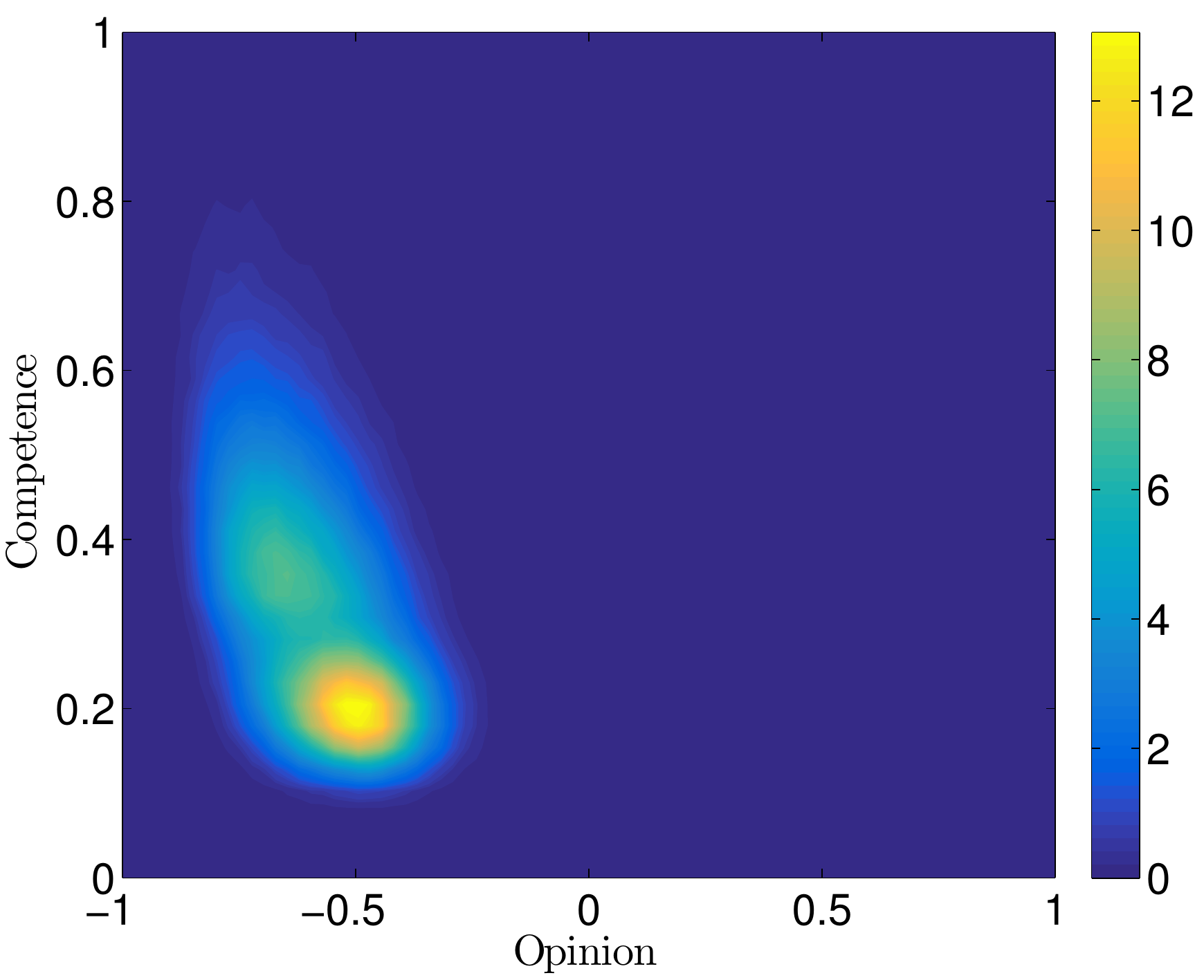}}
\subfigure[cMC t=10 ]{
\includegraphics[scale=0.26]{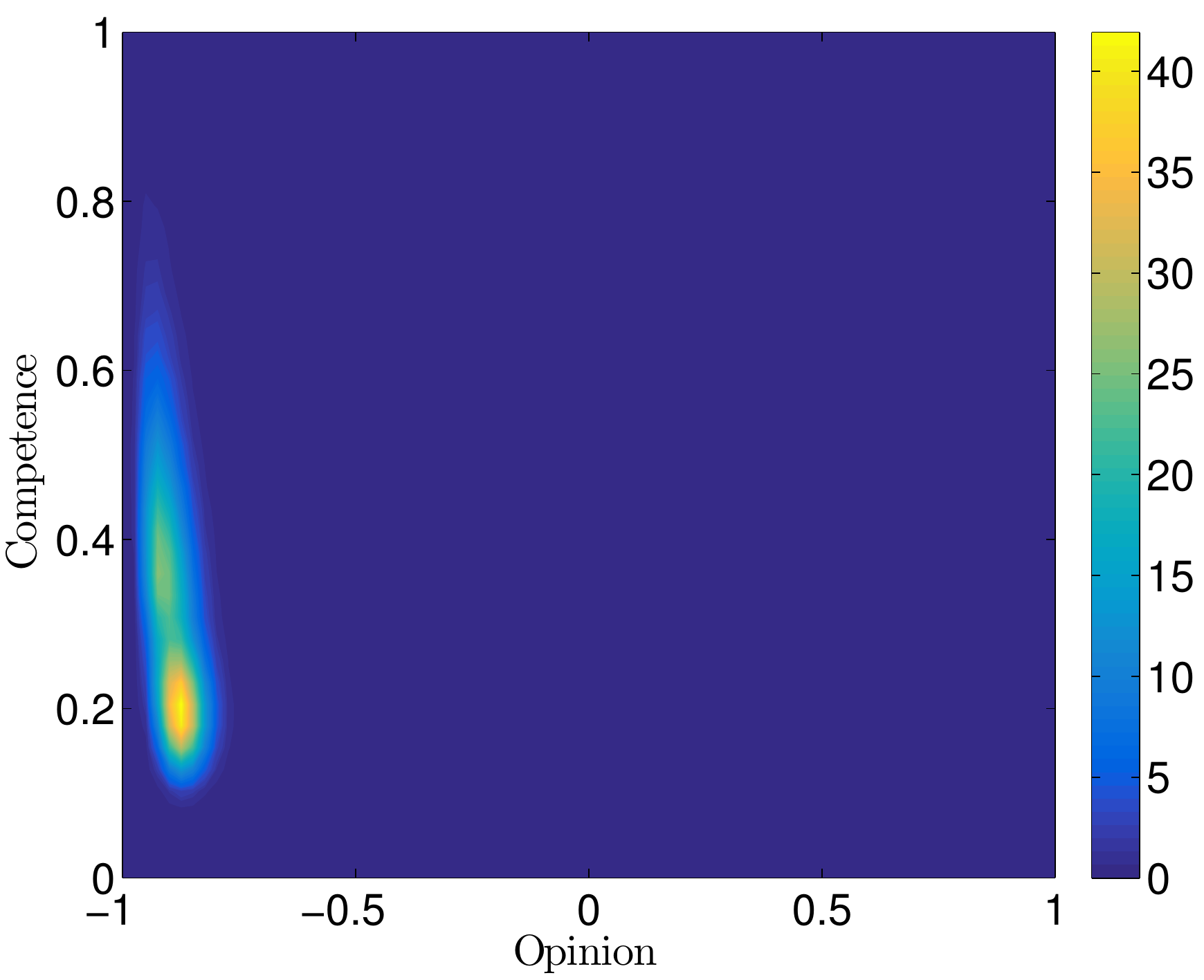}} \\
\subfigure[EB1 t=1]{
\includegraphics[scale=0.26]{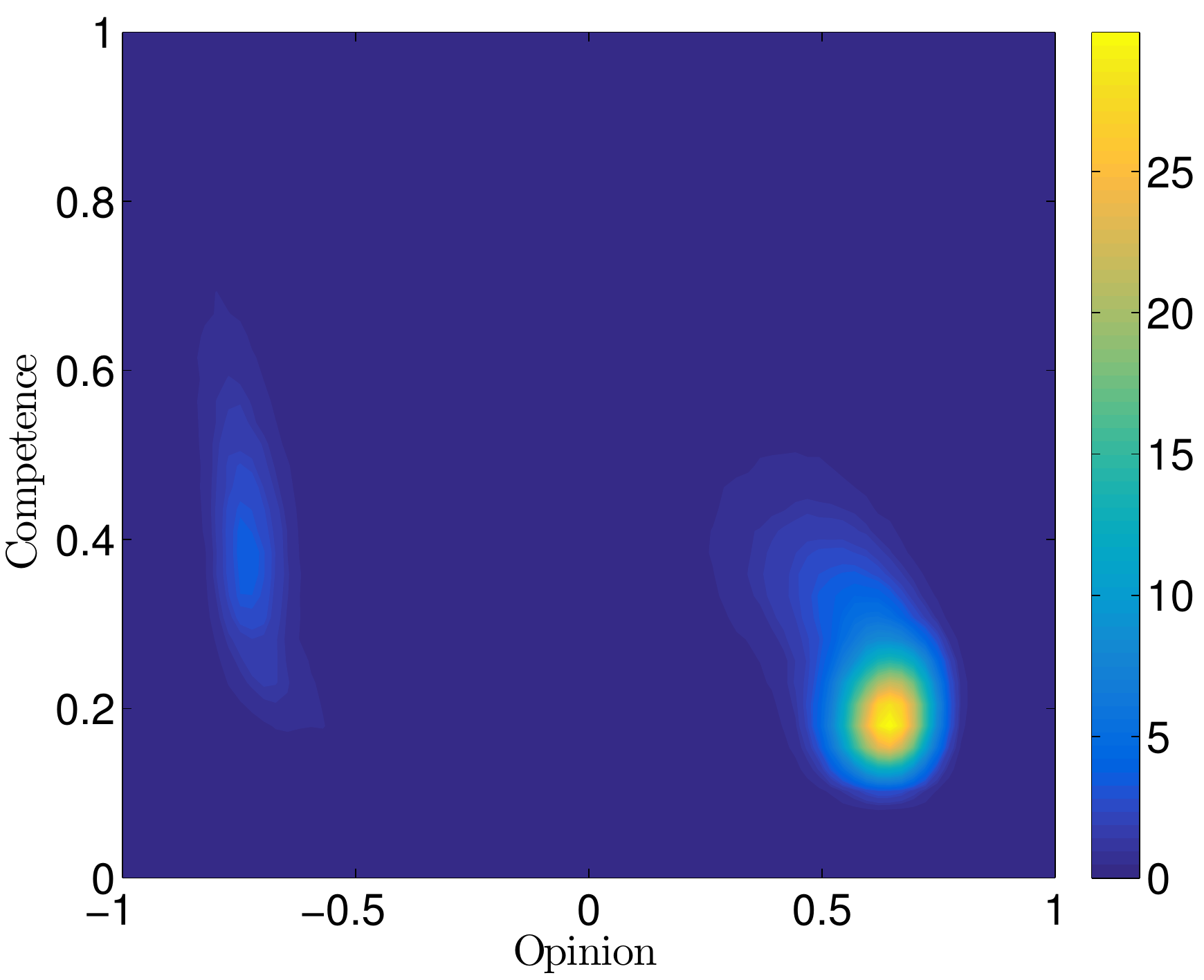}}
\subfigure[EB1 t=5]{
\includegraphics[scale=0.26]{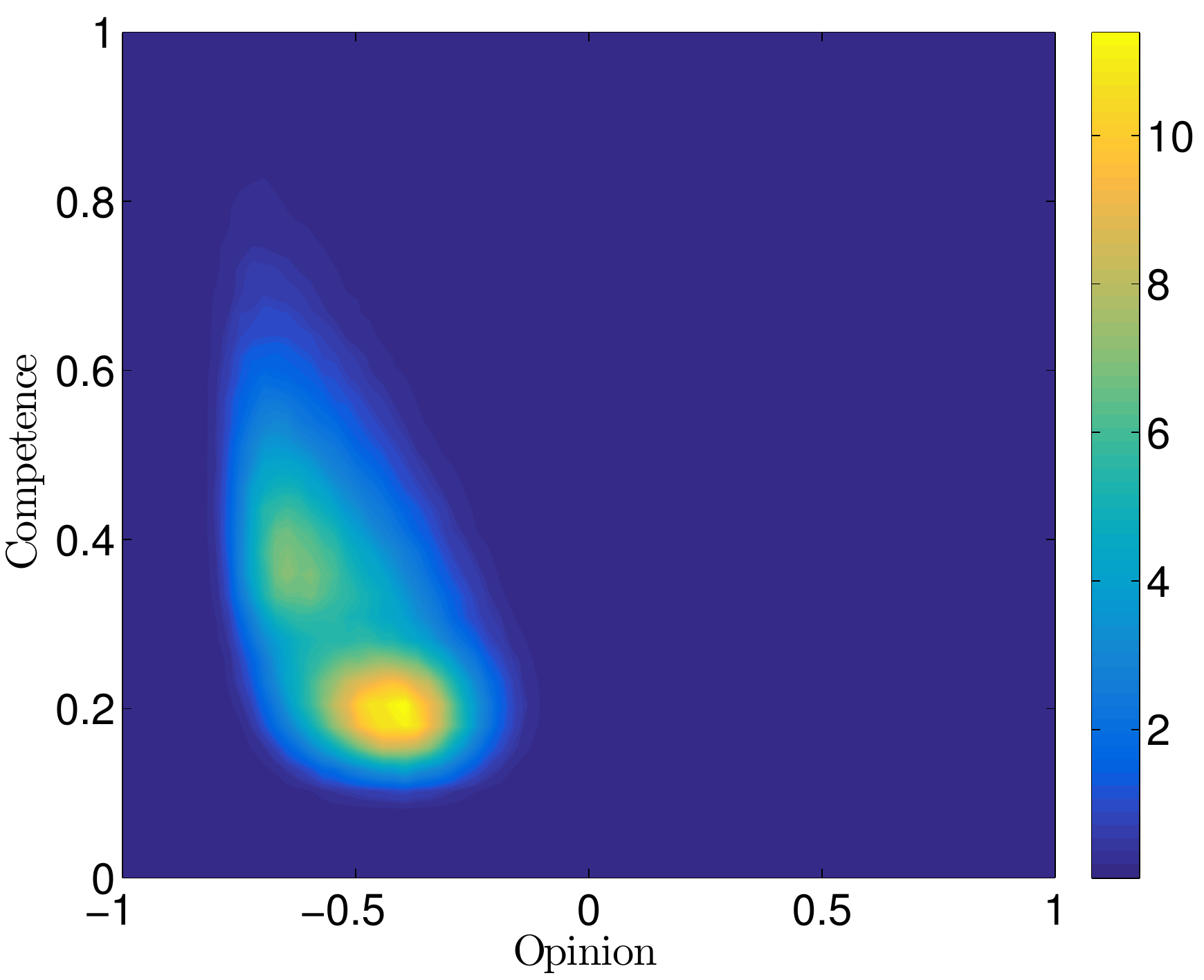}}
\subfigure[EB1 t=10]{
\includegraphics[scale=0.26]{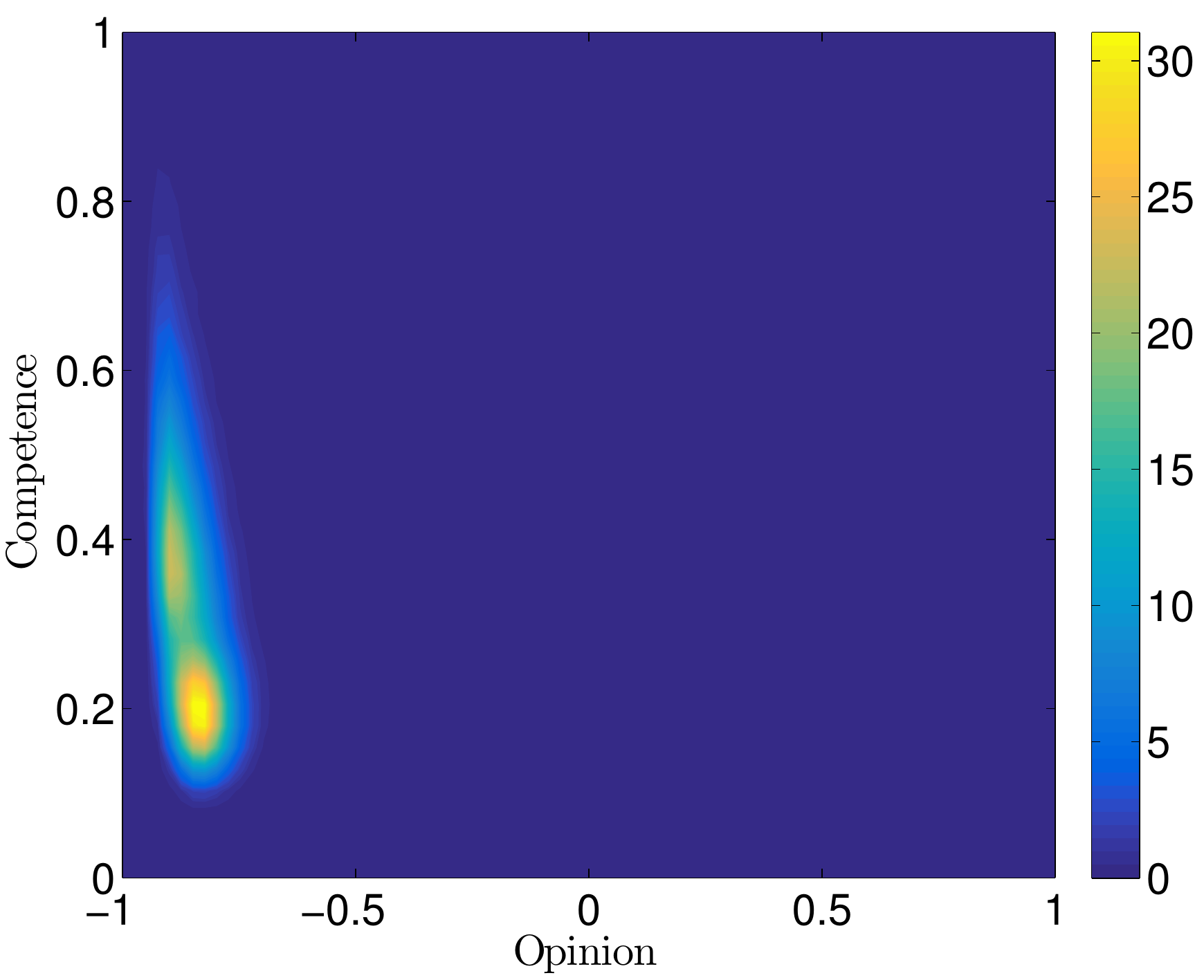}}\\
\subfigure[EB2 t=1]{
\includegraphics[scale=0.26]{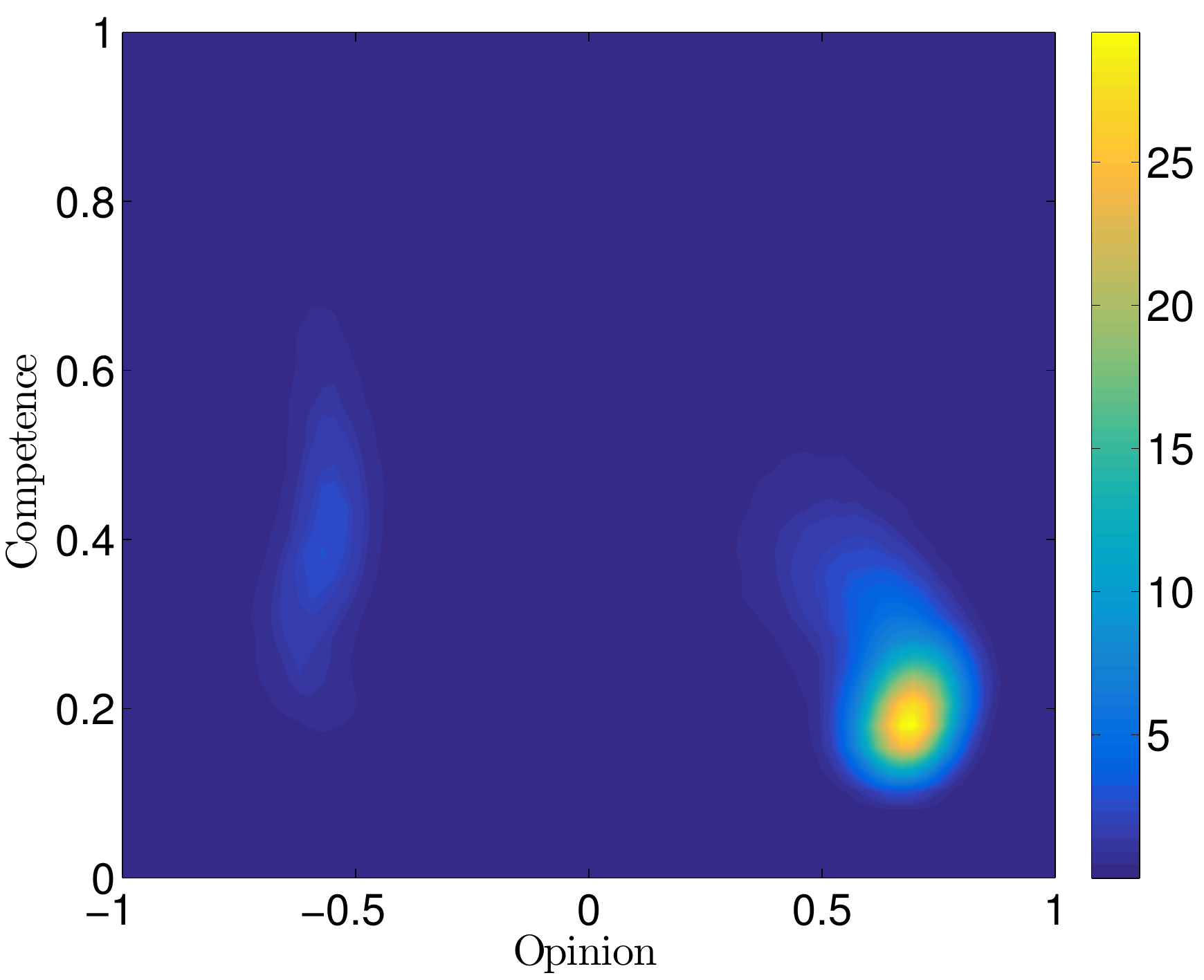}}
\subfigure[EB2 t=5]{
\includegraphics[scale=0.26]{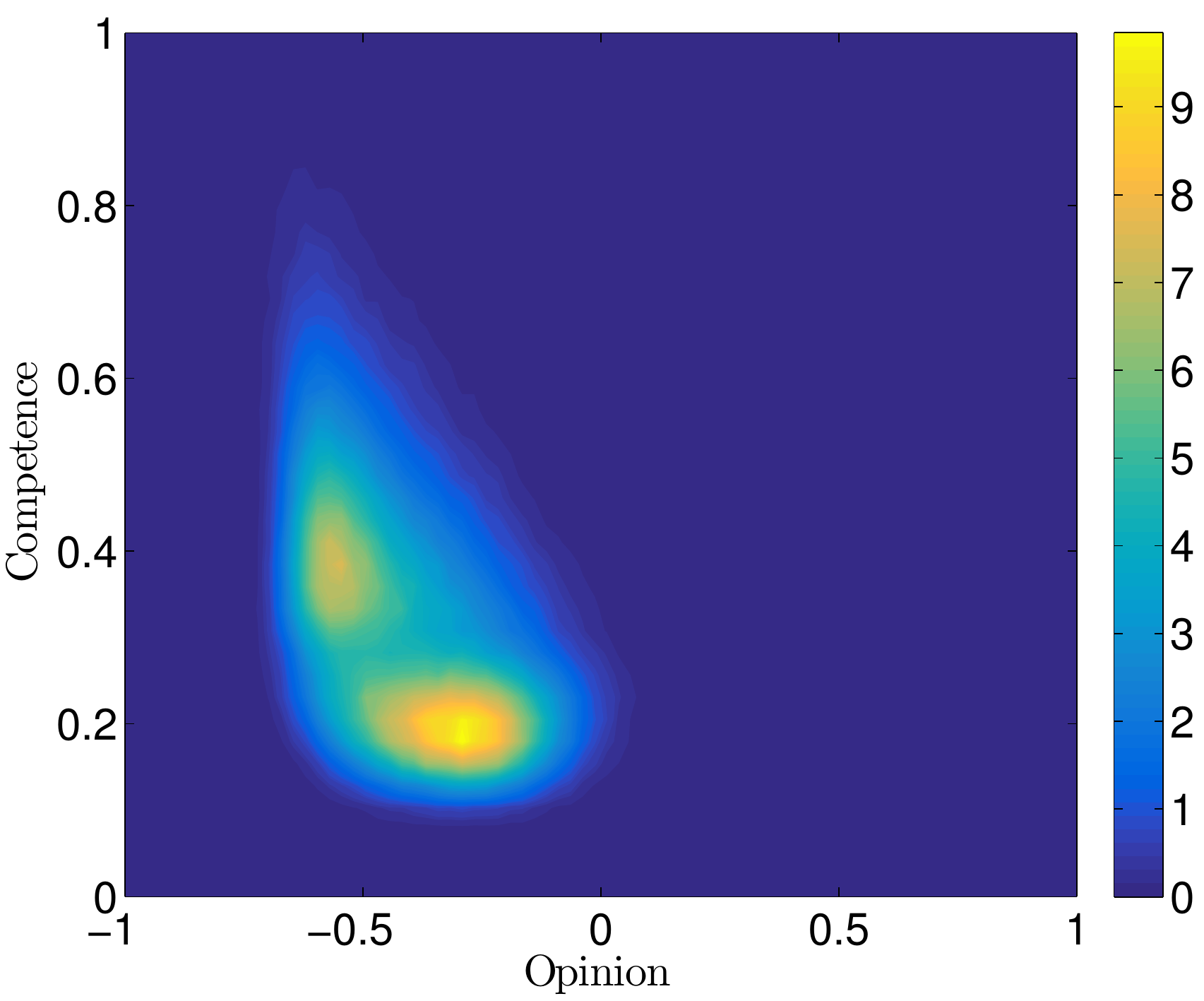}}
\subfigure[EB2 t=10]{
\includegraphics[scale=0.26]{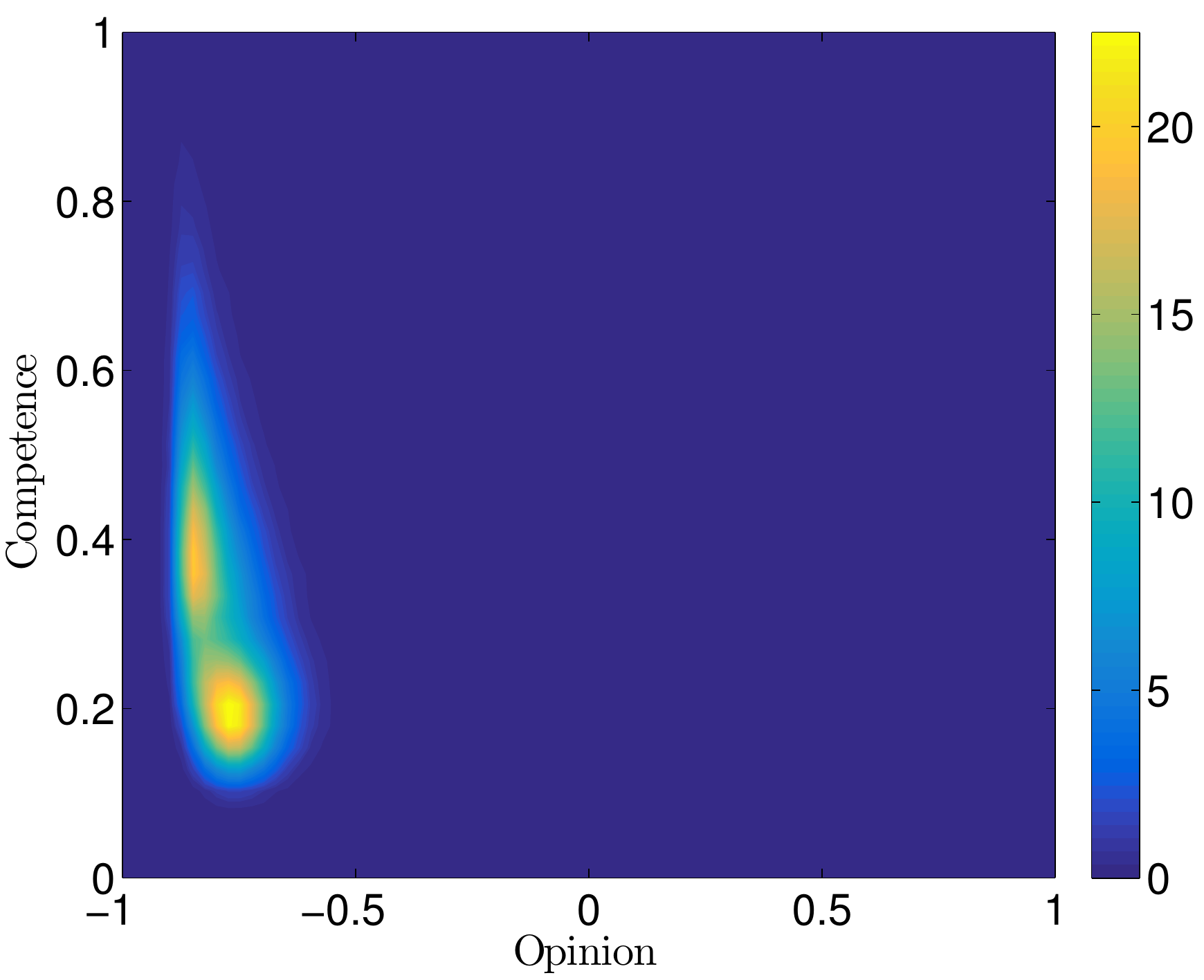}}\\
\caption{\textbf{Test 2}. Kinetic solution for the opinion dynamics with forcing term with rate \eqref{eq:S_drift} at different time steps. The evolution of the competence variable is given by $\lambda_B=10^{-2}$, $\lambda_C(x)=\lambda_C\chi(x\ge x_d)$ where $\lambda_C=10^{-2}$ and $x_d=0.3$, $\lambda=\lambda_B+\lambda_C$, and $\sigma_{\kappa}^2=10^{-2}$. We considered $z\sim U([0,1])$. We present the behavior of the reference models aCF-cMC and EB1-EB2 under the action of the equality bias functions $\Phi_1(x),\Phi_2(x)$ for three time steps. }
\label{fig:test2}
\end{figure}

\begin{figure}
\centering
\includegraphics[scale=0.4]{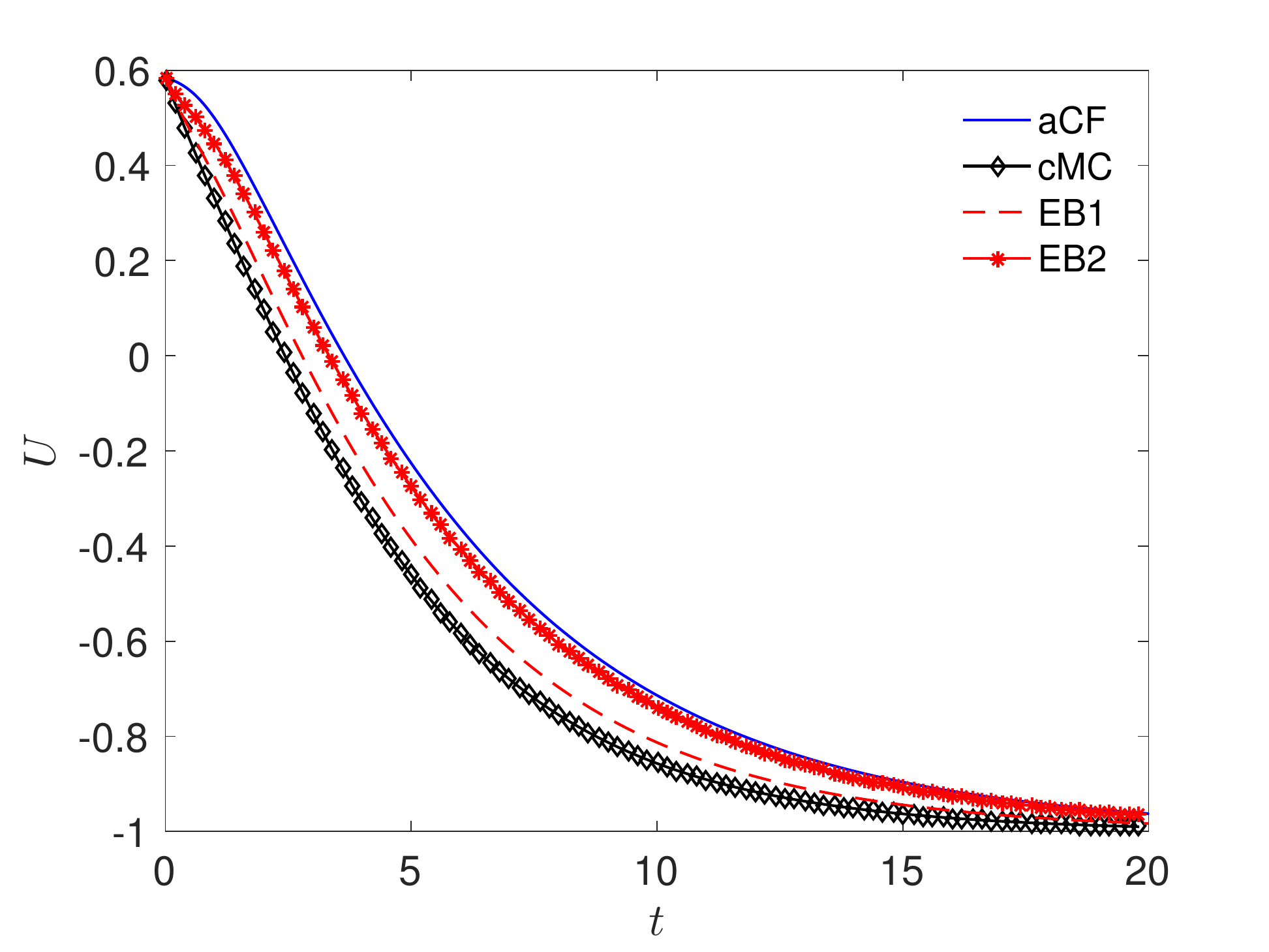}
\caption{\textbf{Test 2}. Convergence of the mean deicision in the reference models aCF-cMC and EB1-EB2 under the action of the forcing term with rate \eqref{eq:S_drift} and $x_d=0.3$.  The equality bias function $\Phi_1(x)$, which has been sketched in Figure \ref{fig:initial}, slows down the convergence speed of the collective decision toward $w_d=-1$. }
\label{fig:mean_selfprop}
\end{figure}

\subsection{Test 3: bounded confidence case}
In the last test case we consider the case of bounded confidence based interactions \cite{AHP,APZc,BL}. Often we notice how more well-educated, more capable and more competent people are also those best disposed to dialogue. Then competence is generally associated to the predisposition to listen other people. The higher this quality, greater is the ability to value other opinions. Vice versa, a person unwilling to listen and dialogue is usually marked by a lower level of the described trait. Therefore, it is natural to consider a bounded confidence model where the threshold on the exchanges of opinions depends on the degree of competence.

In particular, we consider a compromise function $P(x,x_*;w,w_*)$ with the following form
\be\label{eq:BC_interaction}
P(x,x_*;w,w_*)=\chi(|w-w_*|\le \gamma \Delta(x,x_*)),
\ee
where $\Delta(x,x_*)$ is a competence-dependent function that ranges in the closed interval $[0,1]$ taking into account the maximum distance under which the interactions are allowed, and $\gamma>0$ is a constant value.  
A possible choice of for the function $\Delta(\cdot,\cdot)$ is
\be\label{eq:Delta_BC}
\Delta(x,x_*) = R_{cMC}(x,x_*),
\ee
where the function $R(\cdot)$ has been defined in \eqref{eq:R}. We observe how for the choice $\gamma=2$ and $\Delta(x,x_*)=R_{MC}$ the bounded confidence interaction function reproduces the behavior of a maximum competence model.

 In the biased case the bounded confidence model becomes
\be\label{eq:BC_biased}
P(x,x_*;w,w_*)=\chi(|w-w_*|\le \gamma\Delta(\Phi(x),\Phi(x_*))).
\ee
 We perform the numerical test in the case of absence of driving force, i.e. $S(\cdot)=0$. In Figure \ref{fig:BC_initial_marginal} we report the initial configuration of the multi-agent system for the test and the asymptotic density function of the opinion variable taking into account the bounded confidence interaction function \eqref{eq:BC_interaction} and its biased version \eqref{eq:BC_biased}. We chose the parameters $\lambda_C=\lambda_B=10^{-3}$, $\lambda=\lambda_B+\lambda_C$ and $\sigma_{\kappa}=10^{-4}$ for the evolution of the competence variable, the function $\Delta(x-x_*)$ is \eqref{eq:Delta_BC} with $c=10^2$ and $\gamma=1/2$.

In Figure \ref{fig:BC_collective} it is possible to observe how the evolution of opinion and competence deeply changes under the effect of an equality bias function. In particular we considered the equality bias function $\Phi_2(x),x\in X$ with the choice of parameter introduced in Figure \ref{fig:initial}. In Figure \ref{fig:BC_initial_marginal} (right plot) we report the asymptotic marginal density for the opinion variable. The system evolves towards two clusters, characterizing two subpopulations with different decisions driven by the most competent agents. Finally, it is possible to observe how the equality bias drives the system toward a suboptimal collective decision for both populations where the influence of less competent agents become more relevant.

\begin{figure}
\centering
\includegraphics[scale=0.43]{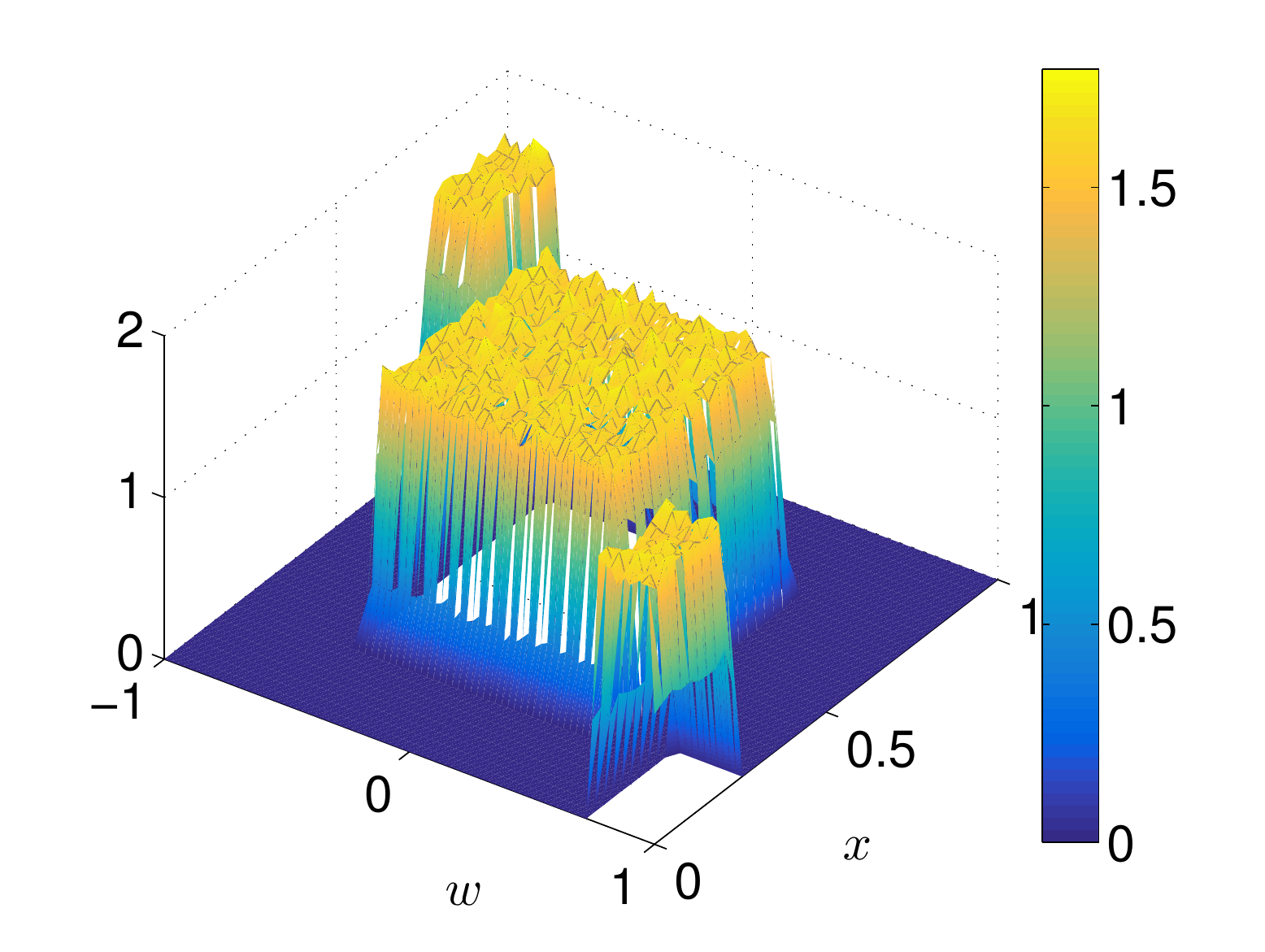}
\includegraphics[scale=0.34]{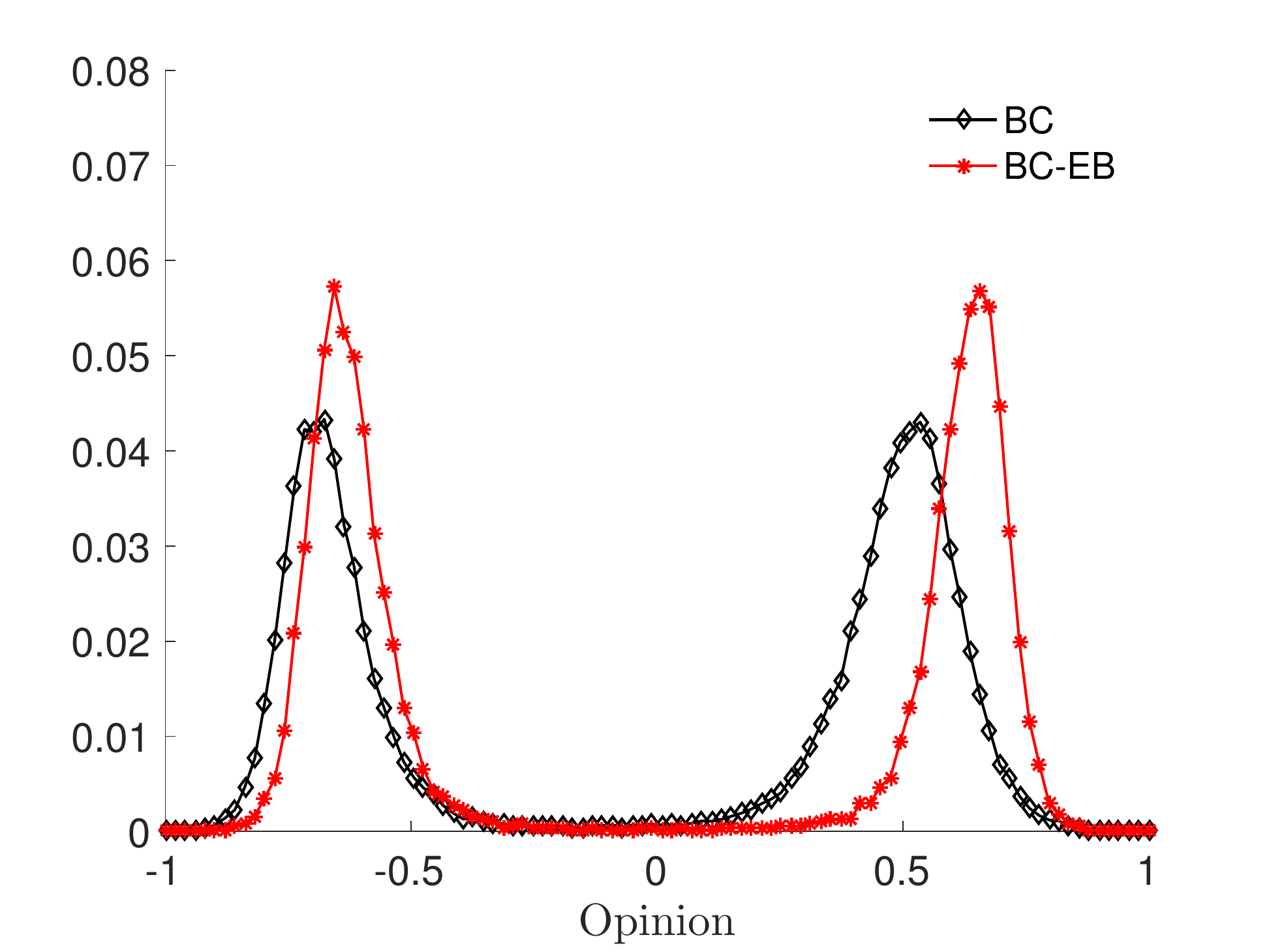}
\caption{\textbf{Test 3}. Left: initial configuration of the multi-agent system, we considered the case with three population. Right: asymptotic marginal density for the opinion variable, we observe how the bounded confidence interaction function introduced in \eqref{eq:BC_interaction}, $\gamma = 1/2$, splits the system in two populations for which the equality bias emerges separately as an action of the equality bias function $\Phi_2(x)$. }
\label{fig:BC_initial_marginal}
\end{figure}

\begin{figure}
\centering
\subfigure[(BC) t=10]{
\includegraphics[scale=0.26]{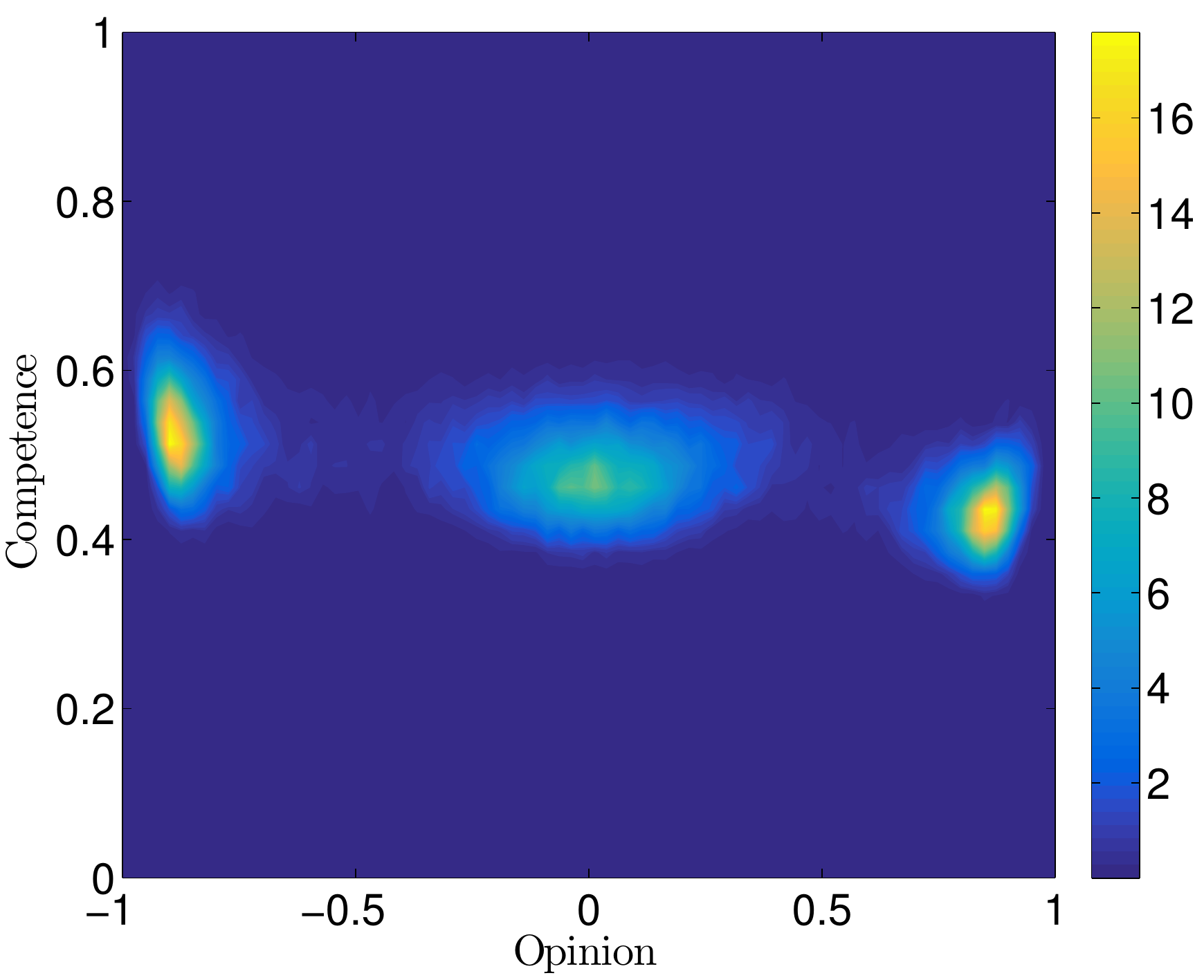}}
\subfigure[(BC) t=30]{
\includegraphics[scale=0.26]{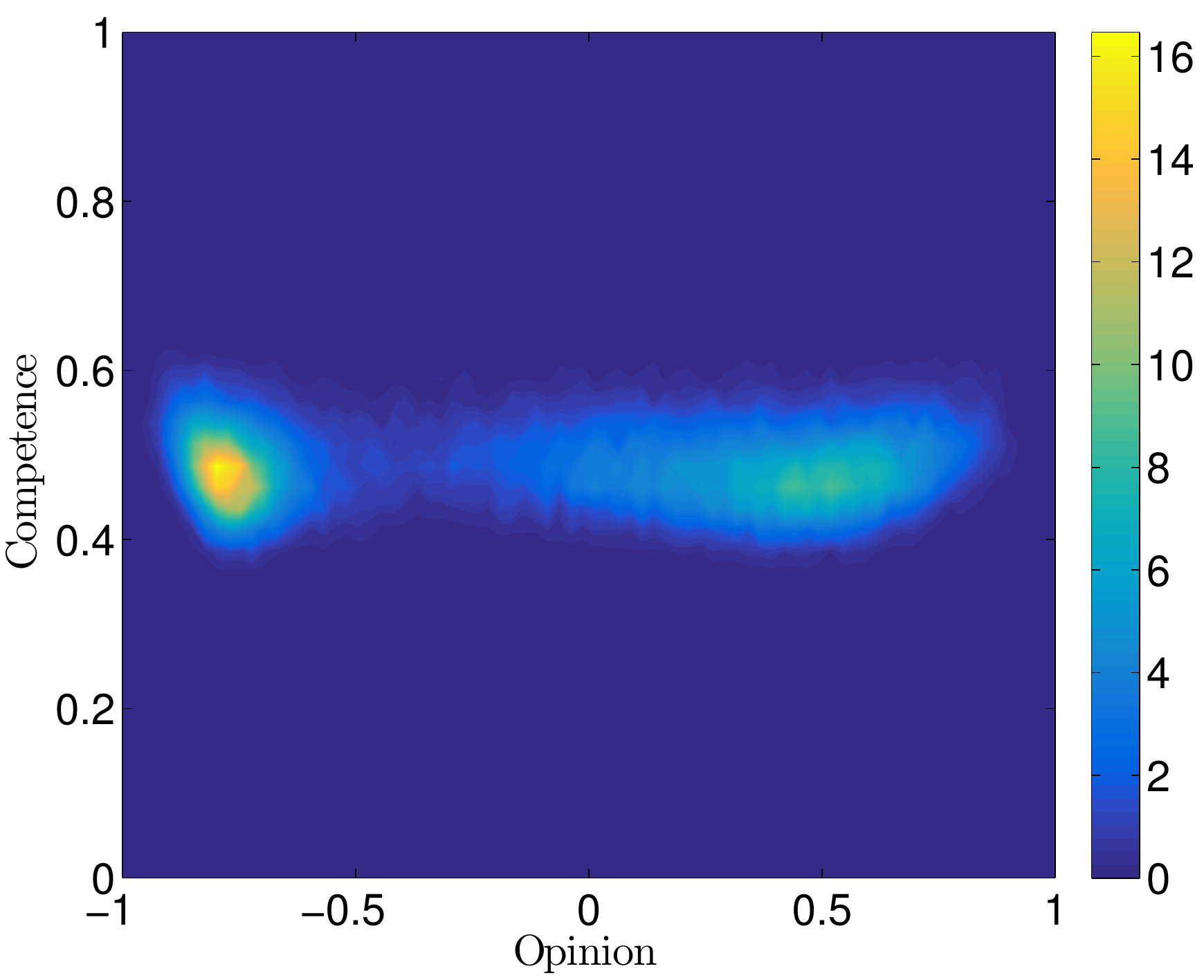}}
\subfigure[(BC) t=100]{
\includegraphics[scale=0.26]{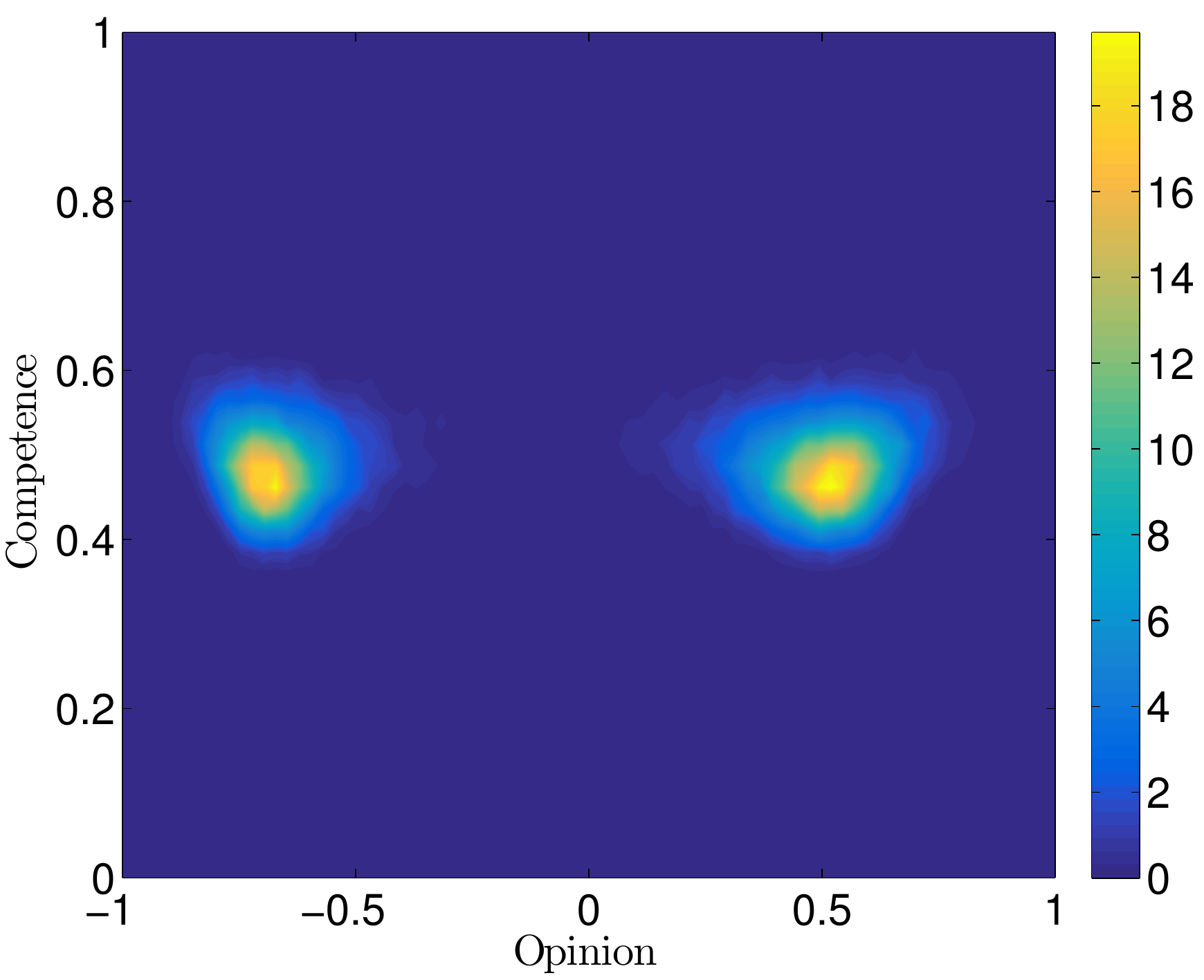}}\\
\subfigure[(BC-EB) t=10]{
\includegraphics[scale=0.26]{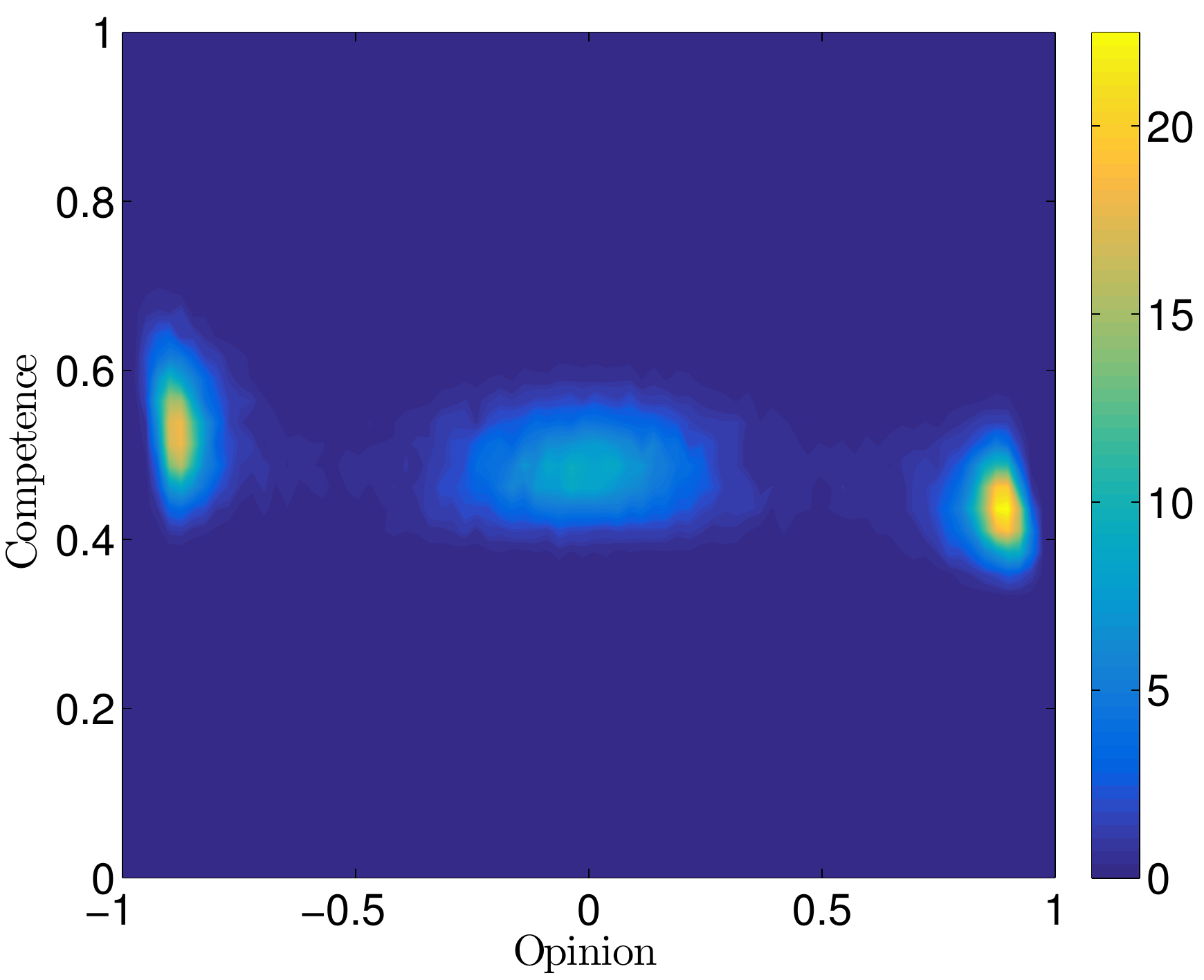}}
\subfigure[(BC-EB) t=30]{
\includegraphics[scale=0.26]{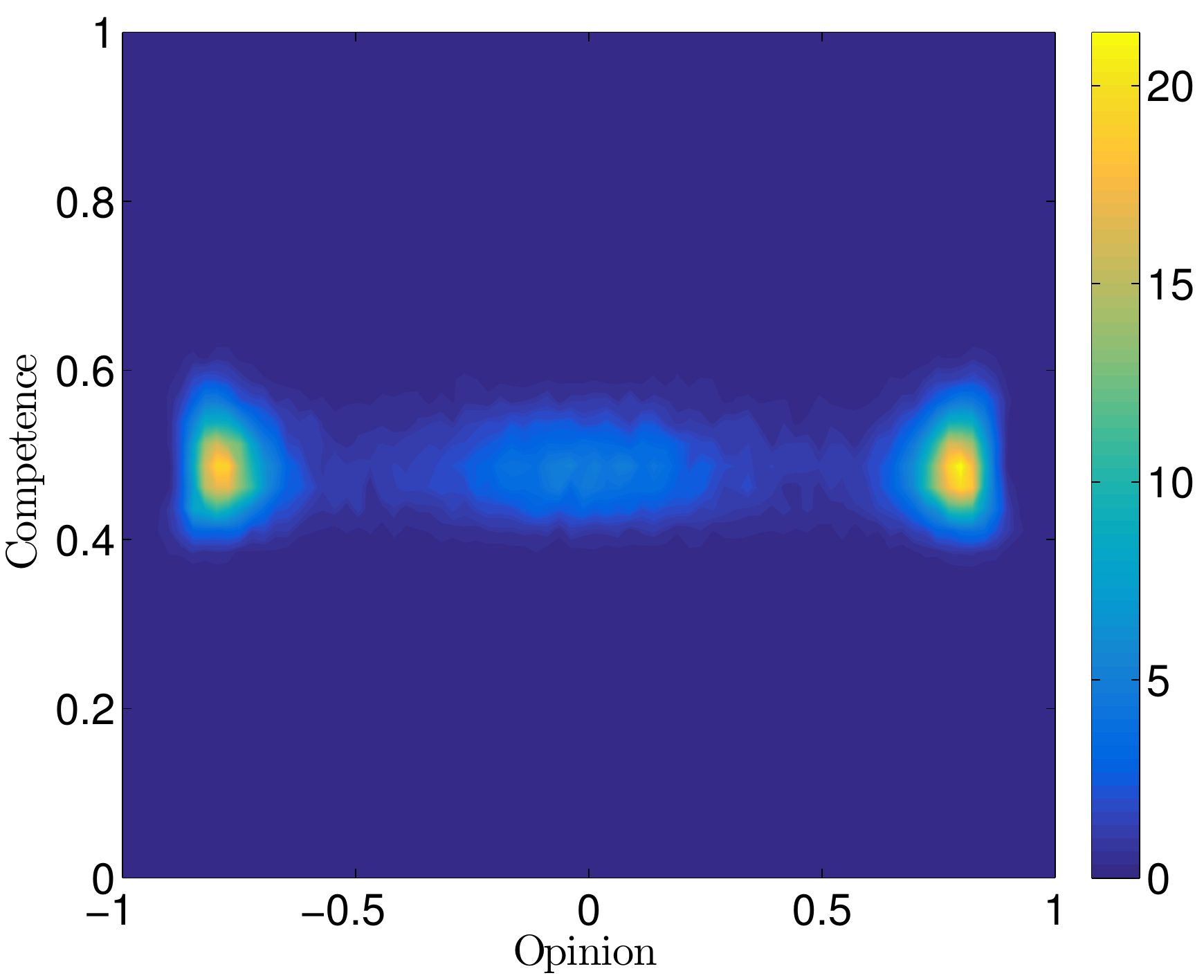}}
\subfigure[(BC-EB) t=100]{
\includegraphics[scale=0.26]{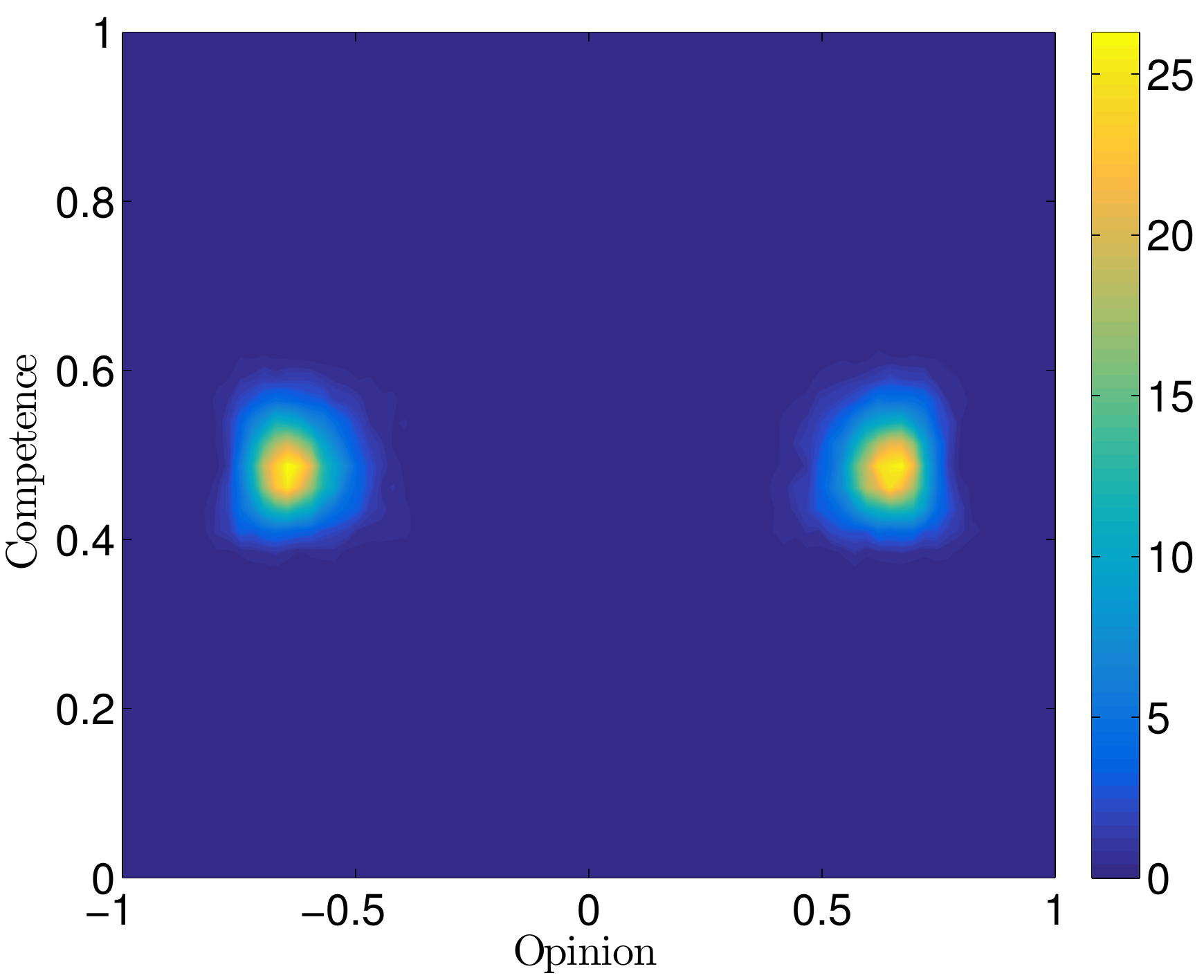}}
\caption{\textbf{Test 3}. Evolution of the multi-agent-system in the bounded confidence case (BC), in the first row, and under the action of the equality bias function $\Phi_2(x)$ (BC-EB), in the second row.}
\label{fig:BC_collective}
\end{figure}

\section{Conclusion}
We introduced and discuss kinetic models of multi-agent systems describing the process of decision making. The models are obtained in the limit of a large number of agents and weight the opinion of each agent through its competence. The binary interaction dynamics involves both the agents' opinion and competence, so that less competent agents can learn during interactions from the more competent ones. 
This lead to an optimal decision process where the results is a direct consequence of the agents' competence. The introduction of an equality bias in the model is obtained by considering a suitable function with a shape analogous to the one experimentally found in \cite{KD}. Numerical results show that the presence of an equality bias leads the group to suboptimal decisions and in some cases to the emergence of the opinion of the less competent agents in the group.

\end{document}